\newcommand{\Z}{\mathbb{Z}}
\newcommand{\E}{\mathbb{E}}
\theoremstyle{definition}
\newtheorem{definition}{Definition}
\newtheorem{theorem}{Theorem}
\newtheorem{conjecture}{Conjecture}
\newtheorem{lemma}{Lemma}
\newtheorem{corollary}[theorem]{Corollary}
\begin{document}

\title{Stoquasticity is not enough: towards a sharper diagnostic for Quantum Monte Carlo simulability}
\author{Arman Babakhani}
\affiliation{Information Sciences Institute, University of Southern California, Marina del Rey, California 90292, USA}
\affiliation{Department of Physics and Astronomy and Center for Quantum Information Science \& Technology, University of Southern California, Los Angeles, California 90089, USA}
\author{Armen Karakashian}
\affiliation{Computer Science and Mathematics, Rutgers University, New Brunswick, New Jersey, USA}

\begin{abstract}
\noindent Quantum Monte Carlo (QMC) methods are powerful tools for simulating quantum many-body systems, yet their applicability is limited by the infamous sign problem. We approach this challenge through the lens of Vanishing Geometric Phases (VGP) \cite{Hen_2021}, introducing it as a `geometric' criterion for diagnosing QMC simulability. We characterize the class of VGP Hamiltonians, and analyze the complexity of recognizing this class, identifying both hard and efficiently identifiable cases.
We further highlight the practical advantage of the VGP criterion by exhibiting specific Hamiltonians that are readily identified as sign-problem-free through VGP, yet whose stoquasticity is difficult to ascertain. These examples underscore the efficiency and sharpness of VGP as a diagnostic tool compared to stoquasticity-based heuristics.
Beyond classification, we propose a family of VGP-inspired diagnostics that serve as quantitative indicators of sign problem severity. While exact evaluation of these quantities is generically intractable, we demonstrate their mathematical power in performing scaling analysis for the average sign under unitary transformations.
Our results provide both a conceptual foundation and practical tools for understanding and mitigating the sign problem.

\end{abstract}
\maketitle
\tableofcontents
% -----------------------------------------------------
% The "enumerate" environment allows for automatic problem numbering.
% To make the number for the next problem, type " \item ". 
% To make sub-problems such as (a), (b), etc., use an "enumerate" within an "enumerate."
% -----------------------------------------------------

\section{Introduction}

Quantum Monte Carlo (QMC) methods are among the most powerful and widely used numerical techniques for studying quantum many-body systems, particularly in lattice models of interacting spins and fermions. These methods allow for the evaluation of thermal and ground-state properties by stochastically sampling configurations according to a weight determined by the particular method of decomposition. However, in many physically relevant scenarios, especially in geometrically frustrated quantum magnets and fermionic systems, the QMC approach encounters a severe limitation: the sign problem. This limitation prevents accurate QMC simulations in the low temperature (high $\beta$) and high particle number (large $N$), rendering the study of ground state and thermodynamic properties for systems with sign problem intractable \cite{Loh_1990,Troyer_2005,Henelius_2000}.

QMC sign problem is typically understood in terms of stoquasticity: Hamiltonians that are stoquastic are sign problem free. However, the inverse statement is not true. Though, there are not many sign-problem free Hamiltonians for which there does not exist a simple one or two-body Clifford rotations to transform them into stoquastic ones: almost all well-known physical Hamiltonians that are sign problem free can be mapped into a stoquastic Hamiltonian via one or two-body Clifford transformations \cite{Bravyi_2007,Marvian_2019}.
 
In spin-$1/2$ systems, which are foundational in quantum magnetism and play a central role in quantum simulation platforms, the sign problem often poses a critical barrier to tractable simulations. This difficulty is especially pronounced in frustrated quantum antiferromagnets, such as the triangular or Kagome lattice Heisenberg models, where competing exchange interactions generate intricate phase interference patterns that manifest as negative or complex-valued weights in standard quantum Monte Carlo (QMC) formulations \cite{Alet_2016}. These sign fluctuations, ubiquitous in various QMC methods, result in severe sampling inefficiencies that grow exponentially with system size or inverse temperature. Consequently, simulations of such models remain limited to relatively small lattices, even on modern supercomputers \cite{Henelius_2000}. Even though some progress has been made in mitigating the severity of the QMC sign in certain cases \cite{DEmidio_2020,Wessel_2017,Sato_2021,Hangleiter_2020,Nakamura_1998}, understanding the origin and structure of the sign problem in these models thus remains an active and fundamental pursuit.

In fermionic models like the Fermi-Hubbard model away from half-filling, the sign problem is generically present and similarly limits simulation capabilities\cite{Pan_2022}. There have been attempts to address the origin of the sign problem in such systems as well, under specific QMC framework such as path-integral, Majorana and determinant QMC methods\cite{Iazzi_2014,Li_2015}. There are various ways to map fermionic systems into spin-$1/2$ systems \cite{Fradkin_1989}, however, the sign problem shows its head irrespective of the specific mapping used to describe the fermionic Hamiltonians.

In this paper, some of our results on the complexity of determining whether a Hamiltonian is in the Vanishing Geometric Phase (VGP) class applies to generic systems satisfying the Permutation Matrix Representation (PMR) formalism. However, all of our specific Hamiltonian examples will focus on spin-$1/2$ systems. As mentioned above, since there are specific mappings from fermionic systems to spin-$1/2$ systems, like Jordan-Wigner transformations, our analysis and results could be pertinent to QMC sign problem for fermionic systems.

\section{outline}

In \cite{Hen_2021}, the VGP framework was introduced, offering a broader lens than the conventional notion of stoquasticity. Traditionally, much of the research on mitigating the sign problem has centered on finding unitary or Clifford transformations that render a Hamiltonian stoquastic: an approach that can be both computationally intensive and inconclusive. While the VGP criterion does not directly provide curing transformations for sign-problematic Hamiltonians, it offers a critical advantage: it can aid in identifying the type of transformation that would make no impact on curing the sign problem, thereby avoiding fruitless searches over unitary spaces where the sign problem is irreducible.

Furthermore, in this work, we argue that VGP not only enriches our conceptual understanding of QMC simulability but, in certain instances, provides a more practical and computationally efficient diagnostic than checking for stoquasticity. We also demonstrate how the simulability of a Hamiltonian can often be deduced by analyzing the \emph{weights of fundamental cycles} in its computational state graph: a graph-theoretic criterion that, in many cases, is significantly easier to verify than searching for a basis in which the Hamiltonian becomes stoquastic.

In \cite{Hen_2021}, it was shown that any Hamiltonian satisfying the VGP criterion can be mapped to a stoquastic Hamiltonian via a purely diagonal phase rotation
\begin{align}
    H_{\text{VGP}} = \Phi H_{\text{stoq.}} \Phi^{\dagger} \, ,
    \label{eq:VGP_to_stoq}
\end{align}
where $\Phi = \text{diag}(e^{i\phi_1}, e^{i\phi_2}, \ldots, e^{i\phi_{\mathds{D}}})$ and $\mathds{D}$ denotes the Hilbert space dimension. While this structural connection was established, its practical implications, particularly in relation to computational complexity and practical uses, have remained underexplored.

In this work, we aim to bridge that gap. We begin by reviewing the Permutation Matrix Representation (PMR) and that of PMRQMC in sections~\ref{sec:pmr} and ~\ref{sec:pmrqmc_rev}, respectively. We also introduce the use of fundamental cycle weights from the computational state graph as a diagnostic tool for QMC simulability in section~\ref{sec:compstate_fundcyc}.

In section~\ref{sec:measures}, we introduce concrete metrics for determining when a Hamiltonian satisfies the VGP criterion. We also formalize several results concerning the complexity of determining whether a given Hamiltonian is VGP, particularly for physically relevant models. We further identify a class of unitary transformations that leave these metrics invariant, thereby establishing that certain transformations—such as those belonging to the Pauli group—cannot resolve the sign problem when present.

In section~\ref{sec:VGPscaling_signseverity}, we address how our metric introduced in section~\ref{sec:measures} scale as a function of system size $N$ and inverse temperature $\beta$. Using these scaling arguments, we show that there could be systems for even though the sign problem is present, the average sign of a QMC simulation remains constant as a function of system size and do not plague a QMC calculation for large $N$ calculations. However, we prove that no such behavior is possible as a function of $\beta$. Even though such findings have been addressed in previous work \cite{Mondaini_2023,Iglovikov_2015}, we present a rigorous analysis through the VGP criterion that highlights the distinction between scaling of average sign with $N$ and $\beta$.

In section~\ref{sec:VGPorStoq}, we characterize a class of sparse Hamiltonians that are readily identifiable as VGP. We also present explicit models within this class for which finding a unitary transformation that renders the Hamiltonian stoquastic is demonstrably difficult, highlighting the practical utility of the VGP framework in such settings.

Finally, in section~\ref{sec:avg_sign_stats}, we examine the mathematical properties of the VGP measures introduced in section~\ref{sec:measures}, and provide a rigorous analysis demonstrating the futility of attempting sign problem mitigation through random searches over unitary or Clifford spaces. We prove that random unitary or Clifford transformations are overwhelmingly likely to exacerbate the sign problem, even when applied to Hamiltonians that were originally sign-problem-free. This reinforces a deeper complexity-theoretic intuition: blindly searching for curing transformations is unlikely to succeed, as the general problem of determining the ground state energy of a local Hamiltonian is QMA-hard. While these insights have been widely appreciated on an intuitive level within the QMC community, to the best of our knowledge, no prior work has offered a formal mathematical treatment of these phenomena.

\section{Permutation Matrix Representation}
\label{sec:pmr}
We consider an arbitrary many-body system whose Hamiltonian $H$ is given as input. We first cast $H$ in permutation matrix representation (PMR), i.e., as the sum
\begin{align}
    \label{eq:PMR}
    H =\sum_{j=0}^M D_j P_j =D_0 + \sum_{j=1}^M D_j P_j \,,
\end{align} 
where $\{ P_j\}$ is a set of $M+1$ distinct generalized permutation matrices~\cite{Gupta_2020}. Without loss of generality, the off-diagonal terms of the Hamiltonian can be written as linear combination of $D_j P_j$ where $D_j$ is a diagonal matrix and $P_j$ is a  permutation matrix with no fixed points (equivalently, no nonzero diagonal elements). The diagonal term, on the other hand can be written as $D_0 P_0$, where $P_0$ is the identity matrix, the only (trivial) permutation that has a fixed point. Of course, the choice of the diagonal and permutation matrices are basis dependent. In this work, we will refer to the basis in which the operators $\{D_j\}$ are diagonal as the computational basis and denote its states by $\{ |z\rangle \}$.

The $\{D_j P_j \}$ off-diagonal operators (in the computational basis) give the system its  `quantum dimension'.  Each term $D_j P_j $ obeys 
$D_j P_j \ket{z} = d_{z'}^{(j)} | z' \rangle$ where $d_{z'}^{(j)}$ is a possibly complex-valued coefficient and $|z'\rangle \neq |z\rangle$ is a basis state\footnote{While the above formulation may appear restrictive, it can be shown that any finite-dimensional matrix can be written in the form of Eq.~\eqref{eq:PMR} easily and efficiently~\cite{Ezzell_2025}}. 

We also note the matrix elements of the diagonal operators as follows
\begin{equation}
d_{z_j}^{(i_j)} = \langle z_j \vert D_{i_j} \vert z_j \rangle \,. 
\label{eq:dz}
\end{equation}
The various $\{|z_i\rangle\}$ states are the states obtained from the action of the ordered $P_j$ operators in the product $S_{{\bf{i}}_q}$ on $|z_0\rangle$, then on $|z_1\rangle$, and so forth. For example, for $S_{{\bf{i}}_q}=P_{i_q} \ldots P_{i_2}P_{i_1}$, we obtain $|z_0\rangle=|z\rangle, P_{i_1}|z_0\rangle=|z_1\rangle, P_{i_2}|z_1\rangle=|z_2\rangle$, etc. The proper indexing of the states $|z_j\rangle$ along the path is \hbox{$|z_{(i_1,i_2,\ldots,i_j)}\rangle$} to indicate that the state at the $j$-th step depends on all $P_{i_1}\ldots P_{i_j}$. The sequence of basis states $\{|z_i\rangle \}$ may be viewed as a closed `walk' on the computational state graph of $H$, where every matrix element $H_{ij}$ corresponds to the weight of an edge between the two basis states node $i$ and node $j$. We will rigorously describe this in section \ref{sec:compstate_fundcyc}.
The PMR formalism, thus, provides a firm ground to describe the expectation values of quantum operators, as well as, statistical quantities like the partition function as the action of diagonal and permutation matrices on the computational basis states.

\section{Computational state graphs and fundamental cycles}
\label{sec:compstate_fundcyc}
\subsection{Computational state graphs}
\begin{definition}
Let $H$ be a Hamiltonian acting on a finite-dimensional Hilbert space with a fixed computational basis $\{ \ket{z} \}$. The \emph{computational state graph} associated with $H$, denoted $G_H = (V, E)$, is a directed graph defined as follows:

\begin{enumerate}
    \item The vertex set $V$ consists of the computational basis states, i.e., $V = \{ \ket{z} \}$.
    \item The edge set $E$ contains an edge between $\ket{z}$ and $\ket{z'}$ if and only if the corresponding matrix element $H_{zz'} = \bra{z} H \ket{z'}$ is nonzero and $z \neq z'$. This implies that each edge weight $e_{z z'}$ is the complex conjugate of $e_{z' z}$, ensuring Hermiticity of the Hamiltonian.
\end{enumerate}

That is, $G_H$ encodes the off-diagonal structure of $H$ in the computational basis, with edges representing nonzero transition amplitudes between basis states.
\end{definition}

Using the PMR decomposition, given by Eq. (\ref{eq:PMR}), and Eq. (\ref{eq:dz}), we can infer that the edge weight $e_{z \, z_i}$ corresponds to a matrix element $d^{(i)}_{z_i} = \bra{z_i} D_i P_i \ket{z}$, such that $P_i \ket{z} = \ket{z_i}$. By adopting the notation $\ket{z_{(i_1, i_2 ,\ldots , i_q)}} \equiv P_{i_q} \ldots P_{i_2} P_{i_1} \ket{z}$, we can write the weight of the edge $e_{z_{(i_1, i_2 ,\ldots , i_{q-1})} z_{(i_1, i_2 ,\ldots , i_{q-1} , i_{q})}}$, that is induced by the permutation $P_{i_q}$, as follows
\begin{align}
        &w(e_{z_{(i_1, i_2 ,\ldots , i_{q-1})} z_{(i_1, \ldots , i_{q-1} , i_{q})}}) \notag \\
        &\qquad  \equiv \bra{z_{(i_1, \ldots , i_{q-1} , i_{q})}} \Pi_{k=1}^q D_{i_k} P_{i_k} \ket{z_{(i_1, i_2 ,\ldots , i_{q-1})}} \notag \\
        &\hspace{1.75in} = d^{(i_q)}_{z_{(i_1, \ldots , i_{q-1} , i_{q})}}\, .
\end{align}
Using this notation, we can emphasize that Hermiticity of the Hamiltonian implies that for every directed edge $e_{z_{(i_1, i_2 ,\ldots , i_{q-1})} z_{(i_1, \ldots , i_{q-1} , i_{q})}}$, there is an opposite edge $e_{z_{(i_1, \ldots , i_{q-1} , i_{q})} \,  z_{(i_1, i_2 ,\ldots , i_{q-1})}}$ with weight

\begin{align}
        &w(e_{z_{(i_1, \ldots , i_{q-1} , i_{q})} \,  z_{(i_1, i_2 ,\ldots , i_{q-1})}}) \notag \\
        &\equiv \bra{z_{(i_1, i_2 ,\ldots , i_{q-1})}} \Pi_{k=1}^q D_{i_k} P_{i_k} D_{i_k^{-1}} (P_{i_k})^{-1}\ket{z_{(i_1, \ldots , i_{q-1} , i_{q})}} \notag \\
        &\hspace{0.5in} = d^{(i_q^{-1})}_{z_{(i_1, \ldots , i_{q-1})}} = w(e_{z_{(i_1, i_2 ,\ldots , i_{q-1})} z_{(i_1, \ldots , i_{q-1} , i_{q})}})^* \notag \\
        &\hspace{1.35in}= \left(d^{(i_q)}_{z_{(i_1, \ldots , i_{q-1} , i_{q})}}\right)^*\, .
\end{align}

From here on, we will simply refer to the weights of the edges $d^{(i)}_z$ and will avoid using formal graph notation $w(e_{z_{\ldots}})$ to refer to the weight of an edge on the computational state graph.

\subsubsection{Closed walks on \texorpdfstring{$G_H$}{G H}}

\begin{definition}[Closed walks on $G_H$]
A \textit{closed walk of length} \( q \) on \( G_H \) is defined as a sequence of diagonals and permutation operators applied to a computational basis state \( \ket{z} \) of the form
\[
\mathcal{S}_{\mathbf{i}_q} = D_{i_q} P_{i_q} \cdots D_{i_2} P_{i_2} D_{i_1} P_{i_1},
\]
such that the final state obtained by applying \( \mathcal{S}_{\mathbf{i}_q} \) to \( \ket{z} \) is proportional to \( \ket{z} \). We denote the associated permutation string as 
\[
S_{\mathbf{i}_q} := P_{i_q} \cdots P_{i_2} P_{i_1},
\]
and we say that \( \mathcal{S}_{\mathbf{i}_q} \) \emph{generates a closed walk} on \( G_H \) starting and ending at \( \ket{z} \) if the associated permutation string is equivalent to identity
\[
S_{\mathbf{i}_q} = \mathds{1}.
\]
In other words, the action of \( \mathcal{S}_{\mathbf{i}_q} \) maps \( \ket{z} \) back to itself (up to a scalar factor). The corresponding \textbf{weight} of the closed walk is defined by
\begin{align}
        \mathcal{D}_{(z \, , \, \mathcal{S}_{\mathbf{i}_q})} &:= \bra{z} \mathcal{S}_{\mathbf{i}_q} \ket{z} \notag \\
        &= d^{(i_1)}_{z_{(i_1)}} \cdot d^{(i_2)}_{z_{(i_1, i_2)}} \cdots d^{(i_q)}_{z_{(i_1, i_2, \ldots, i_q)}},
        \label{eq:ref_weight}
\end{align}
where each \( d^{(i_j)}_{z_{(\cdot)}} \) is the diagonal entry of \( D_{i_j} \) evaluated on the intermediate state obtained by applying the preceding permutations \( P_{i_1}, \ldots, P_{i_{j-1}} \) to \( \ket{z} \), that is,
\[
\ket{z_{(i_1, \ldots, i_j)}} := P_{i_{j-1}} \cdots P_{i_1} \ket{z}.
\]
\end{definition}

Consider two identity-equivalent sequences $\mathcal{S}_{\mathbf{i}_{q_1}}$ and $\mathcal{S}_{\mathbf{j}_{q_2}}$. Their multiplication induces a longer closed walk of length $q_1 + q_2$ on $G_H$. The sequence $\mathcal{S}_{{\bf k}_{q_1 + q_2}} := \mathcal{S}_{\mathbf{i}_{q_2}} \mathcal{S}_{\mathbf{j}_{q_1}} $ applied to the state $\ket{z}$ generates a closed walk with weight
\begin{align}
        \mathcal{D}_{(z \, , \, \mathcal{S}_{\mathbf{k}_q})} = \mathcal{D}_{(z \, , \, \mathcal{S}_{\mathbf{i}_{q_1}})} \mathcal{D}_{(z \, , \, \mathcal{S}_{\mathbf{k}_{q_2}})} \, .
\end{align}

\subsection{Fundamental cycles and cycle generators}

We start by providing a rigorous definition of what is meant by a fundamental cycle. 
\begin{definition} [Fundamental cycles]
        A \textit{fundamental cycle} of length $q$, namely $\mathcal{C}^{(z)}_{\mathbf{i}_q}$, with starting state $\ket{z}$, is a closed walk of weight $\mathcal{D}_{(z \, , \, \mathcal{S}_{\mathbf{i}_q})}$, generated by the sequence $\mathcal{S}_{{\bf i}_q}$, such that no two non-adjacent states $\ket{z_1}$ and $\ket{z_2}$ along the walk are actually adjacent on the graph $G_H$. In other words, a fundamental cycle $\mathcal{C}^{(z)}_{\mathbf{i}_q}$ does not have any non-adjacent states $\ket{z_{i_j}}$ and $\ket{z_{i_k}}$ for which there exists a $D_k P_k$ such that $\bra{z_{i_k}} D_k P_k \ket{z_{i_j}} \neq 0$.
\end{definition}

In discussing the fundamental cycles of a graph $G_H$, it is important to highlight the role of strings of identity-equivalent permutations that \textit{generate} the walk on the graph. Given a set of $M$ permutation matrices that span the off-diagonals of a given Hamiltonian $H$, we can find the set of smallest identity-equivalent strings, i.e., the \textit{fundamental generators}, efficiently via linear-algebra methods \cite{Babakhani_2025}. However, it is important to define exactly what these strings are with the following definition.

\begin{definition}[Fundamental generators]
        Given a PMR decomposition of a Hamiltonian $H$, the \textit{fundamental generators} of closed walks on the computational state graph $G_H$ of $H$ are the set of shortest strings of identity-equivalent permutations $S_{\boldsymbol{i}_q} := P_{i_1} \, P_{i_2} \, \ldots P_{i_q}\equiv \mathds{1}$, such that they cannot be obtained by multiplying two or more shorter identity-equivalent strings. As an example, consider the two strings $S_{i_3} := P_1 P_2 P_3 = \mathds{1}$ and $S_{j_3} = P_3 P_4 P_5 = \mathds{1}$. One can multiply the two strings and obtain a length-four identity-equivalent string $S_{4} = S_{i_3} S_{j_3} = P_1 P_2 P_4 P_5$, however, $S_4$ would not be considered a fundamental generator since it can be obtained from two shorter strings.
\end{definition}

It has been shown, in prior work \cite{Barash_2024,Babakhani_2025}, that QMC for spin-$1/2$ and higher-spin ($S>1/2$) systems can be entirely generated via the fundamental cycle generators of a Hamiltonian. In the following sections, we will show that we can define cost functions using the weights of the fundamental cycles to quantify the severity of the QMC sign problem in spin systems. This work is focused on addressing the sign problem for known spin-$1/2$ systems, but the methods outlined can be easily extended to higher-spin systems.

\subsubsection{Diameter of \texorpdfstring{$G_H$}{G H} for local Hamiltonians}

\begin{definition}[Graph Diameter]
Let $d(\ket{z}, \ket{z'})$ denote the length of the shortest unweighted path between the states $\ket{z}$ and $\ket{z'}$ in the set of nodes $V$ of the graph $G_H$. The \emph{diameter} of $G_H$, denoted $\mathrm{diam}(G_H)$, is defined as
\[
        \mathrm{diam}(G_H) = \max_{\ket{z}, \ket{z'} \in V} d(\ket{z}, \ket{z'}).
\]
That is, the diameter is the greatest unweighted distance between any two states in $G_H$.
\end{definition}

\begin{lemma}[Scaling of $\mathrm{diam}(G_H)$ for physical Hamiltonians]
\label{lemma:G_diam}
Let $H$ be a geometrically local $k$-body Hamiltonian acting on $N$ particles (or spins), and let $G_H$ be the associated computational state graph whose vertices are the computational basis states $\ket{z}$, with edges generated by the action of local permutation matrices corresponding to the off-diagonal terms of $H$. Then the diameter of $G_H$ satisfies
\[
        \mathrm{diam}(G_H) = O(N).
\]
In other words, the greatest distance between any two computational basis states on $G_H$ grows at most linearly with the system size $N$.
\end{lemma}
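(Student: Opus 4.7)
The plan is to establish the upper bound on $\mathrm{diam}(G_H)$ by constructing, for any two path-connected basis states $\ket{z}, \ket{z'}$ in $G_H$, an explicit walk of length at most $N$. The argument leverages the $\mathbb{F}_2$-vector-space structure of spin-$1/2$ basis states; higher local dimensions follow by replacing $\mathbb{F}_2$ with $\mathbb{Z}_d$ and adjusting constants.

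First, I would observe that for a $k$-local spin-$1/2$ Hamiltonian in PMR form $H = D_0 + \sum_j D_j P_j$, each permutation $P_j$ acts (modulo diagonal phases absorbed into $D_j$) as a bit-flip on some subset $S_j \subseteq [N]$ with $|S_j| \leq k$. Identifying basis states with binary strings $z \in \mathbb{F}_2^N$, one has $P_j \ket{z} = \ket{z \oplus S_j}$. Consequently, if $\ket{z}$ and $\ket{z'}$ are path-connected in $G_H$, the displacement $T := z \oplus z'$ must lie in the $\mathbb{F}_2$-span $W := \mathrm{span}_{\mathbb{F}_2}\{S_j\} \subseteq \mathbb{F}_2^N$.

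The second step is a dimension count: since $W$ is a subspace of an $N$-dimensional $\mathbb{F}_2$-vector space, $\dim W \leq N$, so a basis of $W$ can be extracted from $\{S_j\}$ of size at most $N$. Hence $T$ decomposes as an $\mathbb{F}_2$-sum $T = S_{j_1} \oplus \cdots \oplus S_{j_l}$ with $l \leq N$. Applying $P_{j_1}, \ldots, P_{j_l}$ in sequence starting from $\ket{z}$ produces a candidate walk of length $l \leq N$ that terminates at $\ket{z'}$; maximizing over all pairs would then yield $\mathrm{diam}(G_H) = O(N)$. A matching lower bound $\mathrm{diam}(G_H) \geq N/k$ is immediate from the fact that each permutation alters at most $k$ bits, so the scaling is in fact tight.

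The main obstacle is verifying that every hop of this candidate walk is an actual edge of $G_H$, i.e., that the diagonal weight $d^{(j_k)}_{z^{(k)}}$ at each intermediate state is nonzero. For generic $\{D_j\}$ this holds automatically, but sparsely supported diagonals (such as projectors onto specific local configurations) can kill individual transitions. I would exploit the commutativity of the bit-flip supports as elements of $\mathbb{F}_2^N$: the $P_{j_k}$'s can be reordered arbitrarily, producing many candidate walks with the same net displacement, and a careful argument should show that at least one ordering avoids every zero-weight intermediate step, or else that only an $O(1)$-factor blow-up is needed via local detours (substituting a forbidden hop $P_{j_k}$ by a short sequence of alternative generators that realize the same bit flip). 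Handling this robustness step cleanly is the most delicate part of the proof and may require an additional mild structural assumption on $H$ ruling out accidental cancellations among the off-diagonal matrix elements.
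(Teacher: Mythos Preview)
Your proposal is correct in its core and takes a genuinely different route from the paper. The paper argues ``physically'': it assumes geometric locality, then sweeps along the lattice, fixing $k$ contiguous sites at a time. Each such local update costs $O(1)$ edges, and $O(N/k)$ sweeps suffice, yielding a path of length $O(N)$. Your argument is instead purely linear-algebraic over $\mathbb{F}_2$: since the displacement $T=z\oplus z'$ lies in the span $W$ of the bit-flip supports $\{S_j\}$ and $\dim W\le N$, at most $N$ generators suffice to realize $T$. This buys you two things the paper's proof does not: (i) the bound holds for arbitrary $k$-local Hamiltonians, not just geometrically local ones, and (ii) you get the explicit constant $\mathrm{diam}(G_H)\le N$ rather than an unspecified $O(N)$, together with the matching lower bound $N/k$.

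On the zero-weight-edge issue you flag: the paper essentially sidesteps it by declaring that if the sweep fails, the two states must lie in different connected components (so the diameter is computed per component). This is not fully justified there either---a priori, a local transition could be blocked by a vanishing $d^{(j)}_z$ even when the global states are connected via some other route---so your explicit discussion of reordering the $P_{j_k}$'s or taking $O(1)$ local detours is more honest about the gap. For the purposes of matching the paper's level of rigor, you could adopt the same escape hatch (restrict to connected components and assume generic nonvanishing off-diagonals); a fully robust statement would indeed need the mild structural hypothesis you mention.
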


\begin{proof}
Consider two computational basis states $\ket{z}$ and $\ket{z'}$ differing in their spin values on $k$ consecutive (neighboring) sites and identical elsewhere. Let us also denote the set of all states that differ on the same $k$ spin values, but agree on all others, as the set $\mathcal{Z}_k$. To transform $\ket{z}$ into $\ket{z'}$ (i.e., to walk from $\ket{z}$ to $\ket{z'}$ on $G_H$), it suffices to apply a sequence of local permutation operators $D_k P_k$ supported on $\mathcal{Z}_k$.

If $H$ represents a physical Hamiltonian, i.e., its support radius extends to $k = O(1)$ neighboring spins, there exist $O(1)$ local operators capable of transforming any configuration of $k$ neighboring spins, $\mathcal{Z}_k$, into any other. For each such $\mathcal{Z}_k$, all $2^k$ possible local configurations (with the remaining $N-k$ spins fixed) either form a connected component of $G_H$ or belong to disjoint components. Thus, for any pair $\ket{z},\ket{z'}$ that belong to the same graph $G_H$, differing only on $k$ neighboring spins, the shortest path between them is $O(1)$.

To reach any general $\ket{z'}$ from $\ket{z}$, iterate this process along the spin chain (or lattice), modifying $k$ sites at a time. This requires $O(N)$ steps, each of which involves $O(1)$ applications of local operators, yielding an overall path of length $O(N)$.

If two states $\ket{z}$ and $\ket{z'}$ cannot be connected in this way (e.g., if certain quantum numbers or symmetries forbid transitions), they lie in distinct disconnected components of $G_H$ and their distance is infinite. Hence, the diameter of each connected component of $G_H$ is $O(N)$. This concludes the proof.
\end{proof}

\begin{theorem}[Largest fundamental cycles on $G_H$]
\label{thm:Q=O(N)}
    The length of the largest fundamental cycle of $G_H$, denoted by $Q$, is $O(N)$.
\end{theorem}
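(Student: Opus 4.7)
The plan is to bound $Q$ by combining the chord-free property that defines a fundamental cycle with the diameter estimate $\mathrm{diam}(G_H)=O(N)$ from Lemma~\ref{lemma:G_diam}. Fix a fundamental cycle $\mathcal{C}$ of length $Q$ with consecutive vertices $\ket{z_0},\ket{z_1},\ldots,\ket{z_{Q-1}},\ket{z_Q}=\ket{z_0}$; by definition, for any two positions $i,j$ that are not cyclically adjacent, there is no edge of $G_H$ joining $\ket{z_i}$ and $\ket{z_j}$.

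First, I would select an approximately antipodal pair $\ket{z_0},\ket{z_{\lfloor Q/2\rfloor}}$ and invoke Lemma~\ref{lemma:G_diam} to observe that they are connected in $G_H$ by some path of length at most $\mathrm{diam}(G_H)=O(N)$. If $Q$ were superlinear in $N$, both cyclic arcs of $\mathcal{C}$ between these two vertices would be strictly longer than this alternative path, providing a ``shortcut'' through $G_H$. The target is to translate this shortcut into an actual chord of $\mathcal{C}$, which would contradict the fundamental property and force $Q=O(N)$.

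The main obstacle is precisely this translation step: a generic short alternative path passes through intermediate vertices not lying on $\mathcal{C}$, so it does not by itself supply an edge directly joining two cyclically non-adjacent cycle vertices. Overcoming this requires exploiting the same $k$-local structure used in the proof of Lemma~\ref{lemma:G_diam}: each permutation $P_j$ modifies only $O(1)$ neighboring sites, so consecutive configurations on both the cycle and the shortcut differ by $O(1)$ spin flips. The key step is then a pigeonhole-style walk along $\mathcal{C}$ in which one tracks the Hamming displacement of the cycle vertices from the shortcut, to produce cycle vertices $\ket{z_a}$ and $\ket{z_b}$ with $|a-b|\geq 2$ whose configurations differ on at most $k$ neighboring sites in a pattern realized by one of the off-diagonal operators $D_j P_j$ of the PMR decomposition; such a pair forms a chord of $\mathcal{C}$ and yields the desired contradiction.

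Once the cycle length is bounded by a constant multiple of $\mathrm{diam}(G_H)$, the theorem follows directly from Lemma~\ref{lemma:G_diam}. The delicate part of the whole argument is the pigeonhole/chord-construction step, since graph-theoretically one can exhibit chord-free cycles that are exponentially longer than the diameter (for example, snake-in-the-box constructions in hypercubes). The proof must therefore use the fact that $G_H$ inherits its edges from a fixed PMR decomposition of a $k$-local Hamiltonian, and not merely that $G_H$ is a low-diameter graph in the abstract.
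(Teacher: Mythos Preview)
The paper's own proof is a two-sentence argument: it invokes Lemma~\ref{lemma:G_diam} to say that any two vertices on a fundamental cycle are at graph distance $O(N)$, and then asserts that this ``implies'' the cycle itself has $O(N)$ edges. It does not address the gap you identify---that a short $G_H$-path between two cycle vertices need not produce a chord of the cycle---at all; the paper simply treats the implication as immediate.

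Your proposal is therefore considerably more careful than the paper's proof, and you correctly isolate the weak step. The difficulty is that your pigeonhole/chord-construction repair is left unexecuted, and your own snake-in-the-box remark actually undercuts the entire strategy: the $N$-cube \emph{is} the computational state graph of the $1$-local Hamiltonian $H=\sum_i X_i$ (PMR with $P_i=X_i$, $D_i=\mathds{1}$), and it contains chordless cycles of length exponential in $N$. So the extra structure ``$G_H$ arises from a $k$-local PMR'' does not by itself rule out long induced cycles, and the pigeonhole step cannot be completed as you outline. Either the paper tacitly intends ``fundamental cycle'' to mean something closer to the algebraic ``fundamental generators'' introduced just afterwards (which for $\sum_i X_i$ are indeed all of length $2$), or the statement does not hold as written; in either case, the paper's short proof and your more elaborate plan share the same unfilled gap, and your hoped-for use of $k$-locality cannot close it.
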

\begin{proof}
    This is a direct consequence of Lemma~\ref{lemma:G_diam}. The farthest distance between any two states $\ket{z}$ and $\ket{z'}$ on a fundamental cycle of $G_H$ is at most $d(\ket{z} , \ket{z'}) = O(N)$. This implies that there are at most $O(N)$ strings of $D_i P_i$ required to generate the cycle. Thus, the length of the largest fundamental cycle is at most $O(N)$.
\end{proof}

\section{Review of PMRQMC}
\label{sec:pmrqmc_rev}
Casting the Hamiltonian in PMR form, one can show that the partition function $Z=\tr\left[ e^{-\beta H} \right]$ can be written as ~\cite{Gupta_2020}
\begin{align}
    Z  = \tr\left[ e^{-\beta H} \right] = \sum_{z} \sum_{q=0}^{\infty} \sum_{\{S_{{\bf{i}}_q}\}} \langle z | S_{{\bf{i}}_q} | z \rangle W^{(z)}_{S_{\boldsymbol{i}_q}} \,.
    \label{eq:SSE3}
\end{align}
where $ W^{(z)}_{S_{\boldsymbol{i}_q}} = \mathcal{D}_{(z \, , \, \mathcal{S}_{\mathbf{i}_q})} e^{-\beta[E_{z_0},\ldots,E_{z_q}]}$.
The sum above is a double sum: over the set of all basis states $z$ and over all products of $q$ permutation operators  $S_{{\bf{i}}_q}=P_{i_q} \ldots P_{i_2}P_{i_1}$ with $q$ running from zero to infinity. The multi-index ${\bf{i}}_q=(i_1,\ldots,i_q)$ ranges over all combinations of products of permutation operators with $i_j=1 \ldots M$. The first term, $\mathcal{D}_{(z \, , \, \mathcal{S}_{\mathbf{i}_q})}$, in $W^{(z)}_{S_{\boldsymbol{i}_q}}$ is simply the weight of a closed walk generated by $\mathcal{S}_{\mathbf{i}_q}$, as introduce in the previous section. This is, as we described in the previous section, the weight of a closed walk on the computational basis state $G_H$.
The second term in each summand, $e^{-\beta[ E_{z_0}, \ldots, E_{z_q}]}$,  is called the divided differences of the function $F(\cdot) = e^{-\beta (\cdot)}$ with respect to the inputs $[ E_{z_0}, \ldots, E_{z_q}]$. The divided differences~\cite{Deboor_2005} of a function $F[\cdot]$ is defined as,
\begin{equation}
F[ E_{z_0}, \ldots, E_{z_q} ] \equiv \sum_{j=0}^{q} \frac{ F(E_{z_j}) }{ \prod_{k \neq j} ( E_{z_j} - E_{z_k} ) }.
\end{equation}
Here, $E_{z_i} = \langle z_i \vert D_0 \vert z_i \rangle$ are the diagonal entries of the Hamiltonian and $\{ S_{i_q} \}$ denotes the set of
all products of length $q$ of off-diagonal operators $P_j$. 
For a full derivation of the expansion in Eq.~\eqref{eq:SSE3}, refer to \cite{Gupta_2020}.

We note that as written, the weights  $W^{(z)}_{S_{\boldsymbol{i}_q}}$ are complex-valued, despite the partition function being real (and positive). Since for every configuration $\mathcal{C}=\{ |z\rangle, S_{{\bf{i}}_q}\}$ there is a conjugate configuration $\bar{\mathcal{C}}=\{ |z\rangle, S^{\dagger}_{{\bf i}_q}\}$\footnote{For $S_{{\bf{i}}_q} =P_{i_q} \ldots  P_{i_2} P_{i_1}$, the conjugate sequence is simply $S^\dagger_{{\bf i}_q} = P_{i_1}^{-1} P_{i_2}^{-1} \ldots P_{i_q}^{-1}$.} that produces the conjugate weight $W_{\bar{\mathcal{C}}}=\overline{W_{\mathcal{C}}} = (W_{\mathcal{C}})^*$, the imaginary contributions cancel out.  Expressed differently, the imaginary portions of complex-valued weights do not contribute to the partition function and may be disregarded altogether. 

Before we move on, we note that $\langle z| S_{{\bf{i}}_q} |z\rangle$ evaluates either to 1 or to zero.
Moreover, since the permutation matrices with the exception of $P_0$ have no fixed points, the condition $\langle z| S_{{\bf{i}}_q} |z\rangle=1$ implies $S_{{\bf{i}}_q}=\mathds{1}$, i.e., $S_{{\bf{i}}_q}$ must evaluate to the identity element $P_0$ (note that the identity element does not appear in the sequences $S_{{\bf{i}}_q}$).  The expansion in Eq.~\eqref{eq:SSE3} can thus be more succinctly rewritten as
\begin{align}
Z =\sum_{z}\sum_{S_{{\bf i}_q}=\mathds{1}}  \mathcal{D}_{(z \, , \, \mathcal{S}_{\mathbf{i}_q})}   e^{-\beta [E_{z_0},\ldots,E_{z_q}]} \,,
\label{eq:z}
\end{align}
i.e., as a sum over all computational basis states and permutation matrix products that evaluate to the identity matrix and the sequence of basis states $\{|z_i\rangle \}$ may be viewed as a \emph{closed walk} on the computational basis graph of a Hamiltonian, as highlighted in the previous section. 

We highlight the importance of \emph{fundamental cycles} in our analysis of the sign problem. As described formally, in the previous section, these are the smallest length cycle that cannot be made out of joining two smaller cycles. Below, is a figure demonstrating how a fundamental cycle of length $5$ looks like on the computational state graph.

\begin{figure}[H]
    \centering
    \includegraphics[width=7cm]{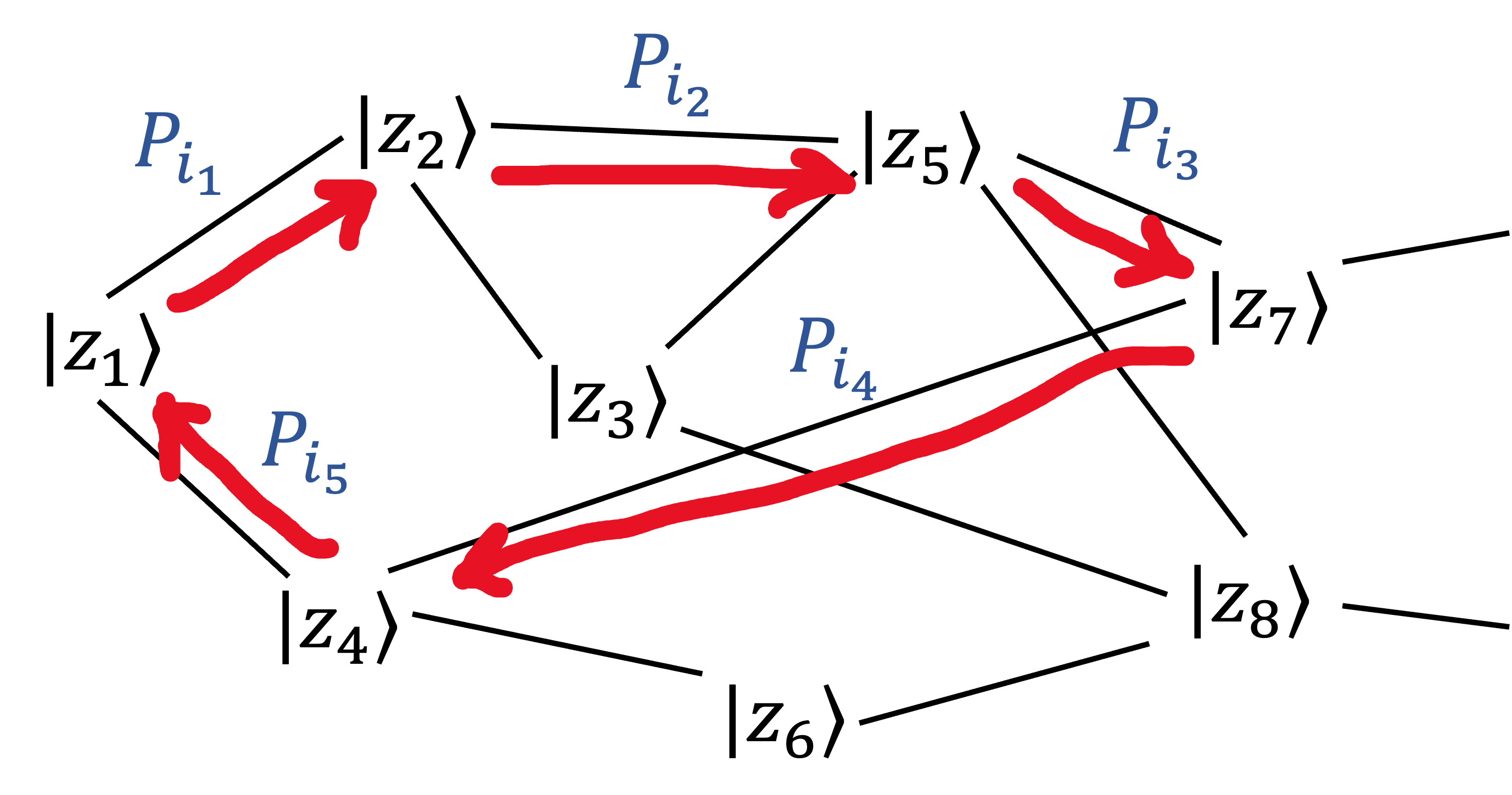}
    \caption{Fundamental cycle of length $5$, denoted by $\mathcal{C}_{i_5},$ produced by taking the string of $\Pi_{k=1}^5 P_{i_k}$ on state $\ket{z_1}$. The weight of the off-diagonals of this cycle is $W^{(z_1)}_{\mathcal{C}_{i_5}} = d^{(i_1)}_{(z_1)_{i_1}} d^{(i_2)}_{(z_2)_{i_2}} \ldots d^{(i_5)}_{(z_4)_{i_4}}$.}
    \label{fig:fund_cycle_5}
\end{figure}

Our goal in the next sections will be to address the sign problem by accounting for whether \emph{all} of the fundamental cycles on the computational state graph will satisfy the VGP condition given in Eq.\eqref{eq:VGP_angle}.

\section{Diagnostic measures for the sign problem}

\subsection{Average sign}

The usual measure that addresses the sign problem is the average sign of the QMC algorithm, which is defined as the following
\begin{align}
    \label{eq:avg_sign_def}
    \langle \text{sgn} \rangle  = \frac{\sum_{\mathcal{C}} W_{\mathcal{C}}}{\sum_{\mathcal{C}} |W_{\mathcal{C}}|} \, ,
\end{align}
where, as we highlighted at the end of previous section, $\mathcal{C}=\{ |z\rangle, S_{{\bf{i}}_q}\}$.

The average sign is commonly used to address the sign problem, with a sign free Hamiltonian taking on an average sign of $\langle \text{sgn} \rangle = 1$, and one with a severe sign problem, taking on a value close to $0$.

\subsection{Stoquasticity}

A Hamiltonian whose off-diagonal elements are all non-positive is said to be \textit{stoquastic}. Many attempts have been made to address the QMC sign via stoquasticity ~\cite{Hangleiter_2020,Klassen_2019,Klassen_2020}.
Consider a Hamiltonian $H = H_0 + H_{\text{off}}$, where $H_0$ and $H_{\text{off}}$ are its diagonal and off-diagonal parts respectively. Also, consider the following function
\begin{align}
    f_{\text{stoq.}}(H) := \| \,|H_{\text{off}}| - (- H_{\text{off}})\, \|_F \, ,
\end{align}
where $\|\cdot\|_F$ denotes the Frobenius norm. 

For a stoquastic Hamiltonian $H_{\text{stoq.}}$, we clearly have $f_{\text{stoq.}}(H_{\text{stoq.}}) = 0$. A generic Hamiltonian must seemingly satisfy $\mathds{D}(\mathds{D}-1)/2$, $\Theta(\mathds{D}^2)$ where $\mathds{D} = 2^N$, constraints in order to be stoquastic.
For local Hamiltonians that can be expressed in $O(poly(N))$ Pauli-strings, however, there is an efficient method to determine stoquasticity \cite{Klassen_2019}. This is partly why stoquasticity remains an attractive candidate to diagnose whether a Hamiltonian would suffer from QMC sign problem.

Additionally, there have been many attempts to curing and mitigating sign problematic Hamiltonians via local Clifford rotations \cite{Ioannou_2020}. In an interesting work by Marvian et al.~\cite{Marvian_2019}, it was shown that determining whether a Hamiltonian can be made stoquastic via local unitary transformations is QMA-hard in general. 

However, recent insights by Hen~\cite{Hen_2021} reveal that the class of sign-problem-free Hamiltonians is larger than the stoquastic class. They identify a broader category known as VGP Hamiltonians, which can yield non-negative QMC weights despite being non-stoquastic, due to destructive interference from the phases of the  edges of $G_H$ that contribute to the fundamental cycles.

We will next explore the VGP phase and propose a cost function that addresses the negativity of the weights of the QMC samples, without imposing strict stoquasticity conditions.

\subsection{Vanishing Geometric Phase (VGP)}
Using the computational state graph, $G_H$, of a given Hamiltonian $H$, the partition function can be expressed as the sum of weights of closed walks on the graph $G_H$.

Inspired by \cite{Hen_2021}, we consider addressing the sign problem via the weights of the fundamental cycles on the graph $G_H$. Since the divided difference term in Eq.~\eqref{eq:SSE3} introduces a scalar whose sign is only a function of the \textit{number of input variables} ($(-1)^q$ for $q$ edges and $q+1$ visited states), the weight of each fundamental cycle of length $q$, can be written 
\begin{align}
W^{(z)}_{\mathcal{C}_{i_q}} = (-1)^q \mathcal{D}_{(z , \mathcal{C}_{i_q})} = |\mathcal{D}_{(z , \mathcal{C}_{i_q})}| e^{i (\Phi^{(z)}_{\mathcal{C}_{i_q}} + q\pi)} \, .
\label{eq:fc_weight}
\end{align}

If \textit{all} fundamental cycles of a Hamiltonian satisfy the following condition
\begin{align}
    \Phi^{(z)}_{\mathcal{C}_{i_q}} + q\pi \equiv 0 \mod 2\pi \, ,
    \label{eq:VGP_angle}
\end{align}
it is said that the Hamiltonian is VGP, and is sign problem free. This is a more general condition than \textit{stoquasticity}. We again refer the reader to \cite{Hen_2021} for detailed discussion and derivation for this condition. 

The diagram below demonstrates the space of unitaries that will make an initial sign-problematic Hamiltonian into a sign-free one.

\begin{figure}[h]
    \centering
    \includegraphics[width=0.5\linewidth]{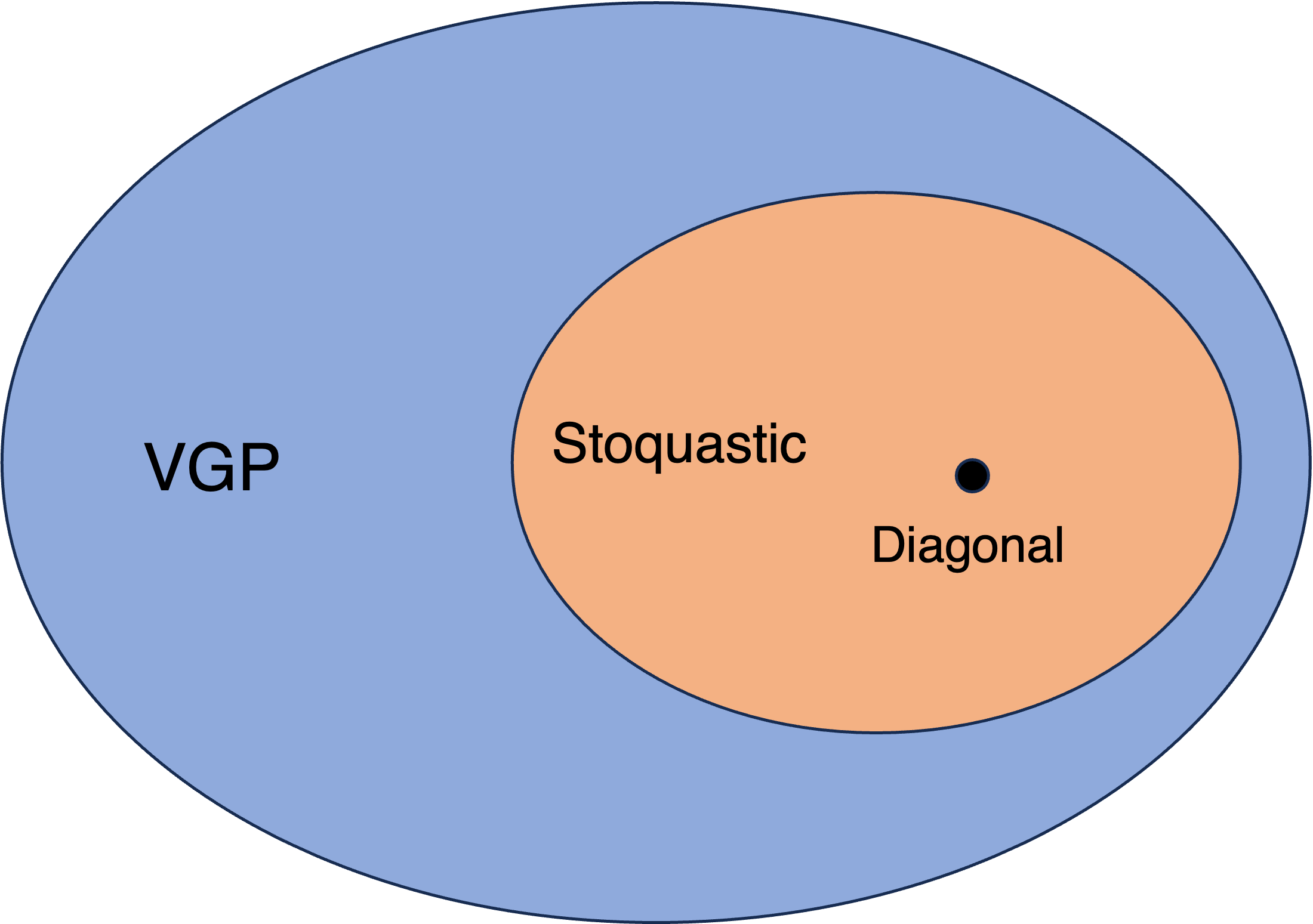}
    \caption{Diagram showing classes of Hamiltonian that are sign-problem free.}
    \label{fig:enter-label}
\end{figure}

As discussed in earlier sections, a VGP Hamiltonian can be mapped to a stoquastic Hamiltonian via conjugation by a suitable diagonal unitary. However, finding such a transformation using only local Pauli and Clifford group operations is generally infeasible, since the Clifford group does not form a universal gate set for the group of unitary operators. Consequently, there exist diagonal unitaries outside the Clifford group that cannot be constructed from local Clifford rotations alone.

In order to generate rotations that will turn a Hamiltonian into VGP, we will define a metric with the goal of enforcing Eq. (\ref{eq:VGP_angle}). 
\begin{definition}[VGP indicator]
    \label{def:VGP_ind}
    The following function
    \begin{align}
        f_{\text{VGP}}(H) = \sum_{q=3}^{Q} \sum_{\mathcal{C}_q \in G_H} |\mathcal{D}^{(z)}_{\mathcal{C}_{q}}| (1 - \cos(\Phi_{\mathcal{C}_{q}})) \, ,
        \label{eq:VGP_cost}
    \end{align}
    where $|\mathcal{D}^{(z)}_{\mathcal{C}_{q}}|$ is the magnitude of the off-diagonal weights of a fundamental cycle $\mathcal{C}_q$, given by Eq. (\ref{eq:fc_weight}). This indicator function measures how far away a Hamiltonian representation is from being VGP. We note that Eq. (\ref{eq:VGP_cost}) contains a sum over the index $q$ that goes from $3$ to $Q$, where $Q$ is the length of the largest fundamental cycle for the computational graph representation of the Hamiltonian. $C_q$ is the index denoting the sum over the fundamental cycles of length of $q$. We would like to highlight the fact that there are length $2$ fundamental cycles, i.e. a made out of a pair of $D_iP_i$ (in spin-$1/2$ case, $P_i^2 = \mathds{1}$). However, weights of length-$2$ fundamental cycles are always non-negative due to the fact that $\bra{z} D_i P_i D_i P_i \ket{z} = \big(D_i P_i \ket{z}\big)^{\dagger} \big(D_i P_i \ket{z}\big) \geq 0$.
\end{definition}

\begin{corollary}[$Q = O(N)$ for physical $H$]
     For a physical spin Hamiltonian $H$, describing the interaction of $N$ spins with local interaction, the largest fundamental cycle on the computational state graph of $H$, namely $G(H)$, is of order $O(N)$. In other words, $Q = O(N)$ in the definition of $f_{VGP}(H)$ Eq.\eqref{eq:VGP_cost}.
\end{corollary}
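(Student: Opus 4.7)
The plan is essentially to unpack what ``physical spin Hamiltonian with local interaction'' means and then directly invoke the machinery already built up. A physical Hamiltonian in this context is one whose interaction radius is independent of the system size, i.e., each term in its PMR decomposition is supported on $k = O(1)$ neighboring sites. Under this reading, a physical Hamiltonian is precisely a $k$-local Hamiltonian with $k = O(1)$, which is exactly the hypothesis of Lemma~\ref{lemma:G_diam} and Theorem~\ref{thm:Q=O(N)}.

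Given that identification, the corollary follows as a one-line application. First I would note that Lemma~\ref{lemma:G_diam} immediately yields $\mathrm{diam}(G_H) = O(N)$ for any connected component of $G_H$. Next, Theorem~\ref{thm:Q=O(N)} asserts that the length of the largest fundamental cycle is bounded by the graph diameter up to a constant factor, since any fundamental cycle is a simple closed walk whose pairwise node distances are bounded by $\mathrm{diam}(G_H)$. Combining these gives $Q = O(N)$, which by Definition~\ref{def:VGP_ind} is exactly the upper bound appearing as the summation index in $f_{\text{VGP}}(H)$.

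There is no real obstacle here, as this corollary is a direct specialization of Theorem~\ref{thm:Q=O(N)} to the restricted class of physical local Hamiltonians and a relabeling of $Q$ as the truncation index in the VGP indicator. The only conceptual step worth spelling out is the reminder that ``physical'' means the locality parameter $k$ does not grow with $N$; once that is explicit, the chain Lemma~\ref{lemma:G_diam} $\Rightarrow$ Theorem~\ref{thm:Q=O(N)} $\Rightarrow$ Corollary closes. I would also briefly remark that disconnected components of $G_H$ (arising from superselection sectors or conserved quantum numbers) do not affect the bound, since the corollary applies componentwise and the maximum over components is still $O(N)$.
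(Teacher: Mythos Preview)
Your proposal is correct and follows essentially the same approach as the paper: both recognize the corollary as a direct restatement of Theorem~\ref{thm:Q=O(N)} specialized to physical (geometrically local) Hamiltonians, with the paper's own proof being a single sentence invoking that theorem. Your version is slightly more verbose in spelling out the locality assumption and the role of disconnected components, but the logical content is identical.
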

\begin{proof}
    This is exactly the result of Theorem \ref{thm:Q=O(N)}, restated for the $Q$ being the largest fundamental cycle of $G_H$ for a physical (geometrically local) Hamiltonian subject to the focus of this paper.
\end{proof}

The corollary above will be used later to argue that the hardness of determining VGP for a local Hamiltonian with no structure on the diagonal and permutation matrices in the PMR form of the Hamiltonian will be exponentially difficult. However, for Hamiltonian with certain structure on the permutation matrices and on the radius of support of the diagonal matrices, one is able to determine if a Hamiltonian is VGP efficiently. We will demonstrate these instances in subsection~\ref{subsec:VGP_complexity}.  
\subsection{Symmetries of \texorpdfstring{$f_{\text{VGP}}$}{f VGP}}

Here, we emphasize a few properties for $f_{\text{VGP}}(H)$. These properties are nothing but the symmetries of $f_{\text{VGP}}$, i.e. the Hamiltonian basis transformations that leave the sign problem unchanged. In doing so, we highlight a class of unitary transformations, namely diagonal unitaries and Pauli matrices, that will not alter or cure the sign problem.

\begin{lemma}[Diagonal Unitaries]
    \label{lem:diag_conj}
    For a diagonal unitary $\Phi$, i.e. $\Phi = \text{diag}(e^{i\phi_1} , e^{i\phi_2} , \ldots , e^{i\phi_{\mathds{D}}})$, $f_{\text{VGP}}(\Phi H \Phi^\dagger) = f_{\text{VGP}}(H)$.
\end{lemma}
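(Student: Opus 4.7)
The plan is to exploit the fact that a diagonal unitary acts on each off-diagonal permutation matrix by just multiplying it by site-dependent phases, and then observe that these phases telescope to zero around any closed walk.

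First I would decompose the action of $\Phi$ on each summand of the PMR. Since $\Phi$ and $D_j$ are both diagonal they commute, so $\Phi D_j P_j \Phi^\dagger = D_j (\Phi P_j \Phi^\dagger)$. If $P_j$ implements the basis permutation $\ket{z}\mapsto\ket{\sigma_j(z)}$, a direct computation gives $\Phi P_j \Phi^\dagger \ket{z} = e^{i(\phi_{\sigma_j(z)}-\phi_z)}\ket{\sigma_j(z)}$, so $\Phi P_j \Phi^\dagger = \tilde{D}_j P_j$ for a unit-modulus diagonal $\tilde{D}_j$. Hence the PMR of $\Phi H \Phi^\dagger$ uses exactly the same permutation matrices $\{P_j\}$, only the diagonal coefficients are replaced by $D_j' = D_j \tilde{D}_j$.

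The crucial consequence is that the computational state graphs $G_H$ and $G_{\Phi H \Phi^\dagger}$ coincide as graphs (edges $z\to z'$ exist with the same support), and therefore the set of fundamental cycles and the allowed permutation strings $\mathcal{S}_{\mathbf{i}_q}$ are identical. Next I would track how an individual edge weight transforms: $d^{(j)}_{z'} \mapsto e^{i(\phi_{z'}-\phi_z)} d^{(j)}_{z'}$, where $z' = \sigma_j(z)$. Multiplying these edge weights along a closed walk $\ket{z_0}\to\ket{z_1}\to\cdots\to\ket{z_q}=\ket{z_0}$ produces an overall factor
\begin{equation}
\exp\!\Bigl(i\sum_{k=1}^{q}(\phi_{z_k}-\phi_{z_{k-1}})\Bigr) = \exp\!\bigl(i(\phi_{z_q}-\phi_{z_0})\bigr) = 1,
\end{equation}
since the walk is closed. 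Therefore $\mathcal{D}_{(z,\mathcal{S}_{\mathbf{i}_q})}$ is invariant for every closed walk, and in particular for every fundamental cycle both $|\mathcal{D}^{(z)}_{\mathcal{C}_q}|$ and the geometric phase $\Phi_{\mathcal{C}_q}$ appearing in Eq.~\eqref{eq:fc_weight} are unchanged.

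Putting the two observations together, the sum defining $f_{\text{VGP}}$ in Eq.~\eqref{eq:VGP_cost} ranges over the same index set (same cycles, same length cutoff $Q$) and each summand is unchanged, which gives the lemma. There is no real obstacle here; the only mild subtlety is verifying that $\Phi H \Phi^\dagger$ admits a PMR with the same $\{P_j\}$, so that ``fundamental cycle'' refers to the same object before and after conjugation — once this is in hand, the telescoping argument is immediate.
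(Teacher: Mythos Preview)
Your proposal is correct and follows essentially the same telescoping-phase argument as the paper: both show that each off-diagonal matrix element picks up a factor $e^{i(\phi_{z'}-\phi_z)}$ under conjugation by $\Phi$, and that these factors cancel around any closed walk so that every cycle weight $\mathcal{D}^{(z)}_{\mathcal{C}_q}$ is invariant. Your version is actually a bit more careful than the paper's in explicitly verifying that the PMR of $\Phi H \Phi^\dagger$ uses the same permutation matrices and hence the same set of fundamental cycles, but this is a minor elaboration of the same idea rather than a different route.
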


\begin{proof}
    Restating the analysis done in \cite{Hen_2021}, the diagonal weights $D^{(z)}_{\mathcal{C}_{q}}$ are nothing but the product of off-diagonal elements of the Hamiltonian $H_{z_i z_j}$, written in matrix element form below
    \begin{align}
        D^{(z)}_{\mathcal{C}_{q}} = H_{z z_1}\, H_{z_1 z_2} \ldots H_{z_q z} \, .
        \label{eq:Diagonal_unit_weight}
    \end{align}
    Note that after a diagonal transformation, the new matrix elements are
    \begin{align}
        \tilde{H}_{z_i z_j} = e^{i (\phi_{z_i} - \phi_{z_j})} H_{z_i z_j} \, .
    \end{align}
    Substituting this back into Eq. \eqref{eq:Diagonal_unit_weight} above, we see that 
    \begin{align}
        \tilde{\mathcal{D}}^{(z)}_{\mathcal{C}_{q}}& = \tilde{H}_{z z_1}\, \tilde{H}_{z_1 z_2} \ldots \tilde{H}_{z_q z} \notag \\
        &= e^{i(\phi_{z_i} - \phi_{z_j} + \phi_{z_j}  \ldots -\phi_{z_i})} \mathcal{D}^{(z)}_{\mathcal{C}_{q}} =  \mathcal{D}^{(z)}_{\mathcal{C}_{q}} \, .
    \end{align}
    This shows that since the weights remain the same, from Eqs. \eqref{eq:fc_weight} and \eqref{eq:VGP_cost}, so does the value of $f_{\text{VGP}}(H)$.
\end{proof}

\begin{lemma}
    \label{lem:P_conj}
    For spin-$1/2$ systems, given a permutation matrix $P = \Pi_{i} X_i$, $f_{\text{VGP}}(P\,H\,P^\dagger) = f_{\text{VGP}}(P\,H\, P) = f_{\text{VGP}}(H)$.
\end{lemma}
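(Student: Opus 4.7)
The plan is to exploit the fact that $P=\prod_i X_i$ is just a permutation of computational basis states, so conjugating $H$ by $P$ amounts to relabeling vertices of $G_H$, which leaves every fundamental cycle's magnitude and phase intact. First I would dispose of the innocuous first equality: since $X_i^2=\mathds{1}$ and the $X_i$'s commute, $P$ is Hermitian with $P^2=\mathds{1}$, hence $P^\dagger=P$ and $PHP^\dagger=PHP$. Let $\pi:\{0,1\}^N\to\{0,1\}^N$ be the involution induced by $P$, so $P\ket{z}=\ket{\pi(z)}$.

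Next I would track what conjugation does to the PMR data. Writing $H=\sum_j D_j P_j$ and inserting $P^2=\mathds{1}$ between factors, I get $PHP=\sum_j \tilde D_j\tilde P_j$ with $\tilde D_j=PD_jP$ diagonal and $\tilde P_j=PP_jP$ a permutation with no fixed points (fixed points of $\tilde P_j$ are $\pi$-images of fixed points of $P_j$, hence empty). The critical identity
\begin{align}
\tilde P_{i_q}\cdots\tilde P_{i_1}=P\,(P_{i_q}\cdots P_{i_1})\,P
\end{align}
shows that a string evaluates to $\mathds{1}$ in the new representation iff it does in the old one; in particular, the set of fundamental generators and the largest cycle length $Q$ are identical for $H$ and $PHP$. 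The same sandwich trick applied to $\tilde{\mathcal{S}}_{\mathbf{i}_q}=P\mathcal{S}_{\mathbf{i}_q}P$ gives $\bra{z}\tilde{\mathcal{S}}_{\mathbf{i}_q}\ket{z}=\bra{\pi(z)}\mathcal{S}_{\mathbf{i}_q}\ket{\pi(z)}$, i.e.\ $\mathcal{D}_{(z,\tilde{\mathcal{S}}_{\mathbf{i}_q})}=\mathcal{D}_{(\pi(z),\mathcal{S}_{\mathbf{i}_q})}$ as complex numbers. Hence both the magnitude $|\mathcal{D}|$ and the geometric phase $\Phi$ entering Eq.~\eqref{eq:VGP_cost} are the same for corresponding cycles.

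To close the proof I would package this as a bijection: for each $q$, the map sending the fundamental cycle $\mathcal{C}_q^{(z)}$ of $G_{PHP}$ based at $\ket{z}$ with generator $\tilde{S}_{\mathbf{i}_q}$ to the fundamental cycle of $G_H$ based at $\ket{\pi(z)}$ with the same generator string $S_{\mathbf{i}_q}$ is a weight-preserving bijection on the index set of the double sum in Definition~\ref{def:VGP_ind}. Since $\pi$ is itself a bijection on computational basis states, summing over $z$ reshuffles but does not change the total, giving $f_{\text{VGP}}(PHP)=f_{\text{VGP}}(H)$.

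The argument is essentially a change-of-labels one, so there is no real analytic obstacle; the only point that demands care is the bookkeeping around the PMR decomposition, namely verifying that the conjugated operators $\tilde D_j\tilde P_j$ again satisfy the PMR axioms (diagonal part times fixed-point-free permutation, distinct $\tilde P_j$'s) and that the notion of ``fundamental'' cycle — which depends on the whole graph structure rather than on any one PMR term — is preserved under the vertex relabeling. Both follow from the fact that $\pi$ is a graph automorphism of the underlying off-diagonal structure, which I would state cleanly before invoking Definition~\ref{def:VGP_ind}.
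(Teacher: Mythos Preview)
Your proposal is correct and follows essentially the same approach as the paper: both arguments observe that $PHP$ inherits a PMR decomposition with $\tilde D_j=PD_jP$, $\tilde P_j=PP_jP$, use the telescoping of the inserted $P^2=\mathds{1}$ factors to obtain $\bra{z}\tilde{\mathcal{S}}_{\mathbf{i}_q}\ket{z}=\bra{P z}\mathcal{S}_{\mathbf{i}_q}\ket{P z}$, and then invoke the bijection $\ket{z}\mapsto\ket{Pz}$ on computational basis states to conclude that the sum in Definition~\ref{def:VGP_ind} is unchanged. Your write-up is in fact more careful than the paper's, explicitly disposing of the $P^\dagger=P$ equality and checking that the fundamental-cycle structure (not just the generator strings) survives the relabeling.
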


\begin{proof}
    Writing the transformed Hamiltonian in PMR form $PHP = \tilde{D}_0 + \tilde{H}_{\text{off.}}$, where $\tilde{D}_0 = P D_0 P$ is a transformed diagonal matrix from the original diagonal $D_0$ of $H$, and $\tilde{H}_{\text{off.}} = P\,\sum_i D_i P_i\, P$. Any off-diagonal fundamental cycle is of the form
    \begin{align}
        \bra{z} (PD_{i_1} P_{i_1}P) \, (PD_{i_2} P_{i_2}P)\ldots (PD_{i_q} P_{i_q}P)\ket{z} \notag \\
        =  \bra{z'} D_{i_1} P_{i_1}\, D_{i_2} P_{i_2}\ldots D_{i_q} P_{i_q}\ket{z'} \, .
    \end{align}
    where $\ket{z'} := P \ket{z}$. Thus, $P$ conjugation simply maps $\mathcal{D}^{(z)}_{\mathcal{C}_{q}}$ to $\mathcal{D}^{(z')}_{\mathcal{C}_{q}}$. This is nothing but relabeling our computational basis states, and since we are summing over all fundamental walks, relabeling does not change the value of $f_{\text{VGP}}(H)$. This concludes our proof.
\end{proof}

\begin{theorem}[No sign curing with Pauli operators]
    If $f_{\text{VGP}}(H) \neq 0$, there cannot be any Pauli operator $\mathcal{P}$, such that $f_{\text{VGP}}(\mathcal{P} H \mathcal{P}^\dagger) = 0$.
\end{theorem}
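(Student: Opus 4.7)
The plan is to reduce the theorem to the two preceding lemmas (Lemma~\ref{lem:diag_conj} and Lemma~\ref{lem:P_conj}) by factoring an arbitrary Pauli operator into a product of an $X$-type permutation and a $Z$-type diagonal unitary. Concretely, any Pauli operator on $N$ qubits can be written as
\begin{align}
    \mathcal{P} = \alpha \, \Bigl(\prod_{j} Z_j^{b_j}\Bigr) \Bigl(\prod_{j} X_j^{a_j}\Bigr) =: \alpha \, D \, P,
\end{align}
for some global phase $\alpha \in \{\pm 1, \pm i\}$, bit strings $a,b \in \{0,1\}^N$, a diagonal unitary $D$ (with $\pm 1$ entries), and a tensor product of $X$'s $P$. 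This follows from $Y_j = i X_j Z_j$ and the fact that on different sites the Pauli factors commute, so the $Z$'s can be collected to the left of the $X$'s at the cost of absorbing signs from the on-site $XZ$ anticommutations into $\alpha$.

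Next I would compute the conjugation. Since $D^\dagger = D$ and $P^\dagger = P$, and the global phase $\alpha$ cancels,
\begin{align}
    \mathcal{P} H \mathcal{P}^\dagger = D \, (P H P) \, D.
\end{align}
Now I would apply Lemma~\ref{lem:P_conj} to the inner conjugation, yielding $f_{\text{VGP}}(P H P) = f_{\text{VGP}}(H)$, and then apply Lemma~\ref{lem:diag_conj} to the outer conjugation by $D$, yielding $f_{\text{VGP}}(D (PHP) D) = f_{\text{VGP}}(PHP)$. Chaining the two gives
\begin{align}
    f_{\text{VGP}}(\mathcal{P} H \mathcal{P}^\dagger) = f_{\text{VGP}}(H),
\end{align}
so the hypothesis $f_{\text{VGP}}(H) \neq 0$ immediately implies $f_{\text{VGP}}(\mathcal{P} H \mathcal{P}^\dagger) \neq 0$, contradicting the assumption that a curing Pauli exists.

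There is essentially no hard step here; the content of the theorem is already packaged in the two lemmas, and the only care needed is the clean $\mathcal{P} = \alpha D P$ decomposition. The one subtlety worth explicitly checking is that the diagonal factor $D$ obtained from collecting $Z$'s is genuinely a diagonal unitary of the form required by Lemma~\ref{lem:diag_conj} (a diagonal matrix of unit-modulus phases), which it trivially is since its diagonal entries are $\pm 1$. Similarly one should note that $P$, being a tensor product of $X$'s, is a fixed-point-free permutation matrix of the type considered in Lemma~\ref{lem:P_conj} whenever $P \neq \mathds{1}$; in the degenerate case $P = \mathds{1}$ the entire Pauli is already diagonal and Lemma~\ref{lem:diag_conj} alone suffices.
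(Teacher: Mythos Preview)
Your proposal is correct and follows essentially the same approach as the paper: the paper's proof simply notes that any Pauli operator factors as $\mathcal{P} = \Phi P$ with $\Phi$ a $Z$-string up to a phase, and invokes Lemmas~\ref{lem:diag_conj} and~\ref{lem:P_conj}. You have merely spelled out the details of this factorization and the chaining of the two lemmas more explicitly than the paper does.
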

\begin{proof}
    This theorem follows immediately from Lemmas \ref{lem:diag_conj} and \ref{lem:P_conj} by noting that any Pauli operator $\mathcal{P} = \Phi P$, where $\Phi$ is a $Z$-string up to a phase.
\end{proof}

\subsection{Complexity of determining VGP}
\label{subsec:VGP_complexity}

\begin{theorem}[Determining VGP is hard for general physical Hamiltonians]
\label{thm:fVGP_hard}
    Given a geometrically local Hamiltonian $H$, determining if $f_{\text{VGP}}(H) = 0$, requires exponential steps in $N$ in the worst case. This implies that determining if a Hamiltonian is VGP is an exponentially difficult task.
\end{theorem}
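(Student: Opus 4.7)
The plan is to establish two things in sequence: first, that for generic local Hamiltonians the sum defining $f_{\text{VGP}}(H)$ contains exponentially many nontrivial terms, and second, that no structural shortcut can bypass this enumeration, which I would show via a reduction from a known hard problem. The non-negativity of each summand in Eq.~\eqref{eq:VGP_cost} makes the decision version particularly clean, since $f_{\text{VGP}}(H)=0$ if and only if \emph{every} fundamental cycle is VGP-consistent, so the question reduces to detecting the existence of a single ``bad'' cycle.

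First, I would exhibit a family of geometrically local spin-$1/2$ Hamiltonians on $N$ sites whose computational state graph $G_H$ contains $\exp(\Omega(N))$ fundamental cycles. A natural candidate is a model with $O(N)$ non-commuting local permutation terms (for instance, a generic XYZ-type Hamiltonian with arbitrary complex off-diagonal couplings) on a two-dimensional lattice. By counting closed walks of length at most $Q=O(N)$ returning to a fixed basis state $\ket{z}$, one argues via elementary combinatorics that the number of inequivalent fundamental generators grows exponentially with $N$. Combined with Theorem~\ref{thm:Q=O(N)}, this shows that even naive term-by-term evaluation of Eq.~\eqref{eq:VGP_cost} takes exponential time.

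The harder step is to rule out clever algorithms that might circumvent the enumeration. For this I would reduce from an $\mathrm{NP}$-hard problem such as 3-SAT to the decision problem $f_{\text{VGP}}(H)\stackrel{?}{=}0$. Given an instance $\varphi$ on $N$ variables, I would construct a geometrically local Hamiltonian $H_\varphi$ whose diagonal operator $D_0$ encodes phase factors tied to truth assignments, and whose off-diagonal permutation terms are chosen so that the fundamental cycles of $G_{H_\varphi}$ are in one-to-one correspondence with clause evaluations. The phases would be designed so that all cycle phases $\Phi^{(z)}_{\mathcal{C}_q}$ vanish mod $2\pi$ if and only if no satisfying assignment exists, making $f_{\text{VGP}}(H_\varphi)=0$ iff $\varphi$ is unsatisfiable.

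The main obstacle I anticipate is this reduction: cycles on $G_H$ are not freely chosen combinatorial objects but are dictated by the group-theoretic structure of the permutations $\{P_i\}$, so aligning fundamental cycles with SAT clauses generally requires gadget constructions with ancilla spins to enforce the desired cycle combinatorics while respecting geometric locality. Verifying that such gadgets exist with $\mathrm{poly}(N)$ overhead, and that the encoded phases propagate correctly through longer concatenated walks without spurious cancellations (in particular, so that length-$q$ fundamental cycles do not accidentally gain trivial phases from length-$2$ back-and-forths), is the technical crux of the argument and is what would occupy most of the actual proof.
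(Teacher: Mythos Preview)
Your first step---the counting argument showing that $G_H$ carries exponentially many fundamental-cycle weights to be checked---is essentially the entirety of the paper's proof. The paper's argument is slightly different in flavor: rather than counting distinct cycle \emph{generators}, it observes that for a fixed identity-equivalent permutation string $S_{\mathbf{i}_Q}$ of length $Q=O(N)$, each diagonal $D_{i_k}$ can take more than one value depending on the basis state it hits, so the product $\mathcal{D}_{(z,\mathcal{S}_{\mathbf{i}_Q})}$ assumes exponentially many distinct values as $\ket{z}$ ranges over the computational basis. Either way the conclusion is the same: brute-force verification of Eq.~\eqref{eq:VGP_angle} for every fundamental cycle is exponential in $N$.

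Your second step---the reduction from 3-SAT---is not in the paper at all. The paper stops at the counting argument and does not attempt to rule out algorithms that might certify $f_{\text{VGP}}(H)=0$ without enumerating cycles. In that sense the theorem as stated and proved in the paper is weaker than a genuine complexity-theoretic lower bound: it establishes that the \emph{naive} approach is exponential, not that the decision problem itself is intractable. Your instinct that the counting alone is insufficient for the stronger claim is correct, and the reduction you sketch (with the gadget-construction caveats you already flag) would be a real strengthening of the paper's result rather than a reproduction of it. If your goal is only to match the paper, the first paragraph of your proposal suffices; if you carry out the reduction, you will have proved more than the paper does.
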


\begin{proof}
    Eq. \ref{eq:VGP_cost} implies that in order to determine whether a Hamiltonian is VGP, i.e. $f_{\text{VGP}}(H) = 0$, one needs to determine whether \textit{all} fundamental cycles of length $q$ on the computational state graph have a complex phase of $q\pi$ (modulo $2\pi$). 
    We note that evaluating the weights of fundamental cycles of length $Q$ requires considering all possible combinations of diagonal components $D_{i_k}$ for each $P_{i_k}$ that will form an identity equivalent string of length $Q$. Determining if $f_{\text{VGP}}(H)=0$ requires evaluating fundamental cycles of length up to $Q = O(N)$. If each $D_{i_k}$ can have more than a single value acting on the computational basis states, i.e. if there are more than a single possible value for $d^{(z)}_{i_k}$ for $D_{i_k} \ket{z} = d^{(z)}_{i_k}\ket{z}$, then there will be exponentially many possible different length $Q=O(N)$ weights that one must evaluate.
\end{proof}

Below, we discuss a specific class of sparse Hamiltonians for which determining VGPness is $O(cN)$, where $c$, an $N$-independent constant, is a function of the number of Pauli-$Z$ strings required to express the diagonal matrices for each PMR input.

\begin{theorem}[Efficient VGP classification for certain sparse Hamiltonians]
    \label{thm:DX_VGP}
    Consider a \emph{geometrically local} Hamiltonian of the form $H = \sum_i D_i X_i$, with $D_i$ being a general sum of Pauli-$Z$ strings, with $k=O(1)$ radius of support. One can efficiently determine if $H$ is VGP in $O(N poly(2^{k}) )$ time.
\end{theorem}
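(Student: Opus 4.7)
The plan is to leverage two structural features of \(H=\sum_i D_i X_i\). First, the permutations \(P_i=X_i\) pairwise commute, so the computational state graph \(G_H\) embeds into the \(N\)-dimensional Boolean hypercube, whose cycle space is generated by \(4\)-cycles of the form \(X_iX_jX_iX_j=\mathds{1}\). Second, the \(k\)-local support of each \(D_i\) caps the number of pairs of sites that can produce a non-trivially-signed \(4\)-cycle at \(O(N)\). I would check VGP by enumerating only these \(O(N)\) local \(4\)-cycles and verifying non-negativity of their weights.

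The first task is to reduce the full VGP check to checking only \(4\)-cycles. Because each \(D_i\) is a real sum of Pauli-\(Z\) strings and Hermiticity of \(H\) forces \(D_i\) to commute with \(X_i\) (i.e.\ \(D_i\) contains no \(Z_i\) factor), every off-diagonal element \(H_{z,z'}=d_{z'}^{(i)}\), with \(z'\) the bitstring obtained from \(z\) by flipping bit \(i\), is real. Consequently every cycle weight \(\mathcal{D}^{(z)}_{\mathcal{C}_q}\) is real, and because the hypercube is bipartite \(q\) is always even, so the VGP condition \(\Phi+q\pi\equiv 0\bmod 2\pi\) collapses to \(\mathcal{D}^{(z)}_{\mathcal{C}_q}\ge 0\). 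Writing any cycle as an \(\mathbb{F}_2\)-sum (symmetric difference of edge sets) of \(4\)-cycles and noting that shared edges appear with squared weights in the product expansion, the sign of any cycle equals the product of the signs of its \(4\)-cycle constituents. Hence all \(4\)-cycle weights being non-negative forces every fundamental cycle---including induced \(6\)-, \(8\)-, \ldots cycles on the hypercube---to have non-negative weight, so VGP holds.

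Next I would compute the \(4\)-cycle weight explicitly and exploit locality. Using that each \(D_i\) ignores spin \(i\), a direct computation gives
\[
\mathcal{D}^{(z)}_{\mathcal{C}_{i,j}} = \bigl(d_z^{(i)}\,d_{z\oplus e_j}^{(i)}\bigr)\bigl(d_{z\oplus e_i}^{(j)}\,d_z^{(j)}\bigr),
\]
where \(e_i\) denotes the unit vector along site \(i\). If sites \(i\) and \(j\) lie at lattice distance greater than \(2k\), the supports of \(D_i\) and \(D_j\) are disjoint, so \(d_{z\oplus e_j}^{(i)}=d_z^{(i)}\) and \(d_{z\oplus e_i}^{(j)}=d_z^{(j)}\), making the weight \((d_z^{(i)})^2(d_z^{(j)})^2\ge 0\) automatically. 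Only pairs at lattice distance \(\le 2k\) require explicit verification, and geometric locality in a fixed-dimensional lattice bounds their number by \(O(N)\). For each such pair the weight depends on only the \(O(k)\) spins in \(\mathrm{supp}(D_i)\cup\mathrm{supp}(D_j)\), so enumerating all \(2^{O(k)}\) configurations and evaluating the four diagonal entries---each a \(\mathrm{poly}(2^k)\)-size sum of Pauli-\(Z\) contributions---takes \(\mathrm{poly}(2^k)\) time per pair. If any weight comes out negative, \(H\) is reported non-VGP; otherwise VGP holds by the preceding reduction. Total cost: \(O(N\,\mathrm{poly}(2^k))\), as claimed.

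The main obstacle is the reduction step: the paper defines \emph{fundamental cycle} geometrically as an induced (chord-free) closed walk, and induced cycles of arbitrary even length exist on the hypercube, so a priori one might worry these impose constraints independent of \(4\)-cycles. The crux is that, with real off-diagonals, the sign-of-cycle-weight map is an \(\mathbb{F}_2\)-homomorphism from the cycle space of \(G_H\) to \(\{\pm 1\}\), and any generating set---such as the hypercube's \(4\)-cycles---determines it completely. I would formalize this via a short signed-graph lemma, and handle the corner case of zero-weight edges (which may disconnect \(G_H\) into components but cannot create extra VGP violations).
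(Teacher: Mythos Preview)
Your approach is correct and takes a genuinely different route from the paper. The paper argues operationally: it identifies that the only way a sign-problematic weight can arise is when a pair of $D_iX_i$ is interleaved with some $D_lX_l$ satisfying $[D_i,X_l]\neq 0$, and then directly enumerates all such ``local'' cycles of length up to $2q$ with $q\le k$, checking the condition $\arg\bigl[\prod_j (d^{(j)}_{\vec\alpha_j})^* d^{(j)}_{\vec\alpha'_j}\bigr]\equiv 0\bmod 2\pi$ for each. You instead exploit the algebraic structure: since the $X_i$ commute, $G_H$ sits inside the Boolean hypercube whose integral cycle space is generated by squares, and the sign (or phase) of a cycle weight is a homomorphism on that cycle space, so checking only $4$-cycles suffices. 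Your argument is more conceptual and strictly reduces the number of checks (length $4$ rather than up to $2k$); the paper's direct enumeration is less tidy but makes the complex-coefficient case explicit from the outset.

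One caveat: your assumption that ``each $D_i$ is a real sum of Pauli-$Z$ strings and Hermiticity forces $[D_i,X_i]=0$'' is narrower than what the paper intends. The paper's own proof works with $\arg(\cdot)$ rather than signs, and its subsequent ``hard to stoquastize'' example takes $D_i = a^{(i)}\mathds{1} + i b^{(i)} Z_i + c^{(i)}\prod_{\langle j,i\rangle}Z_j$, which has a purely imaginary $Z_i$ component and hence complex off-diagonal entries. Your $\mathbb{F}_2$-sign homomorphism should therefore be upgraded to the phase homomorphism $Z_1(G_H;\mathbb{Z})\to\mathbb{R}/2\pi\mathbb{Z}$, which is just as well-behaved on the $4$-cycle generators and requires no change to the locality or counting arguments. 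The zero-weight corner case you flag is real: when some $d_z^{(i)}=0$, the induced subgraph $G_H$ may admit chord-free cycles not expressible as symmetric differences of $4$-cycles \emph{within} $G_H$, so ``cannot create extra VGP violations'' needs more care than you suggest---though the paper's proof does not address this either.
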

\begin{proof}
    For Hamiltonians like this, the only cycles come from repetition of $X_j$ even times. When $D_j X_j$ appears in a cycle next to itself like the following
    \begin{align}
        \bra{z} (\Pi_{i_k} &D_{i_k} X_{i_k}) D_j X_j D_j X_j (\Pi_{i_{k'}} D_{i_{k'}} X_{i_{k'}}) \ket{z} \notag \\
        &= (d_{\vec{k}}) \, |d^{(i)}_{z_{(i_1' , i_2' , \ldots, i_k')}}|^2 (d_{\vec{k'}}) \, .
    \end{align}
    Here, $d_{\vec{k'}}$ denotes the weight of the diagonal terms coming from the rest of the diagonal and permutations. The contribution of a pair of $D_j X_j$ to the weight of the cycle is simply a real positive number. Hence, a term like this does not contribute to the sign problem as it maintains the condition given by Eq.~\eqref{eq:VGP_angle}.
    
    Next, consider a pair of $D_j X_j$ that is separated by at least another diagonal and permutation, namely $D_l X_l$. In this case, one obtains 
    \begin{align}
        \bra{z} &(\Pi_{i_k} D_{i_k} X_{i_k}) D_j X_j D_l X_l D_j X_j (\Pi_{i_{k'}} D_{i_{k'}} X_{i_{k'}}) \ket{z} \notag \\
        &= (d_{\vec{k} , l}) \, d^{(i)}_{z_{(i_1' , i_2' , \ldots, i_k' , l)}} (d^{(l)}_{z_{(i_1' , i_2' , \ldots, i_k' , l)}})^* (d^{(i)}_{z_{(i_1' , i_2' , \ldots, i_k')}})^*  (d_{\vec{k'}}) \, .
    \end{align}
    Thus, instances where a pair of $D_i X_i$ is separated by at least another diagonal and permutation $D_l X_l$, such that the support of the operator $D_i$ includes the $l$-th spin, i.e. $[D_i , X_l] \neq 0$, are the only way in which sign problematic weights could arise.
    
    Additionally, a general expression for a diagonal matrix $D_j$ with finite radius of support $k=O(1)$, can be written as follows
    \begin{align}
    \label{eq:Dj_terms}
        D_j \;=\; \sum_{S \subseteq \mathrm{supp}(D_j)} 
        h^{(j)}_{S} \prod_{l \in S} Z_l \,.
    \end{align}
    We can have linear combination of up to $k$-local strings of $Z$ since the radius of support of the diagonal operator is up to $k$ neighboring spins. $h^{(j)}_{S}$ is the coefficient for the linear combination for each $|S|$-local term, and there are at most $2^k$ such coefficients.
    
    This diagonal matrix, has at most $O(2^{k})$ eigenvalues (due to the finite $k$-spin support of the diagonal matrix). Hence, there are $O(2^k)$ different diagonal terms $d^{(j)}_{\vec{\alpha}}$, and one can write out the eigenvalues of $D_j$ as
    \begin{align}
    \label{eq:Dj_eigens}
    d^{(j)}_{\vec{\alpha}}
    = \sum_{S \subseteq \mathrm{supp}(D_j)} 
    h^{(j)}_{S} \prod_{\ell \in S} \alpha_\ell \,.
    \end{align}

    where $\alpha_{j_{n}} \in \{-1 , 1\}$. 
    We reiterate again that the weight for a single pair on a state $\ket{z}$ can be written as follows
    \begin{align}
        \bra{z} D_i X_i D_i X_i \ket{z} = (d^{(i)}_{\vec{\alpha}})^* d^{(i)}_{\vec{\alpha}} \geq 0 \,.
    \end{align}
    Furthermore, adopting the notation of Eq.~\eqref{eq:Dj_eigens}, the weight for two pairs $D_i X_i$ and $D_j X_j$ can be written as
    \begin{align}
        \bra{z} D_i X_i D_j X_j D_i X_i D_j X_j \ket{z} = (d^{(i)}_{\vec{\alpha}_i})^* (d^{(j)}_{\vec{\alpha}_j})^*d^{(i)}_{\vec{\alpha}'_i} d^{(j)}_{\vec{\alpha}'_j}\,.
    \end{align}
    However, if $[D_j , X_i] = [D_i , X_j] = 0$, $\vec{\alpha}'_j = \vec{\alpha}_j$ and $\vec{\alpha}'_i = \vec{\alpha}_i$, so the considered cycle is free of the sign problem, where $[\, ,\, ]$ denotes commutation operation between two operators.

    Since, every $D_i$ can have support over at most $k$ other local spins, in order to determine if a Hamiltonian is VGP, we need to verify $O(k)$ statements for each $D_i X_i$, up to cycles of length $2q$ and determine if 
    \begin{align}
    \label{eq:D_argcheck}
    \arg\Bigg[\Big(\Pi_{j=1}^{q-1} (d^{(j)}_{\vec{\alpha}_j})^* d^{(j)}_{\vec{\alpha}'_j}) \Big)(d^{(i)}_{\vec{\alpha}'_i})^*d^{(i)}_{\vec{\alpha}_i}  \Bigg] \equiv 0 \mod 2\pi \, .
    \end{align}
    Note that since the length of all fundamental cycles is even, we have dropped the factor $2q\pi$ from Eq.~\eqref{eq:VGP_angle} from the mod statement above, as it is a multiple of $2\pi$.
    
    Of course, the largest $q$ is determined by the number of $D_j$s that have common support with $X_i$, which is at most $k$. 
    Different orderings of the operators also produce different values of $\vec{\alpha}'_j$ for each $D_j$, but at most we have $O(poly(2^{k}))$ statements to verify for each $D_i X_i$ \footnote{This is because $D$ has support over at most $k$ local spins, so there are $k$ components to a vector of coefficients $\vec{\alpha}$, as given by Eq.~\eqref{eq:Dj_eigens}. By considering combination of all possible $\vec{\alpha}_j$ and their flipped counterparts $\vec{\alpha}'_j$, there are at most $2^{k}(2^k-1)$ pairings for each $d$ term. Depending on the order of the graph, and the type of connectivity of the lattice, there should be $O(poly(2^{k}))$ statements with different $\vec{\alpha}$ and their flipped counterparts $\vec{\alpha}'$.}.

    Since there are $O(N)$ terms in the Hamiltonian, we have to perform $O(N poly(2^{k}))$ statement verifications. This concludes our proof for the efficiency of verification of VGPness for Hamiltonians of this class.
\end{proof}

\begin{corollary}
    Consider Hamiltonian of the form $H = \sum_i D_i P_i$, with $D_i$ being a general sum of Pauli-$Z$ strings, with an $k=O(1)$ radius of support. If there are no sets of existing $\{P_j\}$ such that $\Pi_j P_j = \mathds{1}$, $H$ can be determined to be VGP in $O(N poly(2^{k}))$ steps.
\end{corollary}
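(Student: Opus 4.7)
The plan is to show that the structural hypothesis on the $\{P_j\}$ collapses the set of fundamental cycles down to length-$2$ pair cancellations, at which point the VGP conclusion is immediate and the stated time bound is simply the cost of reading the input.

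First I would unpack the hypothesis. Writing $H = \sum_i D_i P_i$ in Hermitian form, each off-diagonal $P_i$ must have its inverse $P_i^{-1}$ appear in the sum (say as $P_{i'}$ with partner $D_{i'} = P_i^{-1} D_i^{\dagger} P_i$). Reading the condition ``no set of existing $\{P_j\}$ with $\Pi_j P_j = \mathds{1}$'' as ruling out any non-trivial relation among the generators, every identity-equivalent string $S_{\mathbf{i}_q}$ must reduce by iterated cancellation of adjacent $P_{i_k} P_{i_{k+1}} = P_{i_k} P_{i_k}^{-1}$ pairs to the empty word.

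Second, I would feed this into the definition of a fundamental cycle. Any identity-equivalent word of length $q > 2$ contains such an adjacent cancellable pair, and those two steps already form a length-$2$ closed walk at the intermediate vertex $\ket{z_{(\ldots,i_{k-1})}}$. By the non-decomposability clause in the definition of a fundamental cycle, the larger walk is then not fundamental: it factors into the inner length-$2$ closed walk and the remaining outer closed walk. Hence the only fundamental cycles on $G_H$ are the length-$2$ ``back-and-forth'' walks $\ket{z}\to P_i\ket{z}\to P_i^{-1}P_i\ket{z}=\ket{z}$, so $Q = 2$ for this class.

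Third, I would invoke the observation already recorded in Definition \ref{def:VGP_ind}: length-$2$ fundamental cycles carry weight
\begin{align}
\bra{z} D_{i'} P_{i'} D_i P_i \ket{z} = (d_i^{(z_i)})^{*} \, d_i^{(z_i)} = |d_i^{(z_i)}|^2 \geq 0,
\end{align}
where Hermiticity of $H$ supplies the complex conjugate in the first factor. Hence every fundamental cycle satisfies Eq.~\eqref{eq:VGP_angle} trivially and the sum in Eq.~\eqref{eq:VGP_cost}, which starts at $q=3$, is vacuously zero; therefore $f_{\text{VGP}}(H) = 0$ and $H$ is VGP. The verification cost is bounded by the cost of (i) checking the hypothesis on the generator list and (ii) confirming each $D_i$ is a sum of Pauli-$Z$ strings of support $k$. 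Both tasks run over the $O(N)$ terms of $H$ and touch at most $O(\mathrm{poly}(2^k))$ coefficients per term, giving the claimed $O(N \, \mathrm{poly}(2^k))$ bound.

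The main obstacle I expect is making the reduction from the abstract ``no non-trivial relation'' hypothesis to the graph-theoretic statement ``every fundamental cycle has length $2$'' airtight. I need to be careful that a word which is trivially reducible in the free-group sense actually decomposes at the level of closed walks on $G_H$ starting at concrete vertices (rather than merely abstractly), and that one cannot smuggle in a fundamental cycle via coincidental equalities of intermediate states $\ket{z_{(\ldots)}}$. Once that structural step is locked down, both the VGP conclusion and the complexity count become routine corollaries of Theorem~\ref{thm:DX_VGP}'s machinery.
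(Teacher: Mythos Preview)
Your reading of the hypothesis is too strong, and this is exactly where the argument breaks. You interpret ``no set of existing $\{P_j\}$ with $\Pi_j P_j = \mathds{1}$'' as ruling out every non-trivial word relation, i.e., as asserting that the $P_j$'s freely generate. But in the spin-$1/2$ setting the $P_j$'s are products of Pauli-$X$ operators and therefore commute, so relations such as $P_i P_j P_i P_j = \mathds{1}$ are unavoidable whenever there are at least two off-diagonal terms. Under your interpretation the Corollary would be essentially vacuous for the systems the paper cares about; worse, Theorem~\ref{thm:DX_VGP} itself (with $P_i = X_i$) would fail the hypothesis you impose, yet that is precisely the result this Corollary is meant to generalize. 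The intended reading is that no product of \emph{distinct} $P_j$'s equals the identity (equivalently, the $P_j$'s are independent in $\mathbb{Z}_2^N$). That still permits length-$4$ and longer fundamental cycles of the form $P_i P_j P_i P_j$: these walks are chordless on $G_H$ (there is no $X_i X_j$ term to provide a diagonal chord) and carry weights $(d^{(i)}_{\vec\alpha_i})^*(d^{(j)}_{\vec\alpha_j})^* d^{(i)}_{\vec\alpha'_i} d^{(j)}_{\vec\alpha'_j}$ whose sign is not automatic.

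Consequently the Corollary does \emph{not} assert that $H$ is always VGP; it asserts that one can \emph{decide} VGP in $O(N\,\mathrm{poly}(2^k))$ time, and the decision can go either way. The paper's proof simply re-runs the argument of Theorem~\ref{thm:DX_VGP}: because no subset of distinct $P_j$'s multiplies to $\mathds{1}$, every closed walk still decomposes into pairs of each $P_j$, and since each $D_i$ has support radius $k$, only the $O(\mathrm{poly}(2^k))$ sign conditions of Eq.~\eqref{eq:D_argcheck} per term need to be verified. The fact that a general $P_i$ may flip several spins at once rather than a single site can only \emph{reduce} the number of distinct $\vec\alpha \to \vec\alpha'$ flips one must examine, so the count remains within the stated bound. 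Your collapse to length-$2$ cycles would require the free-group property, which is neither assumed nor achievable here.
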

\begin{proof}
    The proof follows similarly to the proof of theorem \ref{thm:DX_VGP}. In this instance, since $P_i$ might be of different form than a single Pauli-$X$ on the $i$-th spin, checking whether Eq. (\ref{eq:D_argcheck}) is satisfied for each diagonal might require less steps, since not every single spin flip need be considered. So, in general the number of verifications is still $O(poly(2^{k}))$. Thus, VGP-ness of $H$ of this type, can also be determined in $O(N poly(2^{k}))$.
\end{proof}

\subsection{Other measures}
\label{sec:measures}
In this section, we will define two functions of the Hamiltonian $f:\operatorname{Herm}(\mathds{D}) \mapsto \mathds{R}^+$, that `measures' the severity of the sign problem. Even though these metrics might seem like another way to express the average sign, they provide a convenient way to make scaling arguments and address the severity of the sign problem under change of basis, as we will discuss in Section~\ref{sec:avg_sign_stats}.
\subsubsection{Off-diagonal absolutes (bosonization)}
Instead of using $f_{\text{VGP}}(H)$, we can use a similar function that sums up all the weights of the fundamental and non-fundamental cycles using the following function
\begin{align}
    \label{eq:f_eta}
    f_\eta(H) = \tr(e^{\eta |H_{\text{off}}|}) - \tr(e^{-\eta H_{\text{off}}}) \, .
\end{align}
The parameter $\eta \geq 0$ is a real number, that plays the role of the inverse temperature. The specific value of $\eta$ is not of importance: we have only kept it in order to show how $f_\eta$ scales as a function of $\eta$ (when $H$ is not VGP), in order to make inferences on how the average sign decays as a function of an inverse temperature like parameter.
The motivation for defining this as a metric in our setting was to have a numerically convenient metric to determine whether a Hamiltonian is VGP or not, in our numerics, instead of $f_{\text{VGP}}$. Even though $f_{\text{VGP}}$ is always less computationally expensive, $f_\eta$ is readily available in every programming environment. 

Now we will explain how $f_\eta$ and $f_{\text{VGP}}$ are related. Upon expanding the exponential and using the PMR formalism to express the traces as sums over weights of closed walks, we note that Eq.~\eqref{eq:f_eta} results in a similar expression as Eq.~\eqref{eq:VGP_cost}, but with division of $q!$ for closed walks of length $q$. In addition, ~\eqref{eq:f_eta} accounts for all cycles of any length (i.e. $Q \rightarrow \infty$), whereas in order to determine whether a Hamiltonian is truly VGP, one needs to only consider $Q=O(N)$. 

Since $f_\eta(H) = 0$ implies that \emph{all} closed cycles (fundamental and non-fundamental) are VGP, it should be noted that $f_\eta(H) = 0$, if and only if $f_{\text{VGP}}(H) = 0$. In other words, if the fundamental cycles of graph representation of $H$ are all VGP, \emph{all} cycles of the graph are VGP, and vice versa.

It is important to note that stoquasticity requires $\Theta(\mathds{D}^2)$ constraints on all the off-diagonal elements of the Hamiltonian. However, $f_\eta(H)$ does not impose such strict restrictions as one can find instances of non-stoquastic Hamiltonians that satisfy $f_\eta(H) = 0$, as we will see in Section~\ref{sec:VGPorStoq}.

\subsubsection{A measure of off-diagonal interference}

Define $\mathcal{M} : \operatorname{Herm}(\mathds{D}) \mapsto \operatorname{Herm}(\mathds{D})$ as a self-map on the set of $\mathds{D} \times \mathds{D}$ Hermitian matrices by $\mathcal{M}(X) := |X_{\text{off}}| - \text{diag}(X)$. Here, $X_\text{off} := X - \text{diag}(X)$ and $| \cdot |$ denotes entry-wise absolute value. That is, $|X_\text{off}|$ is the entry-wise absolute value of the off-diagonal of a given Hermitian matrix $X$. Next, define $\mathcal{F}_X(U) : \operatorname{Herm}(\mathds{D}) \mapsto \mathbb{R}$ by $\mathcal{F}_X(U) := \tr[e^{\mathcal{M}(UXU^\dagger)}]$. It was brought to our attention that a previous attempt was made to consider the spectrum of $\mathcal{M}(H)$ in~\cite{Crosson_2020}. Even though the attempt and motivation in the paper by Crosson et al. are quite similar to ours, we are taking a slightly different angle in that we are studying the expectation value of a VGP inducing transformation via random unitaries and the concentration of sign curing/mitigating unitaries in the random unitary space.

Consider a physical Hamiltonian $H$ in a given basis as before.
Notice that if we cast into PMR form $\mathcal{M}(H) = -D_0 + \sum_{j=1}^{M'} |D_j|P_j$, we obtain an expansion with the same permutation operators as in the original PMR decomposition of $H$, given by Eq.~\eqref{eq:PMR}. Hence the computational state graphs $\mathcal{M}(H)$ and $H$ share the same closed walks $\{S_{\textbf{i}_q}\}$. If we expand $\mathcal{F}_H(\mathds{1})$ as in equation~\eqref{eq:z}, we can see that for any closed walk $S_{\textbf{i}_q}$ with weight $\mathcal{D}_{(z, S_{\textbf{i}_q})}$ in~\eqref{eq:z}, the closed walk has weight $|\mathcal{D}_{(z, S_{\textbf{i}_q})}|$, analogous to the expansion of $f_\eta(H)$ (a consequence of taking the entry-wise absolute value of the off-diagonal). Also see that the divided difference terms in the summand of Eq.~\eqref{eq:SSE3} remain the same.

We also highlight that $\frac{Z(H)}{\mathcal{F}_H(\mathds{1})}$ is precisely the average sign of the QMC defined in~\eqref{eq:avg_sign_def}, when the inverse temperature $\beta$ is $1$. This definition of $\mathcal{F}_H$---while perhaps not computationally tractable for large dimensions--- allows us to gain insight into the mean behavior of the average sign in section ~\ref{sec:avg_sign_stats} across different basis transformations. This insight allows us to formalize the notion that a random search over all bases to optimize the average sign is hopeless.

\section{Scaling of \texorpdfstring{$f_{\text{VGP}}$}{f VGP} and sign-problem severity}
\label{sec:VGPscaling_signseverity}
We would like to highlight how to quantify the severity of the sign problem through scaling analysis. For this section, we will focus on the behavior of $f_{\eta}(H)$.

So far, we have established that if any of the measures specified in section \ref{sec:measures} are strictly zero, then the system is sign-problem free. But, what if $f_{\eta}(H)$ is not zero? Is there a regime in which even though $f_{\eta}(H)$ is not zero, the sign-problem is not severe and will \textit{not} plague us when scaling the size of a system or inverse temperature $\beta$?

If we expand $f_{\eta}(H)$ using our PMR formalism, and adopt the same notation as in Eq. (\ref{eq:VGP_cost}), we get
\begin{align}
    f_{\eta}(H) = \sum_z \sum_{q=3}^\infty \sum_{\mathcal{S}_{{\bf i}_q}} \frac{\eta^q}{q!} |\mathcal{D}_{(z , \mathcal{S}_{{\bf i}_q})}|\left(1 - \cos(\Phi^{(z)}_{\mathcal{S}_{{\bf i}_q}})\right) \, .
    \label{eq:fVGP_expansion}
\end{align}

As highlighted in Eq.~\eqref{eq:VGP_angle}, a Hamiltonian is sign-problem free if and only if $\Phi^{(z)}_{\mathcal{S}_{{\bf i}_q}} + q\pi \equiv 0 \mod 2\pi$ for all closed paths $\mathcal{S}_{{\bf i}_q}$. To discuss how $f_{\eta}(H)$ scales with $N$ and $\eta$, we consider the term without the $\left(1 - \cos(\Phi^{(z)}_{\mathcal{C}_{{\bf i}_q}})\right)$ factor, i.e. $\tr(e^{\eta |H_{\text{off}}|}) = \sum_z \sum_{q=3}^\infty \sum_{\mathcal{S}_{{\bf i}_q}} \frac{\eta^q}{q!} \left| \mathcal{D}_{(z , \mathcal{S}_{{\bf i}_q})} \right|$. 
The scaling of this term depends on the number of non-zero off-diagonal matrix elements. In most physically local Hamiltonians, where the number interactions scale as $O(N)$\footnote{Even for long-range interactions scaling as $O(N^2)$, the dominant contributions often still come from a sparse subset of local terms, leading to an effective $O(N)$ scaling. However, there are glassy systems in which the number of interactions are $O(N^2)$ or in general $O(poly(N))$, in which case $\tr(e^{\eta |H_{\text{off}}|})$ scales as $O(\exp(\eta poly(N)))$.}

In order to fully simulate a Hamiltonian, we must be able to simulate everything up to ground state properties, which requires inverse temperatures $\beta$ that are much smaller than the energy gap between the ground and the first excited state. For most physically interesting Hamiltonians, this energy gap depends on $N$ \cite{Albash_2018, Schaller_2008}. However, let us focus on the scaling of $f_{\eta}(H)$ as the two main parameters: system size $N$ and inverse temperature $\beta$. 

\subsection{System size}
\label{subsec:scaling_N}
An important motivation to understand the scaling behavior of the average sign of QMC is to ascertain the ability to study thermodynamic behavior. Here, we have constructed a toy model to illustrate the scaling behavior of average sign from the VGP point of view.
We describe how the scaling of average sign as a function of system size can be determined by the ratio of the VGP vs non-VGP fundamental cycles. 

Consider the equal weight Heisenberg model on a ladder triangular lattice with coefficient $-1$ written as follows
\begin{align}
    H_{\text{Heis.}} &= -\sum_{\langle i ,j \rangle \in G^{\rightarrow}_{\Delta}} (\vec{S}_i \cdot \vec{S}_j) 
\end{align}
where we have used the notation $G^{\rightarrow}_{\Delta}$ to indicate a triangular connection between spins, extended on a ladder (quasi-1$D$). This Hamiltonian has no sign problem. 

Now, let us turn the coefficient of the diagonal connections from $-1$ to $+1$. Now, we are certainly adding fundamental cycles that are non-VGP, i.e. the complex phase produced by their diagonal weights violates the VGP condition given by Eq.\eqref{eq:VGP_angle}. These edges can be thought of as \emph{VGP defects}, where the addition of a single such edge would produce numerous non-VGP fundamental cycles. We will use such defect edges to construct a non-VGP Hamiltonian where we have control over the number of non-VGP cycles produced by positioning defect edges on the ladder lattice. Thus we will consider analyzing the scaling of the average sign for $H^{(\text{d})}_{\text{Heis.}}$ defined as follows  
\begin{align}
    \label{eq:Hamheis_defect}
    H^{(\text{d})}_{\text{Heis.}} &= H_{\text{Heis.}}+ H_d \\ 
    H_{\text{d}} &= 2 \sum_{\langle i , j \rangle \in \text{d}} (\vec{S}_i \cdot \vec{S}_j) \, .
\end{align}
Here $\text{d}$ defines a set of diagonal `defect' edges that create sign-problematic weights.

\begin{figure}[H]
    \centering
    \includegraphics[width=7cm]{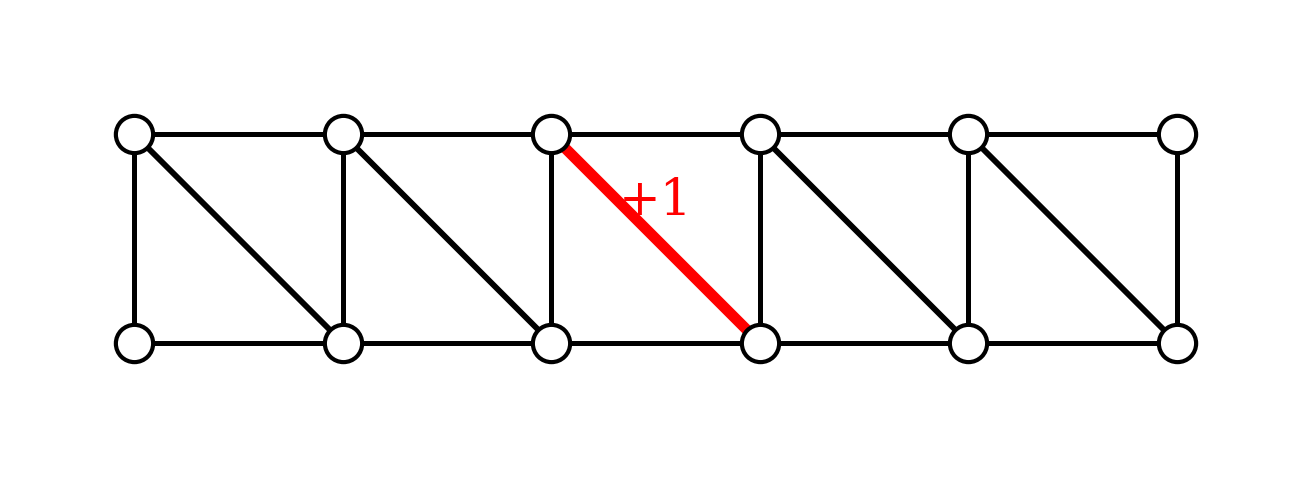}
    \caption{Figure of the ladder triangular lattice with a \emph{defect}. The defect edge is highlighted in the figure.}
    \label{fig:placeholder}
\end{figure}

We will consider two cases: a single defect edge $N_d = 1$ in the middle of the ladder, and a case where two defect edges $N_d = 2$ spread out evenly from the two ends of the ladder. What we will show is that adding a few constant number of defects along the chain does not change the average sign as a function of $N$. We also illustrate how to estimate the average sign by counting the number of VGP cycles. This of course works only if different cycles have almost equal magnitude of weights.

Placing a defect in the middle of the lattice, will produce many non-VGP fundamental cycles. Let us consider the case where $\beta$ is extremely low so that the mean cycle length ($\langle q \rangle_{\text{QMC}}$) sampled in QMC calculations is less than $3$. In this instance, the partition function value is determined by the value of the weights of all length-$2$ cycles. And since length-$2$ cycles are always VGP, the average sign should be almost $1$. We illustrate this with the following plot of the average sign ($\langle \text{sgn}(W) \rangle$) vs $N$, for $\beta = 0.1$. 
\begin{figure}[H]
    \centering
    \includegraphics[width=6cm]{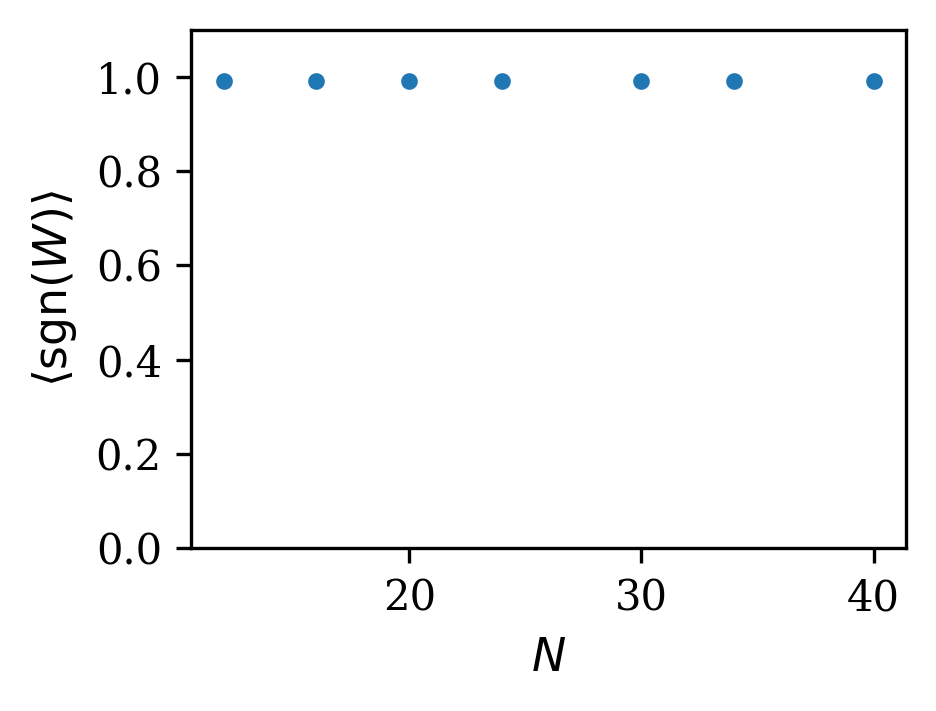}
    \caption{$\langle \text{sgn} \rangle$ vs. $N$ for $N_d=1$ at $\beta=0.1$. The $\langle q \rangle_{\text{QMC}}$, even though in this case, is increases as a function fo $N$ (linearly), it remains less than $3$ all of the relevant system size simulations.}
    \label{fig:avgsgn_vs_N_Nd=1_beta=0.1}
\end{figure}

For slightly larger $\beta \sim 0.5$, let us observe how $\langle q \rangle_{\text{QMC}}$ scales as a function of $N$ in the figure below.
\begin{figure}[H]
    \centering
    \includegraphics[width=6cm]{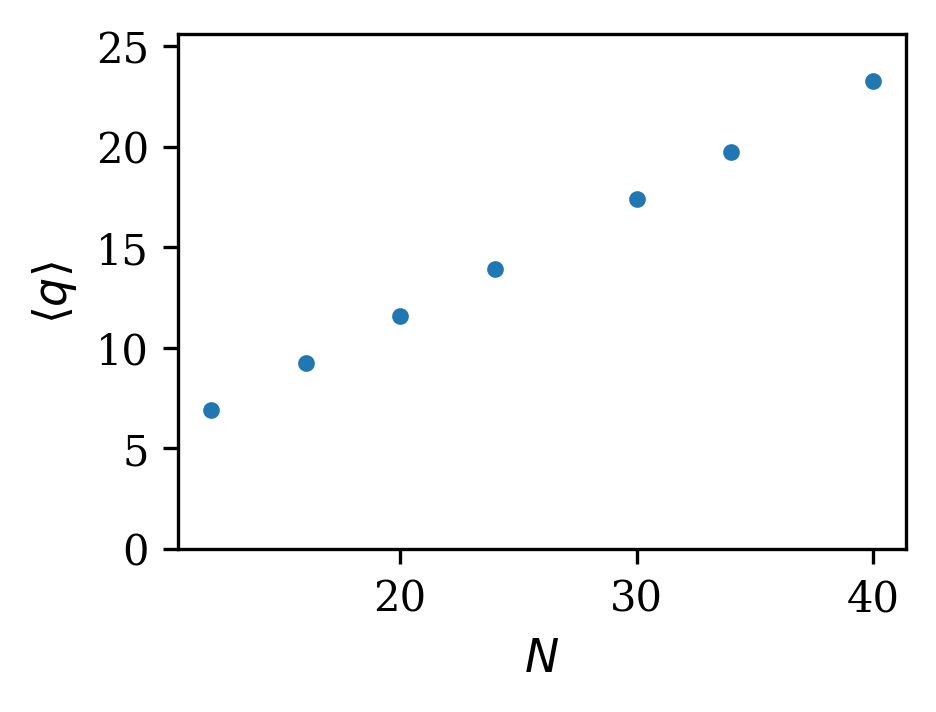}
    \caption{$\langle q \rangle_{\text{QMC}}$ vs. $N$ for $N_d=1$ at $\beta=0.5$. The $\langle q \rangle_{\text{QMC}}$ increases as a function for $N$ (linearly), however, it exceeds the length of the smallest fundamental cycles. This means that fundamental cycles of large length, and concatenation of smaller fundamental cycles (of average length $\langle q \rangle$) contribute to the partition function.}
    \label{fig:avgq_vs_N_Nd=1_beta=0.5}
\end{figure}
For this case, as explained in the cation of the figure above, we expect that fundamental cycles that are comprised of strings of $D_{ij}X_i X_j$ that form a closed loop around the physical lattice of length up to $\langle q \rangle_{\text{QMC}}$ (closed $\langle q \rangle_{\text{QMC}}$-polygons made out of triangles) contribute significantly to the weight of the partition function. 
Let us, in the mean time, observe the value of $\langle \text{sgn} \rangle$ as a function of $N$ below.

\begin{figure}[H]
    \centering
    \includegraphics[width=6cm]{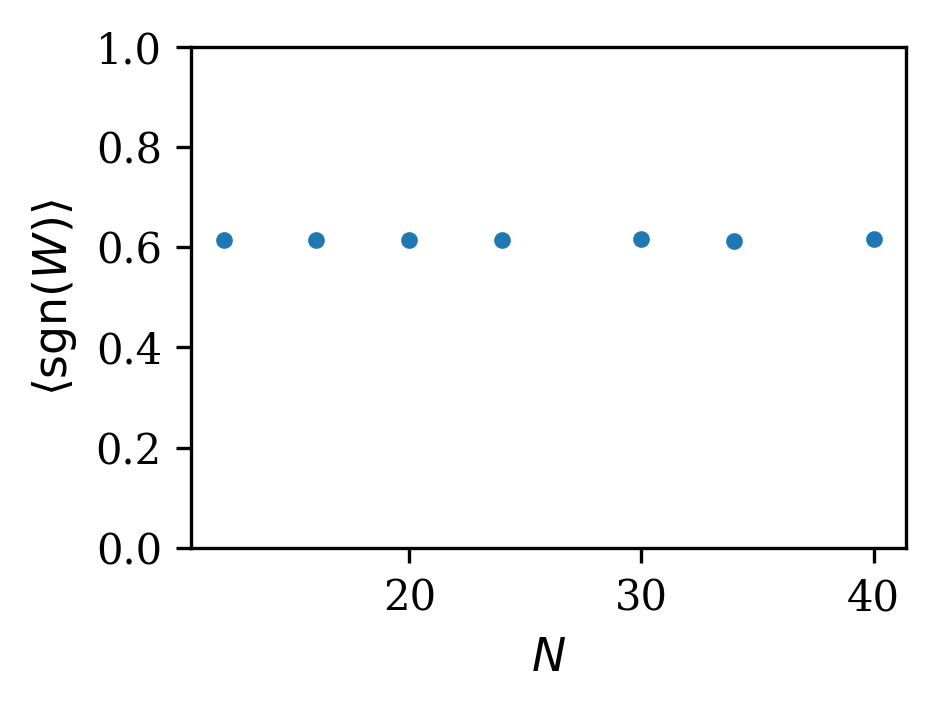}
    \caption{$\langle \text{sgn} \rangle$ vs. $N$ for $N_d=1$ at $\beta=0.5$.}
    \label{fig:avgsgn_vs_N_Nd=1_beta=0.5}
\end{figure}

Counting the number of VGP vs non-VGP fundamental cycles can give us an idea of why the average sign remains constant as a function of $N$, even though $\langle q \rangle_{\text{QMC}}$ increases as a function of $N$, as shown in Fig.~\ref{fig:avgq_vs_N_Nd=1_beta=0.5}. 

For $N$ spins (consider even $N$ for simplicity) on the triangular ladder lattice, there are $\frac{1}{2}(N-2)(N-1)$ closed cycles. Of these, $N-2$ will contain the defect edge, and so the rest will form cycles that produce VGP weights. Since all of the weights of the edges are equal and the magnitude of cycles of equal length will equal, we would only need to keep track of how many VGP vs non-VGP cycles there are to determine the behavior of the average sign. In this instance we can write
\begin{align}
    \langle \text{sgn} \rangle \sim \frac{N_{\text{VGP}} - N_{\text{non-VGP}}}{N_{\text{VGP}} + N_{\text{non-VGP}}} \, .
\end{align}
where $N_{\text{VGP}}$ denotes the number of VGP cycles, and $N_{\text{non-VGP}}$ denotes the number of non-VGP cycles.
The number of VGP cycles $(O(N^2))$ grows faster than the non-VGP ones, and so the average sign will remain constant as a function of $N$.

By adding a few more defect edges we can increase the number of non-VGP cycles, but we cannot change its scaling as a function of $N$, since single isolated defects only produce $O(N)$ number of non-VGP cycles. Below is the average sign plot for $N_d=2$ at $\beta = 0.5$, demonstrating this fact.

\begin{figure}[H]
    \centering
    \includegraphics[width=6cm]{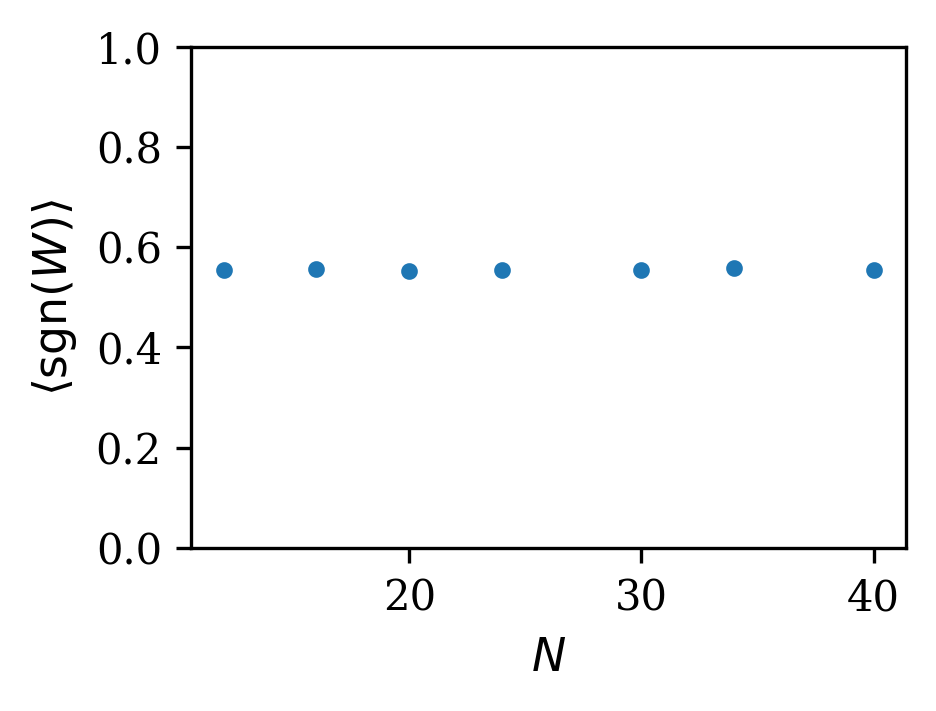}
    \caption{$\langle \text{sgn} \rangle$ vs. $N$ for $N_d=2$ at $\beta=0.5$.}
    \label{fig:avgsgn_vs_N_Nd=2_beta=0.5}
\end{figure}

This toy model was to demonstrate the importance of scaling of VGP fundamental cycles as function of system size and was served to highlight a case with a sign problem, in which the severity of the sign problem does not impede studying the thermodynamic limit (large $N$) of the model. This kind of scaling, however, is quite rare in known problems with a sign problem, because most non-VGP cycles scale similarly to the number of VGP cycles, and so studying large $N$ behavior also becomes a challenge. 
\subsection{Inverse temperature}
\label{subsec:scaling_beta}

While scaling with system size $N$ captures one aspect of the severity of the sign problem, the dependence on the inverse temperature $\beta$ is far more unforgiving. In many cases, the severity with $\beta$ manifests independently of, and often more dramatically than, its scaling with $N$. Understanding this distinction is essential for quantifying how quickly the average sign decays in practical QMC simulations.

We will use the off-diagonal absolutes function $f_\eta(H)$ to highlight the severity of the sign problem with increasing $\beta$. Let ${\lambda_k}$ denote the eigenvalues of $|H_{\text{off}}|$, and ${\lambda'k}$ those of $-H_{\text{off}}$. Then the off-diagonal absolutes function can be expressed as
\begin{equation}
f_{\eta}(H) = \sum_k e^{\eta \lambda_k} - \sum_k e^{\eta \lambda'_k}.
\end{equation}

As noted earlier in the definitions, the parameter $\eta$ in $f_\eta(H)$ plays the same role as the inverse temperature $\beta$ in QMC. The sensitivity of $f_\eta(H)$ to $\eta$ is exponential: $f_\eta(H) = O(\exp(\eta))$. This means that even if the eigenvalues ${\lambda_k}$ and ${\lambda_k'}$, corresponding respectively to the element-wise absolute value of $H_{\text{off}}$ and to $-H_{\text{off}}$, are extremely close but slightly different, $f_\eta(H)$ will still grow exponentially with $\eta$. Consequently, if the average sign is less than $1$ for any $\beta$, it will decay exponentially as $\beta \rightarrow \infty$. 

From the cycle-counting perspective, at very large $\beta$, we have $\langle q \rangle_{\text{QMC}} \gg N$. In this regime, the presence of even a single non-VGP fundamental cycle means any VGP fundamental cycle can be combined with it to form a non-VGP cycle. As $\beta$ increases from zero, we move from a regime where only length-2 fundamental cycles contribute to the partition function, to one where cycles far longer than $N$ dominate. Since $N$ is fixed, growing $\beta$ inflates $\langle q \rangle_{\text{QMC}}$ well beyond the fundamental cycle scale, allowing construction of large non-fundamental cycles purely from fundamental ones. Here, non-VGP and VGP cycles become equally numerous, because each VGP fundamental cycle can pair with a non-VGP fundamental to yield a non-VGP non-fundamental. In the limit $\langle q \rangle_{\text{QMC}} \to \infty$, the ratio of VGP to non-VGP cycles approaches one, hence $\langle \text{sgn} \rangle \rightarrow 0$.

Moreover, since the gap between the ground state and first excited state of many quantum systems scales with $N$, Theorem~\ref{thm:fVGP_hard} implies that deciding whether ground state properties are QMC-simulable can itself be exponentially hard. In particular, determining whether $f_{\text{VGP}}(H) = 0$ for all fundamental cycles of length $O(N)$ can be an exponentially difficult task.
However, geometrically-local periodic Hamiltonians possess structural regularities that can be exploited to test whether $f_{\text{VGP}}(H) = 0$ efficiently. In section~\ref{subsec:VGP_frustrated}, we show that for $2$-local Hamiltonians on a `geometrically frustrated' lattice \footnote{We call a 2D lattice with any $2\times2$ plaquette having all-to-all connectivity, like Fig.\ref{fig:square_lattice_blocks}, a `geometrically frustrated' lattice.}, this determination is in fact tractable.
\section{VGP or stoquastic?}
\label{sec:VGPorStoq}
A Hamiltonian exhibiting a Vanishing Geometric Phase (VGP) can always be transformed into a stoquastic form via a suitable diagonal unitary transformation $\Phi$ \cite{Hen_2021}. In this section, we present several illustrative examples of non-stoquastic Hamiltonians that are nonetheless VGP, demonstrating cases where a simple diagonal unitary $\Phi$ can be explicitly constructed. We also highlight examples where identifying such a transformation is significantly more challenging, requiring finding $k$-body diagonal unitaries ($k>2$), despite the absence of a sign problem. These examples underscore the limitations of stoquasticity as a diagnostic criterion for QMC simulability. To further support this perspective, we provide results for alternative diagnostics based on the VGP framework, focusing in particular on $2$-local Hamiltonians defined on geometrically frustrated lattices: canonical settings where conventional QMC methods are notoriously hindered by the sign problem.

\subsection{Easily `stoquastized' VGP}
Here, we provide a couple of Hamiltonians, that are verifiably VGP, using criterion provided in Eq.\eqref{eq:D_argcheck}, and are simply transformable to stoquastic form, via a diagonal unitary $\Phi$.
The following two Hamiltonians are defined on a square lattice
\begin{align}
    H_1 &= \sum_{\langle i j \rangle} J_{ij} (\mathds{1} + Z_j)X_i \qquad \qquad (J_{ij} \in \mathds{R} ) \label{eq:H1}\\
    H_2 &= \sum_i J_i \big(\Pi_{\langle j,i\rangle}Z_j\big) X_i \qquad \quad (J_i \in \mathds{R}) \label{eq:H2} \, .
\end{align}
Since, the set of permutations in the PMR form of these Hamiltonians have no strings of identity equivalent permutations that do not come in odd powers, the Hamiltonians can be verified to be VGP via Theorem \ref{thm:DX_VGP} (Eq. \eqref{eq:D_argcheck}). 

We will explain below how the Hamiltonians above can be transformed into stoquastic Hamiltonians via two-body Clifford and Pauli operators. However, we will provide an instance in the end of this section for which there are only non-Clifford diagonal unitaries that can transform the VGP Hamiltonian to a stoquastic one.

For $H_1$, one can consider the following diagonal unitary
\begin{align}
    \Phi_1 = \left(\Pi_{i=1}^N (Z_i)^{\delta_i}\right) \Pi_{\langle i , j \rangle := e_{ij} \in E} CZ_{e_{ij}} \, ,
\end{align}
where $\delta_i = \left(1 + (-1)^{\sum_{j \text{ s.t. } \langle j , i \rangle := e_{ij} \in E} J_{ij}}\right)$. In other words, $\Phi_1$ applies controlled-$Z$ rotation to every edge of the graph (with free choice on which spins are control and target). This will transform $H_2$ into
\begin{align}
    \sum_{i=1}^N \tilde{J}_i X_i \, ,
\end{align}
where $\tilde{J}_i = 2\sum_{j \text{ s.t. } \langle j , i\rangle := e_{ij}\in E} J_{ij}$. Then applying a $Z$-transformation on $X_i$ will take $X_i \mapsto - X_i$. Doing this on spins that have $\tilde{J}_i > 0$ will ensure that the resulting Hamiltonian is stoquastic. 
Similar procedure for $H_2$ can be used to transform it into a stoquastic Hamiltonian.

\subsection{Hard to `stoquastize' VGP}

Here we provide an example for which we can easily verify that the Hamiltonian is VGP, but finding a transformation into a stoquastic one involves a search over $5$-body diagonal unitaries, which is extremely difficult to do.
Consider another instance of a Hamiltonian, $H_{\text{Hard}}$, fitting the criteria of Theorem \ref{thm:DX_VGP}, with $H_{\text{Hard}} = \sum_{i=1}^N D_i X_i$ on a square lattice, with $D_i$ of the form
\begin{align}
    D_i := a^{(i)}\mathds{1} + i b^{(i)} Z_i + c^{(i)} \Pi_{\langle j , i\rangle} Z_j \, ,
\end{align}
with $a$, $b$, and $c$ being real parameters. Note that the last diagonal term is a product of Pauli-$Z$ matrices over the $4$ neighbors of the spin $i$.
Since, the sign of the first two coefficients cannot be altered by any other operator in the Hamiltonian, we only consider what happens to the third term. The weights corresponding to $D_i X_i$ for any given even length $2q$ cycle, will be $\gamma^{(i)} \pm c^{(i)}$, where we have defined $\gamma := a^{(i)} \pm i b^{(i)}$. Since, $D_i X_i$ has to appear exactly twice in a given fundamental cycle, the total contribution to the weight could be of two forms
\begin{align}
    d^{(i)}_1 &= \big(\gamma^{(i)} + c^{(i)}\big) \big((\gamma^{(i)})^* + c^{(i)}\big) \notag \\
    &\qquad \qquad \qquad = \big(a^{(i)} + c^{(i)}\big)^2 + (b^{(i)})^2 \geq 0 \\
    d^{(i)}_2 &= \big(\gamma^{(i)} + c^{(i)}\big) \big((\gamma^{(i)})^* - c^{(i)}\big) \, .
\end{align}
The first weight is clearly positive for all values of $a^{(i)}$, $b^{(i)}$, and $c^{(i)}$. In order to ensure that a pair of $D_i X_i$ will not create a sign-problematic weight, we also have to ensure that $d^{(i)}_2 \geq 0$. We can enforce this condition, by picking $c^{(i)}$ such that
\begin{align}
 \Big(c^{(i)} + b^{(i)}\Big)^2 \leq (a^{(i)})^2 + 2(b^{(i)})^2\, .
\end{align} 

In this instance, however, the previously suggested controlled-Z transformations for $H_1$ and $H_2$ would no longer work, and one has to potentially search over a group of $5$-body diagonal unitaries in order to transform $H_{\text{Hard}}$ or other similar Hamiltonians with non-local diagonals, into stoquastic form. 

\subsection{VGP on near-neghboring lattices}
\label{subsec:VGP_frustrated}
Finally, to provide a full scope of the utility of VGP, we also consider a geometrically frustrated lattice structure of the form of the minimal periodic block in FIG. \ref{fig:square_lattice_blocks} put on a one-dimensional lattice. Consider the antiferromagnetic Heisenberg model that is of the form
\begin{align}
    H_{\text{Heis.}} &= - \sum_{\langle i,j\rangle \in G_{\Delta_f}} \vec{S}_i \cdot \vec{S}_j \notag \\
    &= - \sum_{\langle i,j\rangle \in G_{\Delta_f}} \Big( Z_i Z_j + (1 - Z_i Z_j)X_i X_j \Big)\, ,
    \label{eq:heis_model}
\end{align}
where we have used $G_{\Delta_f}$ to denote a geometrically frustrated $2$-D lattice, unlike the quasi-1D ladder lattice considered in \ref{sec:VGPscaling_signseverity}.
The off-diagonal parts of the Hamiltonian is
\begin{align}
    \label{eq:heis_mod}
    H^{\text{(off)}}_{\text{Heis.}}&= \sum_{\langle i,j\rangle \in G_\Delta} D_{ij} X_i X_j \, , \\
    D_{ij} &= (Z_iZ_j - \mathds{1}) \, .
\end{align}

It turns out that $H_{\text{Heis.}}$ is still VGP if we replace $D_{ij}$ by $\tilde{D}_{ij}(\vec{h}^{(ij)})$ for parameters $(\vec{h^{(ij)}})_k \in \mathds{R}$, satisfying $h^{(ij)}_0 \in \mathds{R}^{+}$ (positive and real), given explicitly by
\begin{align}
    \tilde{D}_{ij}(\vec{h}^{(ij)}) &= h^{(ij)}_0(Z_i Z_j - \mathds{1}) + i h^{(ij)}_1 (Z_i + Z_j) \, .
\end{align}
We show that $h^{(ij)}_1$ can be any random real number for $D_{ij}$. As long as they are \emph{of the same sign} for all $D_{ij}$ across the entire lattice, the Hamiltonian given in Eq.\eqref{eq:model_highlight} is VGP. 
In fact, the diagonal part of the Heisenberg Hamiltonian in Eq. (\ref{eq:heis_model}) could be replaced with any random diagonal matrix and this would not affect whether the Hamiltonian is sign problem free.
We provide a proof for the following Hamiltonian
\begin{align}
    \tilde{H}_{\text{Heis.}} &= D_0 + \sum_{\langle i,j\rangle \in G_\Delta} \tilde{D}_{ij}(\vec{h}^{(ij)}) X_i X_j \, ,
    \label{eq:model_highlight}
\end{align}
being VGP in Appendix \ref{proof:modHheis}. This actually is a special case for a Hamiltonian satisfying the condition laid out in Theorem \ref{thm:2localVGP} below. We should state that, it is possible to obtain $\tilde{H}_{\text{Heis.}}$ from a stoquastic Hamiltonian, by applying a series of $R_{Z_i}$ rotations on each spin. We are not aware of a an easier method to convert this model into a stoquastic one through Clifford rotations. In fact, a set of Clifford rotations may not exist for a random choice of a set of $h^{(ij)}_0$ and $h^{(ij)}_1 $, since Cliffords do not cover the entire space of unitary rotations.

We will end this section by providing a proof for a condition under which a $2$-local Hamiltonian on a geometrically frustrated lattice cannot be VGP, and lay out a conjecture on the efficiency of determining whether a $2$-local Hamiltonian with translational invariance is VGP. 
\begin{theorem}[$2$-local geometrically frustrated VGP condition]
    Consider a Hamiltonian on a geometrically frustrated triangular lattice written in PMR form $H = D_0 + \sum_{\langle i,j\rangle \in G_\Delta} D_{ij} X_i X_j$, with $D_{ij}$ being $2$-local diagonals with support on neighboring spins of the form
    \begin{align}
        D_{ij} := h^{(ij)}_0 + i (h^{(ij)}_1 Z_i + h^{(ij)}_2 Z_j) + h^{(ij)}_3 Z_i Z_j \, .
    \end{align}
    $H$ cannot be VGP if $h^{(ij)}_1 \neq h^{(ij)}_2$ for \textbf{all} $D_{ij}$.
    \label{thm:2localVGP}
\end{theorem}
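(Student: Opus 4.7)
The plan is to exhibit, under the hypothesis $h_1^{(ij)}\neq h_2^{(ij)}$ on every edge, a single length-$3$ fundamental cycle around one triangle $\{a,b,c\}\subset G_\Delta$ (which exists by the frustration assumption) and a starting computational state $(s_a,s_b,s_c)\in\{\pm1\}^3$ whose cycle weight violates Eq.~\eqref{eq:VGP_angle}, i.e.\ fails to have phase $\pi\pmod{2\pi}$. Concretely, the weight factorises as
\begin{align*}
\mathcal{D}_\triangle(s_a,s_b,s_c) \;=\; \zeta_{ab}(s_a,s_b)\,\zeta_{bc}(s_b,s_c)\,\zeta_{ca}(s_c,s_a),
\end{align*}
where each $\zeta_{ij}$ is the diagonal entry of $D_{ij}$ evaluated on the appropriate intermediate walk state. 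Since $s_i^2=1$, $\mathcal{D}_\triangle$ is multilinear in $(s_a,s_b,s_c)$, and VGP on this cycle is equivalent to every Fourier coefficient in the expansion $\mathcal{D}_\triangle = \sum_{A\subseteq\{a,b,c\}} c_A \prod_{i\in A}s_i$ being real (so that each of the $8$ values is real, and can in principle then match the $\pi$-phase required by Eq.~\eqref{eq:VGP_angle}).

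Next, a parity count reduces VGP to four polynomial equations. Each $\zeta_{ij}$ has real constant and $Z_iZ_j$ pieces but carries a factor of $i$ on its $Z_i$ and $Z_j$ pieces, so a contribution to $c_A$ collects an odd number of $i$'s precisely when the chosen monomial-type triple involves an odd number of singletons, which coincides with $|A|$ being odd. Hence $c_A$ is purely imaginary whenever $|A|$ is odd, and VGP forces
\begin{align*}
c_{\{a\}} \;=\; c_{\{b\}} \;=\; c_{\{c\}} \;=\; c_{\{a,b,c\}} \;=\; 0,
\end{align*}
four polynomial equations in the twelve parameters $\{h_0^{(e)},h_1^{(e)},h_2^{(e)},h_3^{(e)}\}_{e\in\triangle}$.

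Finally, I would show these four equations are incompatible with $h_1^{(e)}\neq h_2^{(e)}$ on all three edges. In the clean sub-case $h_0^{(e)}=h_3^{(e)}=0$ the conditions collapse to four multiplicative identities of the form $\prod_{e}h_{\sigma(e)}^{(e)}=\prod_{e}h_{\bar\sigma(e)}^{(e)}$ indexed by the four odd-support subsets $X\in\{\{a\},\{b\},\{c\},\{a,b,c\}\}$, with $\sigma(e)=1$ when $e$ meets $X$ and $\sigma(e)=2$ otherwise. Taking pairwise ratios of these identities forces $(h_1^{(e)})^2=(h_2^{(e)})^2$ on each edge; combined with $h_1^{(e)}\neq h_2^{(e)}$ this forces $h_1^{(e)}=-h_2^{(e)}$, and substituting back into the master identity $h_1^{(ab)}h_1^{(bc)}h_1^{(ca)}=h_2^{(ab)}h_2^{(bc)}h_2^{(ca)}$ yields $\prod_e h_1^{(e)}=-\prod_e h_1^{(e)}$, forcing some $h_1^{(e)}=0=h_2^{(e)}$, a contradiction. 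The main obstacle is extending this ratio argument to the general case, in which the $h_0^{(e)}$ and $h_3^{(e)}$ parameters contribute cross-terms that couple the four equations; I would handle this either by algebraically eliminating $h_0$ and $h_3$ from the four conditions before analysing the residual in $h_1,h_2$, or equivalently by casting the four equations as a single determinantal identity whose vanishing locus lies inside $\{\prod_e(h_1^{(e)}-h_2^{(e)})=0\}$, with most of the algebraic labour concentrated in carrying out this elimination cleanly.
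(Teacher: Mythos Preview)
Your Fourier/parity reduction is correct as far as it goes: the imaginary part of the length-$3$ weight vanishes for all eight configurations precisely when the four odd-$|A|$ coefficients vanish. The genuine gap is that you then try to derive the contradiction from these four equalities alone. For a length-$3$ cycle the VGP condition Eq.~\eqref{eq:VGP_angle} requires the weight to have phase $\pi$, i.e.\ to be real \emph{and non-positive}; you yourself note this (``can in principle then match the $\pi$-phase'') but never impose it. The four $c_A=0$ equations encode only $\Im W=0$, and those four equations are \emph{not} incompatible with $h_1^{(e)}\neq h_2^{(e)}$ on every edge. A concrete counterexample on a single triangle $(i,j,k)$ is
\[
(h_0,h_1,h_2,h_3)^{(ij)}=(1,1,0,1),\quad (h_0,h_1,h_2,h_3)^{(jk)}=(1,0,1,-2),\quad (h_0,h_1,h_2,h_3)^{(ik)}=\bigl(\tfrac{2}{3},1,0,\tfrac{1}{3}\bigr),
\]
for which the weight in Eq.~\eqref{eq:2local_weight} is real at every configuration (e.g.\ $W(1,1,1)=10$, $W(1,-1,1)=-2$, $W(1,1,-1)=-\tfrac{10}{3}$, $W(1,-1,-1)=\tfrac{10}{3}$), yet $h_1^{(e)}\neq h_2^{(e)}$ on all three edges. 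So the elimination you propose in the general case cannot succeed: the variety cut out by your four equations is not contained in $\{\prod_e(h_1^{(e)}-h_2^{(e)})=0\}$.

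What is missing is exactly what the paper's proof supplies: the sign constraint $\Re W\le 0$. The paper divides through by $\mathcal I_{ij}\mathcal I_{jk}\mathcal I_{ik}$ (nonzero at a suitable configuration once some $h_1^{(e)}\neq h_2^{(e)}$), turning $\Im W=0$ into the identity $\alpha_{ij}\alpha_{jk}+\alpha_{jk}\alpha_{ik}+\alpha_{ij}\alpha_{ik}=1$ and $\Re W\le 0$ into $\alpha_{ij}\alpha_{jk}\alpha_{ik}\le \alpha_{ij}+\alpha_{jk}+\alpha_{ik}$, and then shows that for a configuration with two negative and one positive $\alpha$ these two relations are jointly infeasible. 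In the counterexample above $\Re W>0$ at $(1,1,1)$, which is precisely what the inequality part of the argument detects. Your plan can be salvaged, but only by incorporating the $\Re W\le 0$ constraints alongside the four vanishing conditions; the purely algebraic elimination route you outline for the general case will not close on its own.
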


We will outline the proof in Appendix~\ref{proof:2localVGP}. Our analysis in the proof also sets conditions on what values of $h_0$ and $h_3$, for each diagonal term, will satisfy the VGP condition.

The model given in Eq.~\eqref{eq:model_highlight} highlights the power of VGP as a metric for the QMC sign problem over stoquasticity: stoquasticity only says if sign problem is present up to a diagonal unitary. However, general diagonal unitaries are difficult to find, so transforming a Hamiltonian that is already sign-problem-free but a diagonal unitary away from being stoquastic will mistakenly get categorized as sign problematic if stoquasticity is used as a diagnostic test for the QMC sign problem.

\begin{conjecture}[Local VGP Implies Global VGP for Periodic 2-Local Hamiltonians]
\label{conj:2local_periodic_VGP}
Let $H$ be a Hamiltonian defined on a $d$-dimensional periodic lattice $\Lambda \subset \mathbb{Z}^d$, equipped with a discrete translational symmetry group $\mathcal{T} = \langle T_1, \dots, T_d \rangle$, where $T_\mu$ denotes translation along the $\mu$-th lattice direction.

Suppose $H$ admits a Permutation Matrix Representation (PMR) of the form
\[
H = \sum_{i \in \mathcal{I}} D_i P_i,
\]
where each term $D_i P_i$ is a $2$-local Hermitian operator acting nontrivially only on a pair of spins connected by an edge of the lattice.

Assume the following:

\begin{enumerate}
    \item \textbf{Translational Periodicity:} There exists a finite \emph{unit cell} $U \subset \Lambda$ such that for all $x \in \mathcal{T}$, the PMR terms obey
    \[
    T_x D_i T_x^{-1} = D_j \quad \text{and} \quad T_x P_i T_x^{-1} = P_j
    \]
    for some $j \in \mathcal{I}_U$, meaning the diagonal and permutation operators repeat under lattice translations.

    \item \textbf{Unit Cell Restriction:} Let $H_{\mathrm{unit}}$ denote the restriction of $H$ to a single unit cell $U$, including all 2-local PMR terms whose supports lie entirely within $U$.
\end{enumerate}

Then, if
\[
f_{\mathrm{VGP}}(H_{\mathrm{unit}}) = 0,
\]
it follows that the full Hamiltonian $H$ defined on any periodic extension of the lattice also satisfies
\[
f_{\mathrm{VGP}}(H) = 0.
\]
That is, the vanishing geometric phase property of the full system follows from that of a single unit cell, provided the Hamiltonian is 2-local and periodic.
\end{conjecture}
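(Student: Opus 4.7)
The plan is to exploit the translational symmetry of $H$ together with the $2$-locality of its PMR terms to reduce the verification of VGP on the full lattice to a finite check on the unit cell. First, I would classify the fundamental cycles of $G_H$ by their spatial footprint on $\Lambda$. For a $2$-local $H$, each permutation $P_i$ flips spins on a single edge of the lattice, so every closed string $S_{\mathbf{i}_q} = \mathds{1}$ corresponds to an element of the edge-cycle space of the lattice graph (each vertex touched an even number of times). Because the diagonals $D_i$ are likewise $2$-local, the phase $\Phi^{(z)}_{\mathcal{C}}$ of a fundamental cycle depends only on the restriction of $\ket{z}$ to the spins actually visited by the cycle.

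Next I would use $\mathcal{T}$ to show that any fundamental cycle $\mathcal{C}^{(z)}_{\mathbf{i}_q}$ is equivalent under an appropriate translation $T_x$ to a fundamental cycle whose support lies in a bounded neighborhood of $U$. By the periodicity assumption $T_x D_i T_x^{-1} = D_j$ and $T_x P_i T_x^{-1} = P_j$, so translating both $\ket{z}$ and the permutation string preserves $\mathcal{D}_{(z,\mathcal{S}_{\mathbf{i}_q})}$, and in particular preserves $\Phi^{(z)}_{\mathcal{C}}$ exactly. Combined with the fact that fundamental cycles of a $2$-local $H$ are plaquette-like (their footprints on the lattice have bounded diameter in the sense that no chord exists by definition), each such cycle is isomorphic, as a labeled quantum cycle, to a fundamental cycle contained inside a single translate of $U$. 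This identifies the translation orbits of fundamental cycles of $G_H$ with the fundamental cycles of $G_{H_{\mathrm{unit}}}$.

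Finally I would invoke the multiplicative property $\mathcal{D}_{(z,\mathcal{S}_{\mathbf{k}_{q_1+q_2}})} = \mathcal{D}_{(z,\mathcal{S}_{\mathbf{i}_{q_1}})} \mathcal{D}_{(z,\mathcal{S}_{\mathbf{j}_{q_2}})}$ stated earlier in the excerpt. This makes phases additive under concatenation, so the VGP condition $\Phi + q\pi \equiv 0 \mod 2\pi$ is closed under cycle composition. Consequently, if $f_{\mathrm{VGP}}(H_{\mathrm{unit}})=0$, every translated copy of a unit-cell fundamental cycle in $G_H$ is VGP, and hence every closed walk built by concatenating such cycles is VGP as well. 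Since any closed walk of $G_H$ decomposes (via the cycle-space basis of the periodic lattice graph) into a signed sum of these local plaquette generators, we conclude $f_{\mathrm{VGP}}(H) = 0$.

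The hard part is the claim buried in the previous paragraph that \emph{every} fundamental cycle of the extended lattice is generated, under composition, by fundamental cycles fitting inside a single unit cell. Two subtleties arise. First, on a torus there exist topologically non-trivial winding loops that are not contractible and therefore not expressible as a product of plaquette cycles; one must either argue that such winding loops always admit a chord in $G_H$ (and are thus not fundamental), or augment the hypothesis by checking the finitely many inequivalent winding classes on the compactified cell. Second, the definition of a fundamental cycle is with respect to the quantum state graph $G_H$, not the bare lattice graph: two spatially disjoint plaquette loops could in principle form a single $G_H$-fundamental cycle if no direct PMR term connects the intermediate states. Resolving these points — most likely by fixing a unit cell at least as large as the diameter of the largest $G_H$-fundamental cycle, plus one representative per winding class — is where the real work of the proof would lie.
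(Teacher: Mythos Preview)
The statement you are attempting to prove is labeled a \emph{conjecture} in the paper, and the paper does not supply a proof. What the paper offers (in its appendix on heuristical arguments) is a two-sentence sketch: since the diagonal operators repeat across the lattice, ``any phases accrued will be similar to the ones produced on the unit block lattice,'' with the explicit caveat that ``we have not formalized a method to prove this rigorously.'' Beyond that, the paper relies on numerical checks over a few hundred random instances on a $2\times 2$ unit cell.

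Your proposal is therefore considerably more developed than anything in the paper. The translation-invariance argument for cycle weights and the use of the multiplicative identity $\mathcal{D}_{(z,\mathcal{S}_{\mathbf{k}_{q_1+q_2}})}=\mathcal{D}_{(z,\mathcal{S}_{\mathbf{i}_{q_1}})}\mathcal{D}_{(z,\mathcal{S}_{\mathbf{j}_{q_2}})}$ to propagate VGP from generators to arbitrary closed walks are both sound and go beyond the paper's heuristic. You also correctly isolate the two genuine obstructions: (i) topologically nontrivial winding cycles on the torus that are not in the span of plaquette cycles, and (ii) the mismatch between fundamental cycles of $G_H$ and cycles of the bare lattice graph. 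The paper's heuristic gestures at the same worry (``for larger lattices, one can create a fundamental cycle of larger length than $4$'') but does not name either obstruction explicitly.

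Where your outline remains a sketch rather than a proof is exactly where you say it does. The assertion that fundamental cycles of a $2$-local $H$ are ``plaquette-like'' with bounded lattice footprint is not established: absence of chords in $G_H$ does not by itself bound the spatial extent of the cycle, and your parenthetical justification conflates chords in $G_H$ with chords in the lattice graph. Resolving this would require a structural lemma about $G_H$-fundamental cycles for $2$-local PMR Hamiltonians, which neither you nor the paper provides. In short, your proposal is a more honest and more detailed version of the paper's own heuristic, and it correctly flags the reasons the statement is still a conjecture.
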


\begin{figure}[H]
    \centering
    \includegraphics[width=5cm]{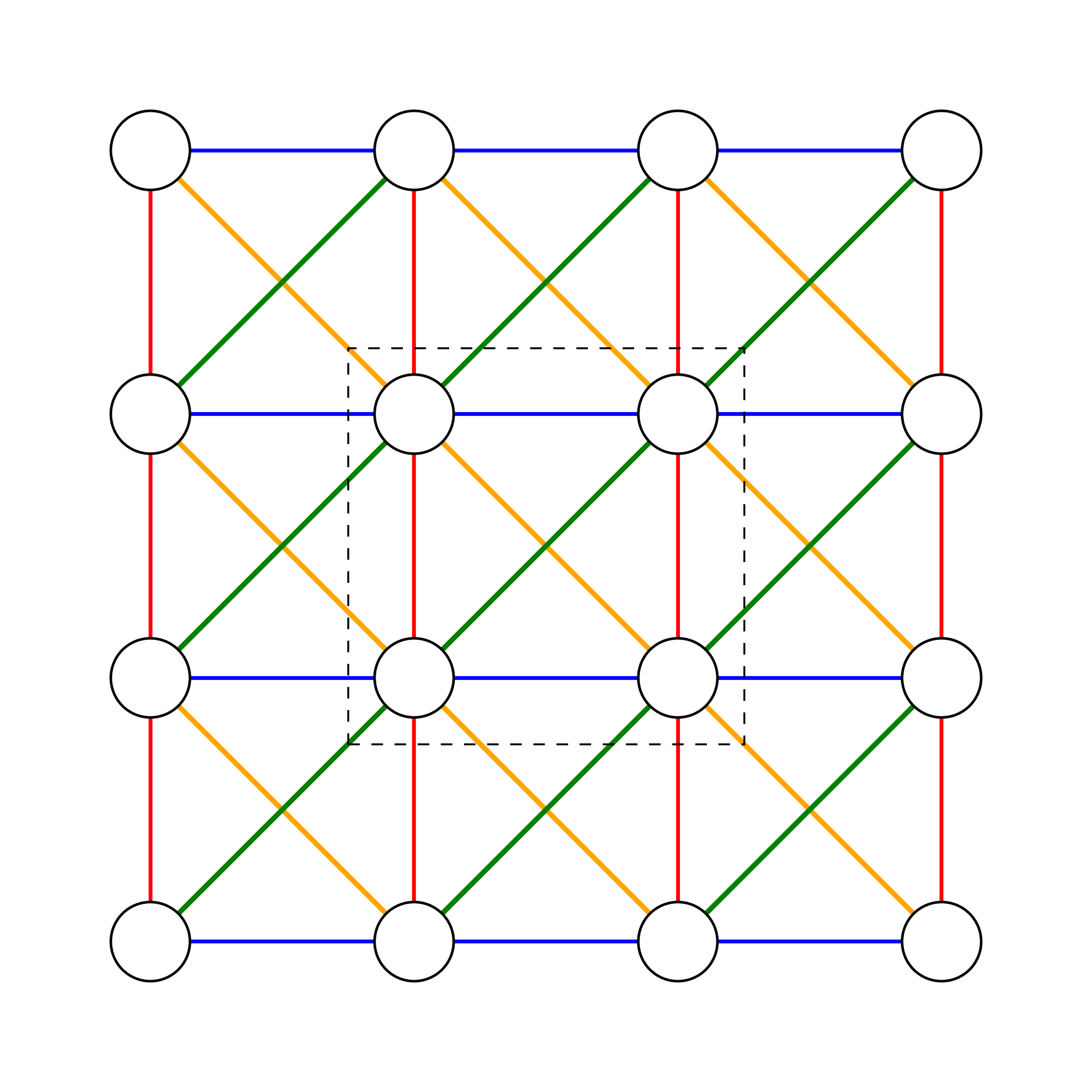}
    \caption{Unit block for the large rectangular lattice with $2$-local interaction on neighboring spins.}
    \label{fig:square_lattice_blocks}
\end{figure}

This conjecture is supported by heuristical arguments, provided in Appendix \ref{app:2body_periodic_VGP}, as well as numerical investigations, available on GitHub~\cite{OurGitHub}. The goal of the numerical experiments will be to find a set of VGP Hamiltonians through optimizing parameters of a $2$-local Hamiltonian on the unit cell of a geometrically frustrated lattice, verify that VGPness holds with the same parameters over a periodic construction of the larger lattice, using the unit cell parameters. We then search over hundreds of such instances and verify that all such instances satisfy our conjecture. We describe the setting and the procedure for our numerical setup below:

\begin{enumerate}
    \item First we start with a $2 \times 2$ lattice structure.
    \item We set the permutation terms to be of the form $X_i X_{i+1}$.
    \item We parametrize the Hamiltonian via the four unique Diagonal operators of the unit block of the form
    \begin{align}
        D^{(ij)} = h^{(ij)}_0 \mathds{1} + i h^{(ij)}_1 (Z_i + Z_j) +  h^{(ij)}_2 Z_i Z_j \, ,
    \end{align}
    for $(i,j) \in E(\Lambda_U)$ (the edges of the unit lattice), and optimize in order to find a unit block with $f_\eta(H_{\text{unit}}) = 0$.
    
    \item To generate other similar VGP Hamilaonians, we perform diagonal unitaries: once we find a set of 12 parameters, ($3$ for each edge), we perform $R_Z(\alpha)$ for a random $\alpha$ on all spins, in order to retain the periodicity. 
    
    \item We can compute an average over 200 instances, and verify that for numerous attempts at finding different optimized unit block Hamiltonians, for which $f_\eta(H_{\text{unit}}) = f_{\text{VGP}}(H_{\text{unit}}) = 0$, the translationally generated Hamiltonian on the larger lattices is also VGP.
\end{enumerate}

Please refer to our GitHub \cite{OurGitHub} for the full code that performs this computation.

\section{Sign problem in random basis}\label{sec:avg_sign_stats}

In this section, we analyze the the sign problem using random unitary transformations, by analyzing expected value of $\mathcal{F}_H$, a metric introduce in \ref{sec:measures}. In summary, we show that random unitaries are devastating for the sign problem by showing the scaling behavior of the expected value of $\mathcal{F}_H$ and the vanishing probability of deviations from the mean. This of course, is nothing surprising, as computing the ground state of random Hamiltonians is known to be a QMA-hard problem~\cite{Jordan_2009}. We provide further numerical evidence that even a perfectly sign problem free Hamiltonian, is expected to suffer from a severe sign problem under random unitary transformations. 

In order to carry out our analysis, we utilize a few well-established theoretical results from random matrix theory, including Lévy’s lemma that, said informally, demonstrates that Lipschitz functions $f$ on high dimensional spaces are tightly concentrated about $\E[f]$.
Lévy’s lemma states
\begin{lemma}[L\'evy's Lemma]
Let $S^{n}$ be the unit sphere in $\mathbb{R}^{n+1}$ with the uniform (Haar) measure $\mu$. 
If $f : S^{n} \to \mathbb{R}$ is $L$-Lipschitz, i.e.
\[
|f(x) - f(y)| \le L \,\|x - y\|_2 \quad \forall\, x,y \in S^{n},
\]
then for any $\varepsilon > 0$,
\[
\mu\left( \left\{ x \in S^{n} : |f(x) - \mathrm{Med}(f)| \ge \varepsilon \right\} \right)
\le 2 \exp\left( -\frac{c\,n\,\varepsilon^{2}}{L^{2}} \right),
\]
where $\mathrm{Med}(f)$ is the median of $f$ and $c>0$ is an absolute constant.
\end{lemma}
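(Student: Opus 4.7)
The plan is to prove this concentration bound via the spherical isoperimetric inequality combined with the Lipschitz assumption. The standard strategy reduces a statement about values of $f$ to a statement about the Haar measure of a tubular neighborhood of a sub-level set, at which point the isoperimetric inequality supplies a sharp Gaussian-type bound.

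First I would introduce the sub-level set $A = \{x \in S^n : f(x) \le \mathrm{Med}(f)\}$; by the definition of the median, $\mu(A) \ge 1/2$. For $r > 0$, I would consider the $r$-neighborhood $A_r = \{x \in S^n : \exists\, y \in A \text{ with } \|x-y\|_2 \le r\}$, and observe the following consequence of the Lipschitz hypothesis: if $x \in A_{\varepsilon/L}$, there exists $y \in A$ with $\|x-y\|_2 \le \varepsilon/L$, hence $f(x) \le f(y) + L\cdot(\varepsilon/L) \le \mathrm{Med}(f) + \varepsilon$. Contrapositively, $\{x : f(x) > \mathrm{Med}(f) + \varepsilon\} \subseteq S^n \setminus A_{\varepsilon/L}$, and so it suffices to bound $\mu(S^n \setminus A_{\varepsilon/L})$ from above.

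The key input is the spherical isoperimetric inequality (Lévy--Schmidt): among all measurable subsets of $S^n$ of a given Haar measure, geodesic caps minimize the measure of their $r$-enlargements. Consequently, for any measurable $A$ with $\mu(A) \ge 1/2$, a hemispherical comparison yields $\mu(A_r) \ge 1 - \exp(-c n r^2)$ for a universal constant $c > 0$. Applying this with $r = \varepsilon/L$ gives $\mu(\{f > \mathrm{Med}(f) + \varepsilon\}) \le \exp(-c n \varepsilon^2 / L^2)$. The symmetric argument, applied to the super-level set $B = \{x : f(x) \ge \mathrm{Med}(f)\}$ (which again has measure at least $1/2$), gives the matching lower-tail bound, and a union bound absorbs the factor of $2$ in the statement.

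The main obstacle is, unsurprisingly, the spherical isoperimetric inequality itself, which is a deep geometric fact and the real content of the lemma; everything else is bookkeeping around the Lipschitz estimate. In a self-contained presentation one would either invoke it as a black box with an appropriate citation, or reproduce a proof via two-point symmetrization, spherical rearrangement, or the Gromov--Lévy comparison theorem relating Ricci curvature lower bounds to isoperimetric profiles. A softer alternative, if one is willing to weaken the constant $c$, is to transport the problem to Gauss space via the projection $S^n \to \mathbb{R}^k$ and apply the Gaussian concentration inequality, but this route still relies on a nontrivial comparison principle and does not materially simplify the argument.
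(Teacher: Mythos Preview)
Your sketch is the standard Lévy--Schmidt argument and is correct: reduce to neighborhoods of a half-measure sub-level set via the Lipschitz bound, invoke the spherical isoperimetric inequality to compare with a hemisphere, and union-bound the two tails. There is nothing to fault in the logic, and you correctly identify that the isoperimetric inequality is the substantive ingredient while the rest is routine.

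That said, there is no proof to compare against: the paper simply states Lévy's Lemma as a known result and cites Ledoux~\cite{Ledoux_2001} without supplying any argument. So your proposal is not ``the same approach'' or ``a different route''---it is strictly more than what the paper does, which is to treat the lemma as a black box. If your goal is to match the paper, a one-line citation suffices; if your goal is to be self-contained, your outline is the right one, and the only work remaining would be either to cite the isoperimetric inequality precisely or to fill in one of the standard proofs you mention (two-point symmetrization or Gromov--Lévy).
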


We apply Levy's lemma, as well as Jensen's inequality~\cite{Durrett_2019} to show that, with high probability, the sum of the magnitude of configuration weights $\mathcal{F}_H$ for a physical Hamiltonian $H$’s partition function is sharply concentrated about its mean when averaged over all unitary transformations. We then find an analytical formula for the expectation of $\mathcal{F}_H$ averaged over all possible unitaries. We arrive at a similar result for random Clifford unitaries as well. These result not only state that a Hamiltonian, rotated in a random basis, will have devastatingly severe sign problem, alluding to hardness of simulating the ground state properties of random Hamiltonians, but also deem random searches over the entire Clifford space to find sign curing/mitigating basis a hopeless task.

As a note, in this section we use $\lambda_j(\, \cdot \,)$ to denote the $j$th largest eigenvalue of a given Hermitian matrix.
Now we proceed with the main results of this section. Recall the definitions $\mathcal{M}(X) = |X_{\text{off}}| - \text{diag}(X)$ and $\mathcal{F}_X(U) = \tr[e^{\mathcal{M}(UXU^\dagger)}]$. Then $\mathcal{F}_H$ exhibits the concentration of measure phenomenon:

\begin{theorem}\label{thm:globalconcentration}
    Suppose $H \in \text{Herm}(\mathds{D})$ is an arbitrary Hamiltonian acting on $N$ qubits with dimension $\mathds{D} := 2^N$. Assume that the spectral norm of $H$ satisfies $\|H\|_{\text{op}} = O(N)$. Fix $\alpha > 0$.
    Then there exists $\epsilon(N) = o(1)$ such that if $U \in U(N)$ is drawn randomly from the Haar measure, we have
    \begin{align*}
        \text{Pr}\left( |\mathcal{F}_H(U) - \E[\mathcal{F}_H]| \geq \epsilon \cdot \E[\mathcal{F}_H] \right) \leq \frac{1}{\mathds{D}^\alpha}
    \end{align*}
\end{theorem}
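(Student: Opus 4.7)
My overall strategy is to apply Lévy's lemma for the Haar measure on $U(\mathds{D})$ not to $\mathcal{F}_H$ directly, but to its logarithm $\log \mathcal{F}_H$. The reason is that $A \mapsto \log \tr e^A$ has Frobenius gradient $e^A/\tr e^A$, a density matrix of Frobenius norm at most $1$, so $\log \tr e^{\cdot}$ is globally $1$-Lipschitz---dramatically sharper than the worst-case bound $\sqrt{\mathds{D}}\,e^{\|A\|_{\text{op}}}$ that would apply to $\tr e^{\cdot}$ itself, which would be far too weak here since $\|\mathcal{M}(UHU^\dagger)\|_{\text{op}}$ can grow with $\mathds{D}$. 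Concentration of $\log \mathcal{F}_H$ around any fixed value is exactly multiplicative concentration of $\mathcal{F}_H$, which is the form the theorem asserts.

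To bound the Lipschitz constant of $\log \mathcal{F}_H$ via the chain rule, I would decompose the map as $U \xrightarrow{g_1} UHU^\dagger \xrightarrow{g_2} \mathcal{M}(UHU^\dagger) \xrightarrow{\log \tr e^{\cdot}} \log \mathcal{F}_H(U)$. A short computation gives that $g_1$ is $2\|H\|_{\text{op}}$-Lipschitz Frobenius-to-Frobenius, and $g_2 = \mathcal{M}$ is $2$-Lipschitz, since entrywise absolute value together with the projections onto the diagonal and off-diagonal parts are each $1$-Lipschitz. Composing with the $1$-Lipschitz $\log \tr e^{\cdot}$ yields an overall Lipschitz constant $L = O(\|H\|_{\text{op}}) = O(N)$. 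Invoking the Gromov--Milman / Meckes concentration inequality for Haar-random $U \in U(\mathds{D})$, we get $\Pr\bigl(|\log \mathcal{F}_H(U) - \mathrm{Med}(\log \mathcal{F}_H)| \ge \epsilon\bigr) \le 2\exp(-c\mathds{D}\epsilon^2/L^2)$, and setting the right-hand side equal to $\mathds{D}^{-\alpha}$ gives $\epsilon(N) = O\!\left(N\sqrt{\log\mathds{D}/\mathds{D}}\right) = O(N^{3/2}/2^{N/2}) = o(1)$, exactly as required.

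The final step is to convert multiplicative concentration of $\mathcal{F}_H$ around $e^{\mathrm{Med}(\log \mathcal{F}_H)}$ into concentration around $\E[\mathcal{F}_H]$ as stated in the theorem. Exponentiating the previous bound gives $\mathcal{F}_H(U) \in e^{\mathrm{Med}(\log \mathcal{F}_H)}\cdot[1-2\epsilon, 1+2\epsilon]$ outside a Haar-measure-$\mathds{D}^{-\alpha}$ set, so it suffices to show $\E[\mathcal{F}_H] = e^{\mathrm{Med}(\log \mathcal{F}_H)}(1 + o(1))$. The lower bound is immediate from the definition of a median together with positivity of $\mathcal{F}_H$; for the upper bound I would integrate the one-sided Lévy tail $\Pr\bigl(\log \mathcal{F}_H \ge \mathrm{Med}(\log \mathcal{F}_H) + t\bigr) \le e^{-c\mathds{D} t^2/L^2}$ over $t \in [0,\infty)$, which sums to $1 + O(L/\sqrt{\mathds{D}}) = 1 + o(1)$, and then absorb the resulting $O(\epsilon)$ shift into a redefined $\epsilon(N)$ of the same asymptotic order. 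The main obstacle I anticipate is precisely this conversion from log-median to $\E[\mathcal{F}_H]$: Jensen gives only the one-sided estimate $e^{\E \log \mathcal{F}_H} \le \E\mathcal{F}_H$, and the gap between these quantities could be enormous if $\mathcal{F}_H$ had a heavy upper tail; ruling that out requires using the one-sided Lévy tail carefully and keeping track of the constants so that the redefined $\epsilon(N)$ remains $o(1)$.
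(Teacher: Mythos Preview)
Your proposal is correct and follows essentially the same route as the paper: apply L\'evy's lemma to $\mathcal{L}_H := \log \mathcal{F}_H$, bound its Lipschitz constant by $O(\|H\|_{\text{op}})$ via the decomposition $U \mapsto UHU^\dagger \mapsto \mathcal{M}(\cdot) \mapsto \log\tr e^{(\cdot)}$ (the paper's Lemma~\ref{lem:key} obtains the slightly sharper constant $2\|H\|_{\text{op}}$, since $\mathcal{M}$ is in fact $1$-Lipschitz in Frobenius norm), and then pass from log-concentration to multiplicative concentration about $\E[\mathcal{F}_H]$. The paper handles that last passage via Jensen's inequality alone, which---as you correctly anticipate---only gives one direction; your tail-integration argument is the more careful way to close it, though note that ``median plus positivity'' by itself yields only $\E[\mathcal{F}_H]\ge \tfrac12 e^{\mathrm{Med}}$, so you still need the lower L\'evy tail (or Jensen combined with mean--median closeness) for the matching $(1-o(1))$ lower bound.
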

For the proof of the above theorem, we need the following lemma:
\begin{lemma}\label{lem:key}
     Suppose $X \in \text{Herm}(\mathds{D})$. Then the map $\mathcal{L}_X := \ln \mathcal{F}_X$ is $L$-Lipschitz with respect to the Frobenius norm and constant $L = 2\|X\|_{\text{op}}$.
\end{lemma}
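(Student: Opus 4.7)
The plan is to decompose $\mathcal{L}_X$ as a composition of three maps, $\mathcal{L}_X = \mathrm{lse} \circ \mathcal{M} \circ \Phi_X$, where $\Phi_X(U) := U X U^\dagger$ sends the unitary group into $\operatorname{Herm}(\mathds{D})$, $\mathcal{M}$ is the elementwise operation defined in the statement of Section~\ref{sec:measures}, and $\mathrm{lse}(M) := \ln \tr(e^M)$. I would then establish that the three pieces are Lipschitz in the Frobenius norm with constants $2\|X\|_{\mathrm{op}}$, $1$, and $1$ respectively, so that composing them yields the claimed $L = 2\|X\|_{\mathrm{op}}$.

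For the conjugation map I would use the standard identity $UXU^\dagger - VXV^\dagger = (U-V) X V^\dagger + U X (U-V)^\dagger$ together with the submultiplicative inequality $\|AB\|_F \leq \|A\|_{\mathrm{op}} \|B\|_F$ and unitarity $\|U\|_{\mathrm{op}} = \|V\|_{\mathrm{op}} = 1$. Each of the two summands is then bounded by $\|X\|_{\mathrm{op}} \|U - V\|_F$, and adding them produces the factor of $2$ that appears in the final Lipschitz constant.

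For $\mathcal{M}$, the bound $\|\mathcal{M}(Y) - \mathcal{M}(Y')\|_F \leq \|Y - Y'\|_F$ is checked entrywise: off-diagonally, the reverse triangle inequality gives $\bigl||Y_{ij}| - |Y'_{ij}|\bigr| \leq |Y_{ij} - Y'_{ij}|$, while on the diagonal $\mathcal{M}$ merely flips a sign, preserving $|Y_{ii} - Y'_{ii}|$; summing squares yields the claim. For $\mathrm{lse}$, the directional derivative along a Hermitian direction $H$ is $\tr(\rho H)$ with $\rho := e^M / \tr(e^M)$ a density operator, and Cauchy--Schwarz in the Hilbert--Schmidt inner product gives $|\tr(\rho H)| \leq \|\rho\|_F \|H\|_F \leq \|H\|_F$, using $\|\rho\|_F^2 = \tr(\rho^2) \leq \tr(\rho) = 1$ (since $0 \leq \lambda_i(\rho) \leq 1$). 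Integrating along the straight line from $M$ to $M'$ in the flat affine space $\operatorname{Herm}(\mathds{D})$ yields 1-Lipschitzness, and chaining the three bounds establishes the lemma.

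The only potential subtlety is the non-smoothness of $\mathcal{M}$ at matrices with vanishing off-diagonal entries, but since the proof never differentiates $\mathcal{M}$---only the elementary reverse triangle inequality is invoked---this poses no genuine obstacle, and I expect the entire argument to reduce to a clean composition of norm estimates rather than any delicate analysis.
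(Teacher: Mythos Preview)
Your proof is correct and follows the same three-step composition as the paper: conjugation $\Phi_X$ is $2\|X\|_{\mathrm{op}}$-Lipschitz via the same identity and submultiplicativity; $\mathcal{M}$ is $1$-Lipschitz entrywise via the reverse triangle inequality; and the log-trace-exp is $1$-Lipschitz. The only genuine difference is in this last piece. The paper passes through eigenvalues: it shows the scalar $\mathrm{LSE}:\mathbb{R}^{\mathds{D}}\to\mathbb{R}$ is $1$-Lipschitz in the sup norm, then invokes Weyl's inequality $\max_k|\lambda_k(A)-\lambda_k(B)|\le\|A-B\|_{\mathrm{op}}\le\|A-B\|_F$. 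Your argument stays on $\operatorname{Herm}(\mathds{D})$ and differentiates $\mathrm{lse}$ directly, using $\nabla\mathrm{lse}(M)=\rho$ and $\|\rho\|_F\le 1$. Your route is slightly slicker in that it avoids the eigenvalue detour and Weyl; the paper's route is entirely algebraic and avoids any smoothness considerations. Either way the constants match and the argument is sound.
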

See Appendix \ref{app:lemkey_proof} for the proof of this lemma.
Now we may proceed with the proof of Theorem~\ref{thm:globalconcentration}.
\begin{proof}
    By Lemma~\ref{lem:key}, the map $\mathcal{L}_H$ is $L$-Lipschitz with constant $L = 2\|H\|_{\text{op}}$. Set
    \begin{align*}
        \delta = 2 \|H\|_{\text{op}} \sqrt{\frac{9\pi^3 \ln(2\mathds{D}^\alpha)}{\mathds{D}^2}}.
    \end{align*}
    Let $U\in U(\mathds{D})$ be drawn from the Haar measure. Then by Lévy's lemma~\cite{Ledoux_2001}, 
    \begin{align*}
        \tag{2}
        \text{Pr}\left(|\mathcal{L}_H(U) - \E[\mathcal{L}_H]| \geq \delta\right) \, &\leq \, 2\exp\left( -\frac{\mathds{D}^2 \delta^2}{9\pi^3L^2} \right)\\
        &= 2\exp\left( -\frac{\mathds{D}^2 \delta^2}{36\pi^3\|H\|_{\text{op}}^2} \right)\\
        &= \frac{1}{\mathds{D}^\alpha}.
    \end{align*}
    Now, for the remainder of the proof, consider when $| \mathcal{L}_H(U) - \E[\mathcal{L}_H] | \leq \delta$. In this case,
    \begin{align*}
         \mathcal{F}_H(U) \in \left[ e^{\E[\mathcal{L}_H] - \delta},\, e^{\E[\mathcal{L}_H] + \delta} \right].
    \end{align*}
    By Jensen's inequality, we have $e^{\E[\mathcal{L}_H]} \leq \E[\mathcal{F}_H]$, so
    \begin{align*}
        \frac{\mathcal{F}_H(U)}{\E[\mathcal{F}_H]} \leq \frac{e^{\E[\mathcal{L}_H] + \delta}}{e^{\E[\mathcal{L}_H]}}
        = e^{\delta}.
    \end{align*}
    Then by Taylor expanding,
    \begin{align*}
        \left| \frac{\mathcal{F}_H(U) - \E[\mathcal{F}_H]}{\E[\mathcal{F}_H]} \right| \leq e^{\delta} - 1 \leq \delta + O(\delta^2) =: \epsilon(n).
    \end{align*}
    By assumption, we know that $\|H\|_{\text{op}} = O(N)$, so $\delta = O\left( n^{3/2} \cdot 2^{-n} \right)$.
    Then $\epsilon(n) = o(1)$ and 
    \begin{align*}
        \text{Pr}\left( |\mathcal{F}_H(U) - \E[\mathcal{F}_H]| \geq \epsilon \cdot \E[\mathcal{F}_H] \right) \leq \frac{1}{\mathds{D}^\alpha}.
    \end{align*}

\end{proof}
Next, we derive an explicit formula for $\E[\mathcal{F}_H]$ with vanishingly small relative error, valid for any Hermitian $H$. Through numerical simulations on various physical Hamiltonians, we demonstrate the formula's low relative error against empirical measurements of $\E[\mathcal{F}_H]$.

\begin{theorem}\label{thm:noise_formula}
    Let $H$ be a Hamiltonian acting on a Hilbert space of dimension $\mathds{D}$.  
    Denote by $\lambda_1(H') \ge \lambda_2(H') \ge \cdots \ge \lambda_{\mathds{D}}(H')$ the eigenvalues of a Hermitian matrix $H'$.  
    Define
    \begin{align*}
    \sigma^2 &:= \Bigl(2-\tfrac{\pi}{2}\Bigr)\frac{\|H\|_F^2}{2\mathds{D}^2},\\
    \mu &:= \frac{\|H\|_F}{2\mathds{D}}\sqrt{\pi},\\
    \lambda_\star
      &:= -\mu + \mu \mathds{D} + \frac{\sigma^2}{\mu}.
    \end{align*}
    When $\mathds{D}$ is sufficiently large, the lesser $\mathds{D}-1$ eigenvalues of $H' = \mathcal{M}(UHU^\dagger)$
    converge to the Wigner semicircle distribution on $[-2\sigma\sqrt{\mathds{D}},2\sigma\sqrt{\mathds{D}}]$, a universal distribution arising in random matrix theory~\cite{Tao_2012}, while  
    \[
        \lambda_{1}(H') = \lambda_\star + O(\mathds{D}^{-1}).
    \]
    In all, the expectation value of $\mathcal F_H$ is given by
    \begin{align}
    \mathbb{E}[\mathcal F_H]
      &= e^{\lambda_\star}\left(1 
        +
        (\mathds{D}-1)\frac{I_1\bigl(2\sigma\sqrt \mathds{D}\bigr)}{e^{\lambda_\star}\sigma\sqrt \mathds{D}}
        +
        \frac{\epsilon(H)}{e^{\lambda_\star}}\right),
    \end{align}
    where $I_1$ is the modified Bessel function of the first kind, $\sigma$ and $\mu$ are given above in terms of $H$, and the error term satisfies  
    \[
        \frac{|\epsilon(H)|}{e^{\lambda_\star}} = O\bigl(\mathds{D}^{-1}\bigr).
    \]
\end{theorem}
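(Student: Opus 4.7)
The plan is to treat $H' := \mathcal{M}(UHU^{\dagger})$ as a rank-one–deformed Wigner matrix and reduce the expectation to a single spectral integral. First, I would compute the first two moments of the entries of $X = UHU^{\dagger}$ for Haar $U$ via Weingarten calculus: for $i\neq j$ one obtains $\E[X_{ij}]=0$ and $\E|X_{ij}|^{2} = (\|H\|_F^{2}-(\tr H)^{2}/\mathds{D})/(\mathds{D}^{2}-1)$, so the off-diagonal magnitudes $|X_{ij}|$ are asymptotically Rayleigh-distributed with mean $\mu$ and variance $\sigma^{2}$ precisely as stated in the theorem. The diagonal entries of $X$ concentrate around $(\tr H)/\mathds{D}$, which is subleading compared with the dominant outlier scale identified below.

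Second, I would decompose
\[
\mathcal{M}(UHU^{\dagger}) = \mu\,(\mathbf{1}\mathbf{1}^{\top} - I) + W - \text{diag}(X),
\]
where $W$ is a Hermitian, mean-zero matrix whose off-diagonal entries have variance $\sigma^{2}$. The spike $\mu\,\mathbf{1}\mathbf{1}^{\top}$ is a rank-one deformation of strength $\theta = \mu\mathds{D}$ aligned with the uniform vector $\mathbf{1}/\sqrt{\mathds{D}}$. Since $W$ (rescaled by $\sqrt{\mathds{D}}$) lies in a universality class whose empirical spectral distribution converges to the Wigner semicircle on $[-2\sigma,2\sigma]$ and whose edge statistics agree with GUE, one may apply the Baik–Ben Arous–Péché transition: because $\theta = \Theta(\mathds{D}\mu)$ far exceeds the bulk edge $\sigma\sqrt{\mathds{D}}$ for large $\mathds{D}$, exactly one outlier eigenvalue emerges at $\theta + \sigma^{2}\mathds{D}/\theta = \mu\mathds{D} + \sigma^{2}/\mu$, while the remaining $\mathds{D}-1$ eigenvalues fill the semicircle. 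Absorbing the $-\mu I$ shift recovers $\lambda_{1}(H') = \lambda_{\star} + O(\mathds{D}^{-1})$.

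Third, I would split $\E[\tr e^{H'}] = \E[e^{\lambda_{1}}] + \sum_{j\geq 2}\E[e^{\lambda_{j}}]$. The first term evaluates to $e^{\lambda_{\star}}(1 + O(\mathds{D}^{-1}))$ by outlier concentration. For the bulk, I would replace the empirical measure by the semicircle on $[-2\sigma\sqrt{\mathds{D}},2\sigma\sqrt{\mathds{D}}]$ and use the classical moment-generating identity
\[
\int_{-2a}^{2a} e^{x}\,\frac{\sqrt{4a^{2}-x^{2}}}{2\pi a^{2}}\,dx \;=\; \frac{I_{1}(2a)}{a},
\]
derived by the substitution $x = 2a\cos\theta$ together with $\int_{0}^{\pi} e^{z\cos\theta}\sin^{2}\theta\,d\theta = \pi I_{1}(z)/z$. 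Setting $a = \sigma\sqrt{\mathds{D}}$ produces the stated Bessel term and hence the combined formula for $\E[\mathcal{F}_{H}]$.

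The main obstacle is controlling the residual $\epsilon(H)$ at the $O(\mathds{D}^{-1})$ level claimed by the theorem. Pointwise semicircle convergence only delivers $\int e^{x}\rho_{\text{sc}}(x)\,dx$ in the limit, and because $e^{x}$ grows exponentially, a priori even rare large-deviation excursions of individual bulk eigenvalues could dominate the trace. Closing this gap requires a local semicircle law resolving eigenvalue positions at scale $\mathds{D}^{-1}$, together with an operator-norm rigidity bound $\|W\|_{\text{op}} \leq 2\sigma\sqrt{\mathds{D}} + o(1)$ holding with overwhelming probability; combined with the Lipschitz concentration already established in Lemma~\ref{lem:key}, these ingredients suppress the exponential tails and yield the claimed rate. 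A secondary technical point is verifying that the dependencies among the entries of $W$ induced by the Haar constraint (unit-norm rows, row-orthogonality) do not spoil the universality input at the required precision; this can be absorbed by comparison to the GUE ensemble, whose joint entry law agrees up to $O(\mathds{D}^{-1})$ corrections that feed directly into the error term.
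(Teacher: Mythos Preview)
Your proposal is correct and follows essentially the same route as the paper: decompose $H'=\mathcal{M}(UHU^{\dagger})$ as a mean-$\mu$ rank-one spike $\mu J$ plus a centered Wigner-type matrix, invoke BBP for the outlier $\lambda_\star$, and integrate $e^{x}$ against the semicircle density to produce the Bessel term. The one place where the paper is simpler than what you propose is the error analysis: rather than a local semicircle law or edge rigidity, the paper just uses the L\'evy-distance bound $d_{L}(F^{H'},F^{W_{0}})\leq \mathrm{rank}(\mu J)/\mathds{D}=1/\mathds{D}$ together with the Lipschitz constant $e^{2\sigma\sqrt{\mathds{D}}}$ of $e^{x}$ on the bulk support, which already gives $|\epsilon_{\mathrm{bulk}}|\leq e^{2\sigma\sqrt{\mathds{D}}}/\mathds{D}$; since $\lambda_\star\sim\mu\mathds{D}$ grows linearly while $2\sigma\sqrt{\mathds{D}}$ grows only like $\sqrt{\mathds{D}}$, the ratio $e^{2\sigma\sqrt{\mathds{D}}}/e^{\lambda_\star}$ is exponentially small and the relative error $|\epsilon(H)|/e^{\lambda_\star}=O(\mathds{D}^{-1})$ follows without any finer spectral input.
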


Refer to Appendix \ref{app:noiseformula_proof} for the proof.

We perform a numerical analysis on random Heisenberg models to demonstrate that the derived formula for $\E[\mathcal{F}_H]$ indeed has vanishingly small relative error. In this analysis, we take the average of $\mathcal{F}_H$ across 30 random Haar-drawn unitaries applied to each $H$, where $H$ is a random $n$-body Hamiltonian with nearest-neighbor Heisenberg interactions conjugated by a random Haar-drawn unitary $U \in U(\mathds{D})$. 
\begin{figure}[H]
    \centering
    \includegraphics[width=8cm]{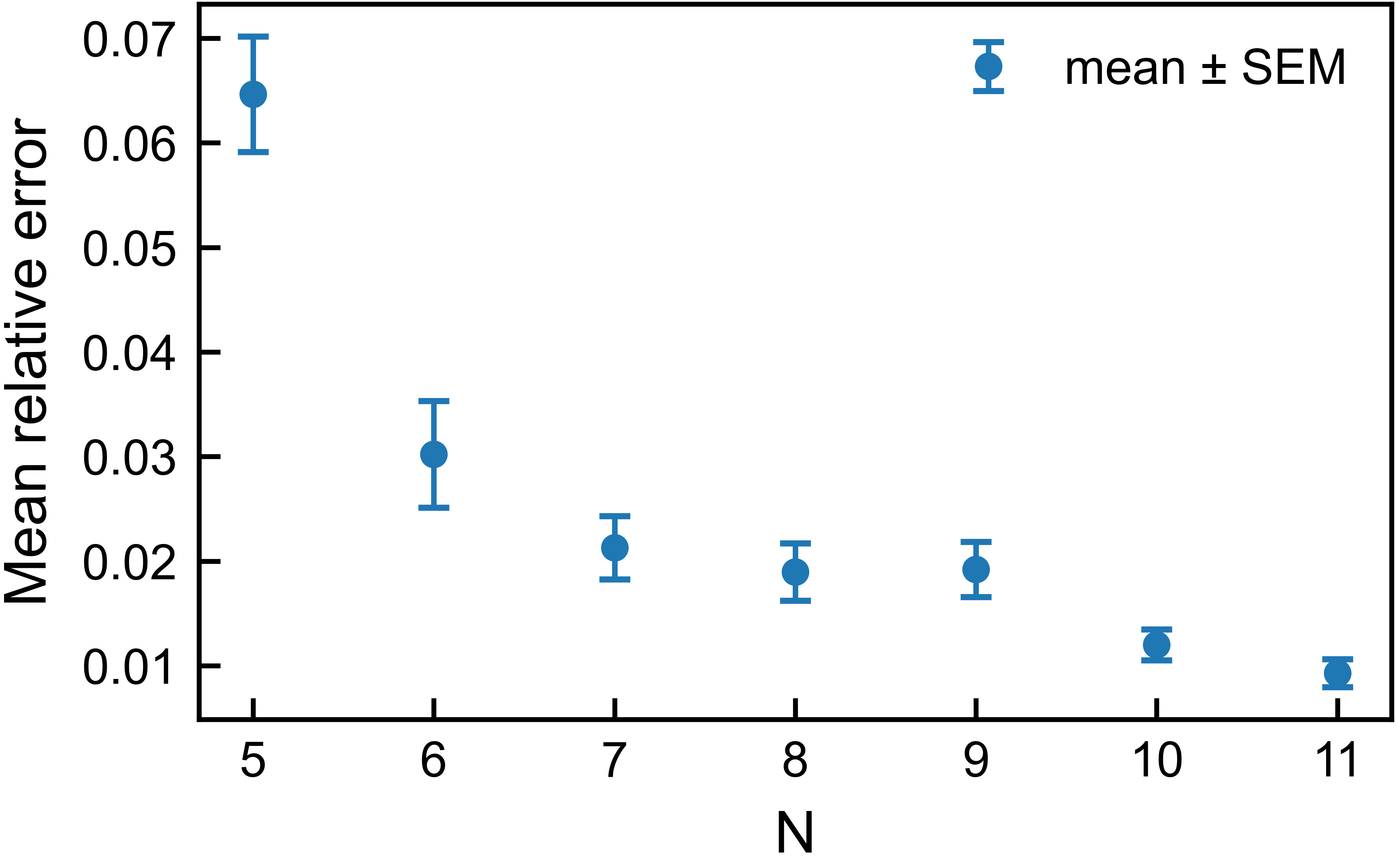}
    \caption{Scaling of relative error of Theorem~\ref{thm:noise_formula} formula for $\E[\mathcal{F}_H]$ vs empirical $\E[\mathcal{F}_H]$}
    \label{fig:exp_f_graph}
\end{figure}

Recall that the average QMC sign of $H$ in a given basis is $\frac{Z(H)}{\mathcal{F}_H(\mathds{1})}$. Then Theorems~\ref{thm:globalconcentration} and~\ref{thm:noise_formula} formally demonstrate that the vast majority of basis transformations $U$ will result in an extremely low average sign.

The following result allows us to study the average sign behavior of the majority of random spin-$\frac{1}{2}$ Hamiltonians and we conjecture that it may also serve as a tight estimate for the expectation of $\mathcal{F}_H$ with domain restricted to the Clifford group.
\begin{theorem}[Random Pauli strings]
\label{thm:exp_f_rand_cliff}
    Let $(\tilde{P}_1,\ldots,\tilde{P}_M) \in (\mathcal{P}_N \setminus \{\mathds{1}\})^M$ be a tuple of non-identity Pauli operators each acting on $N$ qubits and each drawn independently at random from $\mathcal{P}_N \setminus \{\mathds{1}\}$. Define the spin-1/2 Hamiltonian
    \begin{align*}
        H = \sum_{j=1}^M c_j\tilde{P}_j = \sum_{j=1}^M c_jD_jP_j
    \end{align*}
    where $c_j \in \mathbb{R}$ are fixed, $\tilde{P}_j \in \mathcal{P}_N$, $D_j$ is diagonal, $P_j$ is a permutation matrix, $M = O(\text{poly}(N))$ and each $c_j = O(1)$ with $N$. Set $c_{\text{tot}} := \sum_{j=1}^M |c_j|$. Then with probability $1 - O(M^22^{-N})$, $H$ belongs to a class $\mathcal{H}$ of spin-1/2 Hamiltonians such that
    \begin{align*}
    \E[\mathcal{F}_H(\mathds{1}) \, \vert \, H \in \mathcal{H}] &= e^{c_{\text{tot}}} \notag \\
    &+ \left( \mathds{D} - 1 \right) \prod_{j=1}^M \cosh(|c_j|) \left[ 1 + O(M^22^{-N}) \right].
    \end{align*}
\end{theorem}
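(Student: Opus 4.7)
The plan is to reduce $\mathcal{M}(H)$ to a nicely structured operator on a high-probability event, then diagonalize it simultaneously with a Hadamard change of basis, and finally carry out the expectation over the random Pauli labels. First I would define the event $\mathcal{H}$ to consist of those draws for which (i) every $\tilde P_j$ is off-diagonal (i.e.\ its $X/Y$-pattern $b_j$ is nonzero) and (ii) the $b_j$'s are pairwise distinct. Since a uniformly random non-identity Pauli has $X/Y$-pattern $b$ distributed almost uniformly on $\{0,1\}^N$, condition (i) fails for any fixed $j$ with probability $O(2^{-N})$ and condition (ii) fails for a fixed pair by $O(2^{-N})$, so a union bound gives $\Pr[H\in\mathcal{H}] = 1 - O(M^2 2^{-N})$.

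On $\mathcal{H}$, the diagonal part of $H$ vanishes, and for each off-diagonal entry $(z,z')$ at most one index $j$ has $P_j(z')=z$, so the sum defining $H_{zz'}$ has no cancellation and $|H_{zz'}| = |c_j|$ for that unique $j$. This yields the clean identity $\mathcal{M}(H) = |H_{\text{off}}| - \mathrm{diag}(H) = \sum_{j=1}^M |c_j|\,P_j$, where each $P_j$ is a pure $X$-string $X^{b_j}$. Since all $X$-strings mutually commute, I would simultaneously diagonalize them by conjugating with $H^{\otimes N}$, which sends $P_j \mapsto Z^{b_j}$. The eigenvalues of $\mathcal{M}(H)$ are then indexed by $x\in\{0,1\}^N$ as $\lambda_x = \sum_{j=1}^M |c_j|(-1)^{b_j\cdot x}$, so
\begin{align*}
\mathcal{F}_H(\mathds{1}) = \sum_{x\in\{0,1\}^N} \prod_{j=1}^M e^{|c_j|(-1)^{b_j\cdot x}}.
\end{align*}

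Taking the conditional expectation over the $b_j$'s, I would separate the sum into the $x=0$ term, which contributes the deterministic $e^{c_{\mathrm{tot}}}$, and the $\mathds{D}-1$ terms with $x\neq 0$. For such $x$, a random $b_j$ uniform on $\{0,1\}^N\setminus\{0\}$ makes $b_j\cdot x \bmod 2$ nearly Bernoulli$(1/2)$, so $\mathbb{E}[e^{|c_j|(-1)^{b_j\cdot x}}] = \cosh(|c_j|) + O(2^{-N})$. If the $b_j$'s were i.i.d.\ unconditioned, their product would factor exactly, giving $\prod_j \cosh(|c_j|)$ up to a multiplicative $(1+O(M\cdot 2^{-N}))$ from the per-factor correction. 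Summing over $x\neq 0$ and multiplying by $\mathds{D}-1$ produces the claimed second term.

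The main obstacle is handling the conditioning on $\mathcal{H}$ precisely enough to obtain the stated $O(M^2 2^{-N})$ error. I would write $\mathbb{E}[\cdot\mid\mathcal{H}] = \mathbb{E}[\,\cdot\,\mathds{1}_{\mathcal{H}}]/\Pr[\mathcal{H}]$ and expand $\mathds{1}_{\mathcal{H}} = \prod_{j<k}(1-\mathds{1}_{b_j=b_k})\prod_j \mathds{1}_{b_j\neq 0}$ by inclusion--exclusion. The leading term yields the i.i.d.\ computation above, while each correction term involves a pair $\{j,k\}$ being tied, contributing a factor $O(2^{-N})$ with $c_j = O(1)$; there are $O(M^2)$ such pairs, so these corrections and the denominator $\Pr[\mathcal{H}]^{-1} = 1 + O(M^2 2^{-N})$ both sit inside the claimed bracket. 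The technical care needed is in verifying that the pairwise-tie corrections, together with the modification from conditioning $b_j \neq 0$ on the per-factor expectations, combine without blowing up the error past $O(M^2 2^{-N})$, which uses the assumed boundedness $c_j = O(1)$ and $M = O(\mathrm{poly}(N))$ to keep $\prod_j \cosh(|c_j|)$ from amplifying small multiplicative errors.
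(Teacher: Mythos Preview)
Your proposal is correct and follows essentially the same route as the paper. Both define the good event as ``all $X$-supports nonzero and pairwise distinct,'' bound its complement by a union bound, reduce $\mathcal{M}(H)$ on that event to $\sum_j |c_j|\,X^{b_j}$, simultaneously diagonalize (you via $H^{\otimes N}$ conjugation, the paper via the characters $\chi_z(x)=(-1)^{z\cdot x}$, which is the same map), and then split off the $x=0$ eigenvalue $e^{c_{\mathrm{tot}}}$ before averaging the remaining $\mathds{D}-1$ terms. The only cosmetic difference is in how the conditional expectation for $x\neq 0$ is controlled: the paper writes $e^{\lambda_z}=\prod_k(\cosh|c_k|+\chi_z^{(k)}\sinh|c_k|)$, expands, and bounds the low-order cross moments $\mathbb{E}[\chi_z^{(k)}\chi_z^{(\ell)}]$ directly, whereas you factor into per-$j$ expectations and repair the dependence induced by conditioning on $\mathcal{H}$ via inclusion--exclusion on tie events. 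Both bookkeeping schemes land on the same $O(M^2 2^{-N})$ correction.
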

Refer to Appendix \ref{app:randcliff_proof} for the proof.

\section{Discussion}

One of the central observations of this work is that VGP-based classification extends beyond the stoquastic/non-stoquastic dichotomy: there exist Hamiltonians that are easily recognized as sign-problem-free through the VGP criterion, yet whose stoquasticity is highly nontrivial to certify. This distinction emphasizes that VGP captures a broader set of simulable instances, making it an attractive alternative or complement to stoquasticity-based approaches.

From a complexity-theoretic perspective, we have shown that determining whether a general Hamiltonian is VGP is hard in the worst case. Nevertheless, our analysis uncovers certain sparse Hamiltonians and other subclasses with periodic structure, notably 2-local periodic Hamiltonians, for which diagnosing the sign problem is efficient.

We have also introduced a family of VGP-inspired measures that quantify the \emph{severity} of the sign problem. While the exact computation of these measures is generically intractable, they prove to be useful in determining the severity of the sign-problem under general unitary and Clifford transformations.

Collectively, these results unify structural, algorithmic, and statistical viewpoints on the sign problem, offering a conceptual framework that not only highlights the origins of the sign problem but also hints at principled pathways for its mitigation.

\section{Summary and Outlook}
In summary, we have
\begin{itemize}
    \item Examined the utility of VGP criterion as a geometric diagnostic for QMC simulability, grounded in the permutation matrix representation of computational state graphs.
    \item Analyzed the complexity of determining VGP status, identifying both provably hard cases and efficiently checkable subclasses.
    \item Constructed explicit examples of Hamiltonians that are VGP yet difficult to recognize as stoquastic, highlighting the practical advantage of VGP over traditional stoquasticity-based criteria.
    \item Provided theorems and a conjecture on efficiency of determining the sign-problem in sparse and also in periodically structured instances. 
    \item Proposed a set of quantitative measures inspired by VGP, enabling scaling analyses of the average sign under basis transformations.
\end{itemize}

Looking ahead, several research directions naturally emerge:
\begin{enumerate}
    \item \textbf{Algorithmic development for special cases:} Further investigation into 2-local periodic Hamiltonians and related structures may yield polynomial-time algorithms for QMC simulability detection, as well as a rigorous proof of efficient certification of sign problem through local diagnostics.
    \item \textbf{Extensions to fermionic systems:} While our examples focused on spin Hamiltonians, the VGP framework should generalize to fermionic models via Jordan--Wigner or Bravyi--Kitaev mappings, potentially revealing new classes of simulable systems.
    \item \textbf{Integration with mitigation protocols:} VGP diagnostics could inform adaptive basis rotation schemes, guiding unitary transformations toward sign-reducing representations in a structured rather than random manner.
    \item \textbf{Complexity-theoretic boundaries:} Establishing tighter upper and lower bounds on the complexity of VGP determination will clarify its precise relationship to known hardness results for Hamiltonian ground state problems.
\end{enumerate}

We expect that the interplay between geometric phase structure, computational complexity, and QMC performance illuminated here will inspire both theoretical advances and practical algorithms, moving the field closer to a systematic understanding---and eventual control---of the sign problem for tractable instances.

\section{Acknowledgements}

The authors would like to thank Itay Hen, Lev Barash, and Milad Marvian for discussions. 
\bibliographystyle{apsrev4-2}
\bibliography{refs}

\appendix

\onecolumngrid
\section{Proof of Theorem \ref{thm:2localVGP}}
\label{proof:2localVGP}
To prove this theorem, we consider length-$3$ cycles formed by any three of diagonal and permutations on any closed triangular edges of the geometrically frustrated lattice. For the sake of generality, we will consider the edges: $e_1 := (i,j)$ , $e_2 := (j,k)$ and $e_3 := (i,k)$. To write the diagonal weights of such a cycle, we will pick a computational basis state $\ket{z}$, and since the operators only act on neighboring spins, the diagonal weight will only be a function of three spins, namely $z_i$, $z_j$, and $z_k$. These are binary variables that introduce a $\pm$ sign to each of the four variables for each diagonal term. The weight of the cycle $W^{(z_i , z_j , z_k)}_{ijk}$ is thus 
\begin{align}
    W^{(z_i , z_j , z_k)}_{ijk}&:= \bra{z_i z_j z_k} D_{ij} X_i X_j D_{jk} X_j X_k D_{ik} X_i X_k \ket{z_i z_j z_k} \notag \\
    &= (h^{(ij)}_0 + i (h^{(ij)}_1 \bar{z}_{i} + h^{(ij)}_2 \bar{z}_{j}) + h^{(ij)}_3 \bar{z}_i\bar{z}_j ) \notag \\
    &\times (h^{(jk)}_0 + i (h^{(jk)}_1 z_{j} + h^{(jk)}_2 \bar{z}_{k}) + h^{(jk)}_3 z_j\bar{z}_k )(h^{(ik)}_0 + i (h^{(ik)}_1 z_{i} + h^{(ik)}_2 z_{k}) + h^{(ik)}_3 z_iz_k )\, .
    \label{eq:2local_weight}
\end{align}
We emphasize that since $z_l$ is a binary variable, after a corresponding $X_l$ action, we flip it to $\bar{z}_l$. As we can see, $z_l$ and $\bar{z}_l$ for $l \in \{i,j,k\}$ appear in the weight above. 
Furthermore, as mentioned in the review of PMR formalism in section \ref{sec:pmrqmc_rev}, only the real part of the weights contribute to the partition function or any expectation value of physical operators. However, for a Hamiltonian to be VGP (Eqs.~\eqref{eq:VGP_angle} and ~\eqref{eq:fc_weight}) implies that no fundamental cycle should have a complex component, as it could contribute to a negative higher length cycle weight. Expanding the weight above, and enforcing that the imaginary components vanish, we get 
\begin{align}
    \Im(W^{(z_i , z_j , z_k)}_{ijk}) &= (h^{(ij)}_1 \bar{z}_{i} + h^{(ij)}_2 \bar{z}_{j}) (h^{(jk)}_0 + h^{(jk)}_3 z_j\bar{z}_k )(h^{(ik)}_0 + h^{(ik)}_3 z_iz_k ) \notag \\
    &+ (h^{(jk)}_1 z_{j} + h^{(jk)}_2 \bar{z}_{k}) (h^{(ij)}_0 + h^{(ij)}_3 \bar{z}_i\bar{z}_j )(h^{(ik)}_0 + h^{(ik)}_3 z_iz_k ) \notag \\
    &+ (h^{(ik)}_1 z_{i} + h^{(ik)}_2 z_{k})(h^{(ij)}_0 + h^{(ij)}_3 \bar{z}_i\bar{z}_j )(h^{(jk)}_0 + h^{(jk)}_3 z_j\bar{z}_k ) \notag \\
    &-(h^{(ij)}_1 \bar{z}_{i} + h^{(ij)}_2 \bar{z}_{j})(h^{(jk)}_1 z_{j} + h^{(jk)}_2 \bar{z}_{k})(h^{(ik)}_1 z_{i} + h^{(ik)}_2 z_{k}) = 0\, .
    \label{app1:W_im}
\end{align}
We will come back to this equality soon. First, let us also collect the real part and set it to less than or equal to zero (since the real component of a length-$3$ cycle should have a phase of $3\pi \mod 2\pi$, i.e. be a negative real number, according to Eq.~\eqref{eq:VGP_angle})
\begin{align}
    \Re(W^{(z_i , z_j , z_k)}_{ijk}) &=  (h^{(ij)}_0 + h^{(ij)}_3 \bar{z}_i\bar{z}_j )(h^{(jk)}_0 + h^{(jk)}_3 z_j\bar{z}_k )(h^{(jk)}_0 + h^{(jk)}_3 z_jz_k ) \notag \\
    &-(h^{(ij)}_1 \bar{z}_{i} + h^{(ij)}_2 \bar{z}_{j})(h^{(jk)}_1 z_{j} + h^{(jk)}_2 \bar{z}_{k})(h^{(jk)}_0 + h^{(jk)}_3 z_jz_k ) \notag \\
    &-(h^{(ij)}_1 \bar{z}_{i} + h^{(ij)}_2 \bar{z}_{j})(h^{(ik)}_1 z_{i} + h^{(ik)}_2 z_{k})(h^{(jk)}_0 + h^{(jk)}_3 z_j\bar{z}_k ) \notag \\
    &-(h^{(jk)}_1 z_{j} + h^{(jk)}_2 \bar{z}_{k})(h^{(ik)}_1 z_{i} + h^{(ik)}_2 z_{k})(h^{(ij)}_0 + h^{(ij)}_3 \bar{z}_i\bar{z}_j ) \leq 0\, . 
    \label{app1:W_re}
\end{align}

For notational convenience, we define
\begin{align}
    \mathcal{I}_{ij} (z_i , z_j) &:= h_1^{(ij)} z_i + h_2^{(ij)} z_j, \\
    \mathcal{R}_{ij} (z_i , z_j) &:= h_0^{(ij)} + h_3^{(ij)} z_i z_j \, ,
\end{align}
and rewrite Eq.~\eqref{app1:W_im} as
\begin{align}
    \mathcal{I}_{ij}(\bar{z}_i, \bar{z}_j)\, \mathcal{R}_{jk}(z_j, \bar{z}_k)\, \mathcal{R}_{ik}(z_i, z_k)
    + \mathcal{I}_{jk}(z_j, \bar{z}_k)\, \mathcal{R}_{ij}(\bar{z}_i, \bar{z}_j)\, \mathcal{R}_{ik}(z_i, z_k) &+ \mathcal{I}_{ik}(z_i, z_k)\, \mathcal{R}_{ij}(\bar{z}_i, \bar{z}_j)\, \mathcal{R}_{jk}(z_j, \bar{z}_k) \notag \\
    &=\mathcal{I}_{ij}(\bar{z}_i, \bar{z}_j)\, \mathcal{I}_{jk}(z_j, \bar{z}_k)\, \mathcal{I}_{ik}(z_i, z_k)\, ,
    \label{app1:W_im_simp}
\end{align}
and the inequality in Eq.~\eqref{app1:W_re} as
\begin{align}
    \mathcal{R}_{ij}(\bar{z}_i, \bar{z}_j)\, \mathcal{R}_{jk}(z_j, &\bar{z}_k)\, \mathcal{R}_{ik}(z_j, z_k) - \mathcal{I}_{ij}(\bar{z}_i, \bar{z}_j)\, \mathcal{I}_{jk}(z_j, \bar{z}_k)\, \mathcal{R}_{ik}(z_j, z_k) \notag \\
    &- \mathcal{I}_{ij}(\bar{z}_i, \bar{z}_j)\, \mathcal{I}_{ik}(z_i, z_k)\, \mathcal{R}_{jk}(z_j, \bar{z}_k) - \mathcal{I}_{jk}(z_j, \bar{z}_k)\, \mathcal{I}_{ik}(z_i, z_k)\, \mathcal{R}_{ij}(\bar{z}_i, \bar{z}_j) \leq 0\, . \label{app1:W_re_simp}
\end{align}

\subsection{Case 1: all single \texorpdfstring{$Z$}{Z} terms have equal coefficients}
Note that if all the single $Z$ coefficient values are equal, i.e. $h_1 = h_2$ for all diagonal operators, then the right hand side of the equation above will be zero, for any choice of $z_i$, $z_j$, and $z_k$, regardless of the specific values for each diagonal operator. This can be explicitly shown by noting \textbf{Observation 2} in Appendix~\ref{proof:modHheis}. In short, if all single-$Z$ weights ($h_1$ and $h_2$) are exactly equal, one can guarantee that at least one $\mathcal{I}$ will vanish for any choice of $z_i$, $z_j$, and $z_k$. This is because there is at least one mis-alignment of spins in the spin configurations of the following three terms that show up on the right hand side of Eq.~\eqref{app1:W_im_simp} 
\begin{align*}
    \mathcal{I}_{ij}(\bar{z}_i, \bar{z}_j)\, ,\\
    \mathcal{I}_{jk}(z_j, \bar{z}_k)\, ,\\
    \mathcal{I}_{ik}(z_i, z_k) \, ,
\end{align*}
enforcing at least one of them to vanish.

Let us pick, without loss of generality, $\mathcal{I}_{ij}(\bar{z}_i, \bar{z}_j) = 0 $ (two instances: $z_i = \bar{z}_j = 1$ or $z_i = \bar{z}_j = -1$). In this case, the equation above, reduces to
\begin{align}
    \mathcal{R}_{ij}(\bar{z}_i, \bar{z}_j)\Big(\mathcal{I}_{jk}(z_j, \bar{z}_k)\, \, \mathcal{R}_{ik}(z_i, z_k) &+ \mathcal{I}_{ik}(z_i, z_k)\, \mathcal{R}_{jk}(z_j, \bar{z}_k) \Big) =0\, .
    \label{app1:W_im_simp2}
\end{align}

Furthermore, the inequality in~\eqref{app1:W_re_simp} becomes
\begin{align}
    \mathcal{R}_{ij}(\bar{z}_i, \bar{z}_j)\, \Big(\mathcal{R}_{jk}(z_j, \bar{z}_k)\, \mathcal{R}_{ik}(z_j, z_k)
    - \mathcal{I}_{jk}(z_j, \bar{z}_k)\, \mathcal{I}_{ik}(z_i, z_k)\, \Big)
    \leq 0\, .
    \label{app1:W_re_simp2}
\end{align}
The two conditions obtained above are consistent as the case where $\mathcal{R}_{ij}(\bar{z}_i, \bar{z}_j) = 0$ will obviously satisfy both equalities, and the other case implies (if $\mathcal{I}_{jk}(z_j, \bar{z}_k) \neq 0$)
\begin{align}
    \mathcal{R}_{ik}(z_i, z_k) = - \frac{\mathcal{I}_{ik}(z_i, z_k)\, \mathcal{R}_{jk}(z_j, \bar{z}_k)}{\mathcal{I}_{jk}(z_j, \bar{z}_k)} \, . 
\end{align}
Substituting this expression in the inequality in~\eqref{app1:W_re_simp2} will result in
\begin{align}
    \label{app:ineq_RII}
    -\frac{\mathcal{R}_{ij}(\bar{z}_i, \bar{z}_j)\, \mathcal{I}_{ik}(z_i, z_k)}{\mathcal{I}_{jk}(z_j, \bar{z}_k)}\, \Big((\mathcal{R}_{jk}(z_j, \bar{z}_k))^2\, 
    + (\mathcal{I}_{jk}(z_j, \bar{z}_k))^2\, \Big)
    \leq 0 \implies \mathcal{R}_{ij}(\bar{z}_i, \bar{z}_j)\frac{\mathcal{I}_{ik}(z_i, z_k)}{\mathcal{I}_{jk}(z_j, \bar{z}_k)} \geq 0 \, .
\end{align}
Note that this final condition must hold for $4$ different cases: $2$ for the two instances of $z_i = \bar{z}_j$ mentioned above and two other for different instances of $z_k$. This means that the ratio $\frac{\mathcal{I}_{ik}(z_i, z_k)}{\mathcal{I}_{jk}(z_j, \bar{z}_k)}$ can take on both positive and negative values. This is because even when the condition $h_1 = h_2$ holds for the two other diagonal terms on the triangle of geometrically frustrated lattice, there is an instance in which the two coefficients add up to a positive and another instance of $z_i , z_j$, and $z_k$ adding to a negative value, allowing for  $\frac{\mathcal{I}_{ik}(z_i, z_k)}{\mathcal{I}_{jk}(z_j, \bar{z}_k)}$ to be both positive and negative. 
Here we note that we arrive at a contradiction since if $\frac{\mathcal{I}_{ik}(z_i, z_k)}{\mathcal{I}_{jk}(z_j, \bar{z}_k)}$ could take on both positive and negative values, the only way to ensure the last inequality is to assure $\mathcal{R}_{ij}(\bar{z}_i, \bar{z}_j)=0$ when the ratio $\frac{\mathcal{I}_{ik}(z_i, z_k)}{\mathcal{I}_{jk}(z_j, \bar{z}_k)}$ is negative. Thus, any $2$-local Hamiltonian of the form discussed here that does not satisfy this condition would not be VGP, even if all the single-$Z$ terms have equal coefficients.

Thus, we have shown that if all the weights $h_1$ and $h_2$ are equal in magnitude, one can always check if the last inequality in Eq.~\eqref{app:ineq_RII} is satisfied for the given parameters of the diagonal matrices.

Below, we will show that if we have a diagonal term that has unequal single-$Z$ terms, we arrive at infeasible set of inequalities, and no choice of parameters can make the Hamiltonian VGP.

\subsection{Case 2: At least one diagonal term has unequal single \texorpdfstring{$Z$}{Z} terms}
In this case, once can find an instance for which no $\mathcal{I}$ (single $Z$) terms are zero. This implies that we can divide both sides of Eq.~\eqref{app1:W_re_simp} by the right hand side and obtain
\begin{align}
    &\alpha_{jk}(z_j, \bar{z}_k)\, \alpha_{ik}(z_i, z_k)
    + \alpha_{ij}(\bar{z}_i, \bar{z}_j)\, \alpha_{ik}(z_i, z_k)
    + \alpha_{ij}(\bar{z}_i, \bar{z}_j)\, \alpha_{jk}(z_j, \bar{z}_k)
    = 1\, , \label{app1:alphas_eq}
\end{align}
where we have defined the variables $\alpha_{ij}(x, y) := \frac{\mathcal{R}_{ij}(x, y)}{\mathcal{I}_{ij}(x, y)}$. Then the inequality in~\eqref{app1:W_re_simp} becomes
\begin{align}
    \alpha_{ij}(\bar{z}_i, \bar{z}_j)\, \alpha_{jk}(z_j, \bar{z}_k)\, \alpha_{ik}(z_i, z_k)
    \leq \alpha_{ik}(z_i, z_k)
     + \alpha_{jk}(z_j, \bar{z}_k)
     + \alpha_{ij}(\bar{z}_i, \bar{z}_j)
    \,. \label{app1:alphas_ineq}
\end{align}
As mentioned before, since $\mathcal{I}$ for each diagonal term can take on both positive and negative values, the conditions above must hold for all of them, for a Hamiltonian to be VGP. But if one considers the case where two of the $\alpha$ s namely $\alpha_{ij}$ and $\alpha_{ik}$ are negative and $\alpha_{jk} > 0$, \footnote{that such a condition is must, unless both $h_0$($0$-body term) and $h_3$($2$-body term) are zero, which quite trivially reduces to a contradiction since left hand side of Eq.~\eqref{app1:alphas_ineq} becomes zero, while the right hand size is negative, violating the constraint.} one can divide both sides of the inequality above by the left hand side and obtain the following inequality
\begin{align}
    1
    \leq \frac{1}{\alpha_{jk}(z_j, \bar{z}_k)\, \alpha_{ik}(z_i, z_k)}
    + \frac{1}{\alpha_{ij}(\bar{z}_i, \bar{z}_j)\, \alpha_{ik}(z_i, z_k)}
    + \frac{1}{\alpha_{ij}(\bar{z}_i, \bar{z}_j)\, \alpha_{jk}(z_j, \bar{z}_k)}
    \,. \label{app1:alphas_ineq2}
\end{align}
Since two of the terms in the inequality above and Eq.~\eqref{app1:alphas_eq} will be negative, the two conditions cannot be satisfied. This is because there are no two negative and a positive number that sum to $1$, while the sum of their inverses are greater than $1$.

We conclude this proof by highlighting that this analysis revealed that in order for length-$3$ cycles to satisfy the VGP conditions (i.e. have non-positive real part and vanishing complex part), we must require that at least two of the imaginary parts for the diagonals cancel for all possible spin values.
This was only possible if all diagonal terms have equal single-$Z$ coefficients. \qed

\section{Proof of \texorpdfstring{$\tilde{H}_{\text{Heis.}}$}{H heis} is VGP}
\label{proof:modHheis}
We start by making a couple of key \emph{observations}, stated below:

\textbf{Observation 1.} First, for any configuration of spins on a geometrically frustrated lattice, the number of mis-aligned neighboring spins on any closed loop is \textbf{even}. This is straightforward to realize if one thinks of a closed loop as a 1D spin configuration (with periodic conditions): if all spins align, we have zero mis-alignments, and if we flip one spin, we automatically create two mis-alignments, since the degree of the connections is $two$. Thus, for if we check for alignment pair-wise on two neighboring spins, we would observe \emph{even} number of \emph{mis-alignments}.

\textbf{Observation 2.} Contrary to observation number 1, if we were to check for alignment pair by pair, and then flip the spins that were checked for alignment (which is what $X_i X_j$ do on a given pair of neighboring spins), we should observe that this flips the role of alignment and mis-alignments. In other words, verifying pair-wise alignment on neighboring spins and then flipping the pair of spins will always produce an \emph{even} number of \emph{alignments}, with equally numerous up and down alignments. This is a key observation, and the figure below will help visualize this fact.

Having noted some key observations above, we start by reminding the reader again, that application of a $D_{ij} X_i X_j$ for neighboring spins $i$ and $j$ with spins $z_i = \pm 1$ and $z_j = \pm$ does the following
\begin{align}
    \bra{\bar{z}_i \bar{z}_j} D_{ij} X_i X_j \ket{z_i z_j} = \bra{\bar{z}_i \bar{z}_j}D_{ij} \ket{\bar{z}_i \bar{z}_j} = h^{(ij)}_0 (\bar{z}_i \bar{z}_j - 1) + i h^{(ij)}_1 (\bar{z}_i + \bar{z}_j) \, .
\end{align}
We, hereby, note that if the two neighboring spins mis-align after application of $X_i X_j$, we simply get a factor of $-2h^{(ij)}_0$. The complex phase of this weight is thus $\phi_{\text{mis}} = \frac{\pi}{2} (\text{sgn}(h^{(ij)}_0) + 1)$, i.e. $\phi_{\text{mis.}} = \pi$ if $h^{(ij)}_0 > 0$, and $0$ if $h^{(ij)}_0 < 0$.

However, if the two neighboring spins align, the diagonal factor will simply by $\pm 2 i \, h^{(ij)}_1 $, for the $\pm$ for the two cases where the two spins could be aligned up or down. In this instance, an edge with aligned spins, introduces a completely imaginary factor, with a phase of $\phi_{\text{al.}} = \pm \frac{\pi}{2}$. 

We noted earlier, that checking alignments after flipping will always produce even number of \emph{alignments}, so there will always be even number of $\pm \pi/2$ phases added by the diagonals of a given fundamental cycle. 

For fundamental cycles of \emph{odd} length, we emphasize again that conditions given by Eqs. \eqref{eq:VGP_angle} and \eqref{eq:fc_weight} imply that the phase of the cycle must be $\pi \mod 2\pi$. But, using \textbf{Observation 2} there will always be even number of $\pm \pi/2$ phases in any fundamental cycle, that means that the contribution to the phase from the \emph{alignments} will always be either $0$ or $\pi$ ($\mod 2\pi$). However, if we enforce that $h^{(ij)}_1$ have the same sign for all $D_{ij}$, then the phase contribution from the aligned diagonal weights will simply be zero as any initial phase accrued by an alignment will be canceled by the phase of the diagonal from the next neighboring spins, which are also aligned, but in the opposite direction. 

\begin{figure}[H]
    \centering
    % First image (5cm x 5cm)
    \begin{minipage}[t]{0.33\textwidth}
        \centering
        \includegraphics[width=4.5cm,height=4.5cm]{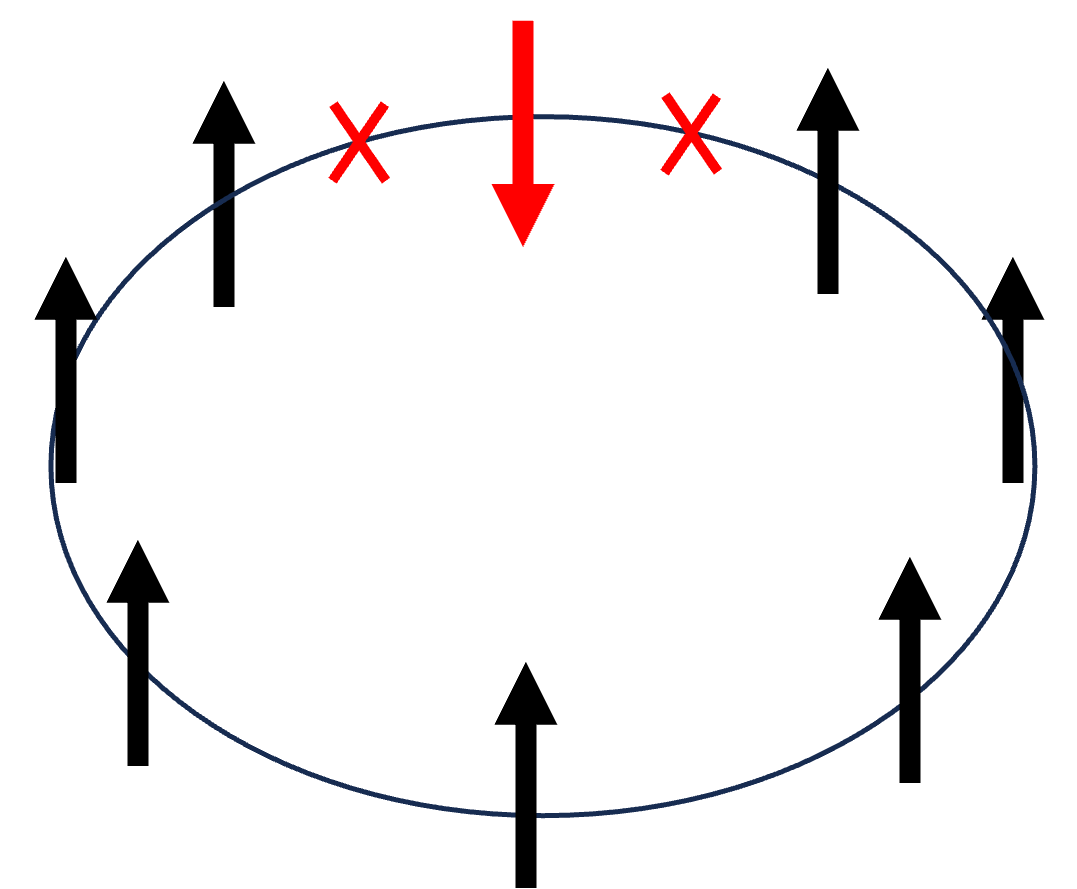}
    \end{minipage}
    \hspace{1cm} % horizontal space between images
    % Second image (10cm width x 6cm height)
    \begin{minipage}[t]{0.6\textwidth}
        \centering
        \includegraphics[width=10cm,height=6cm]{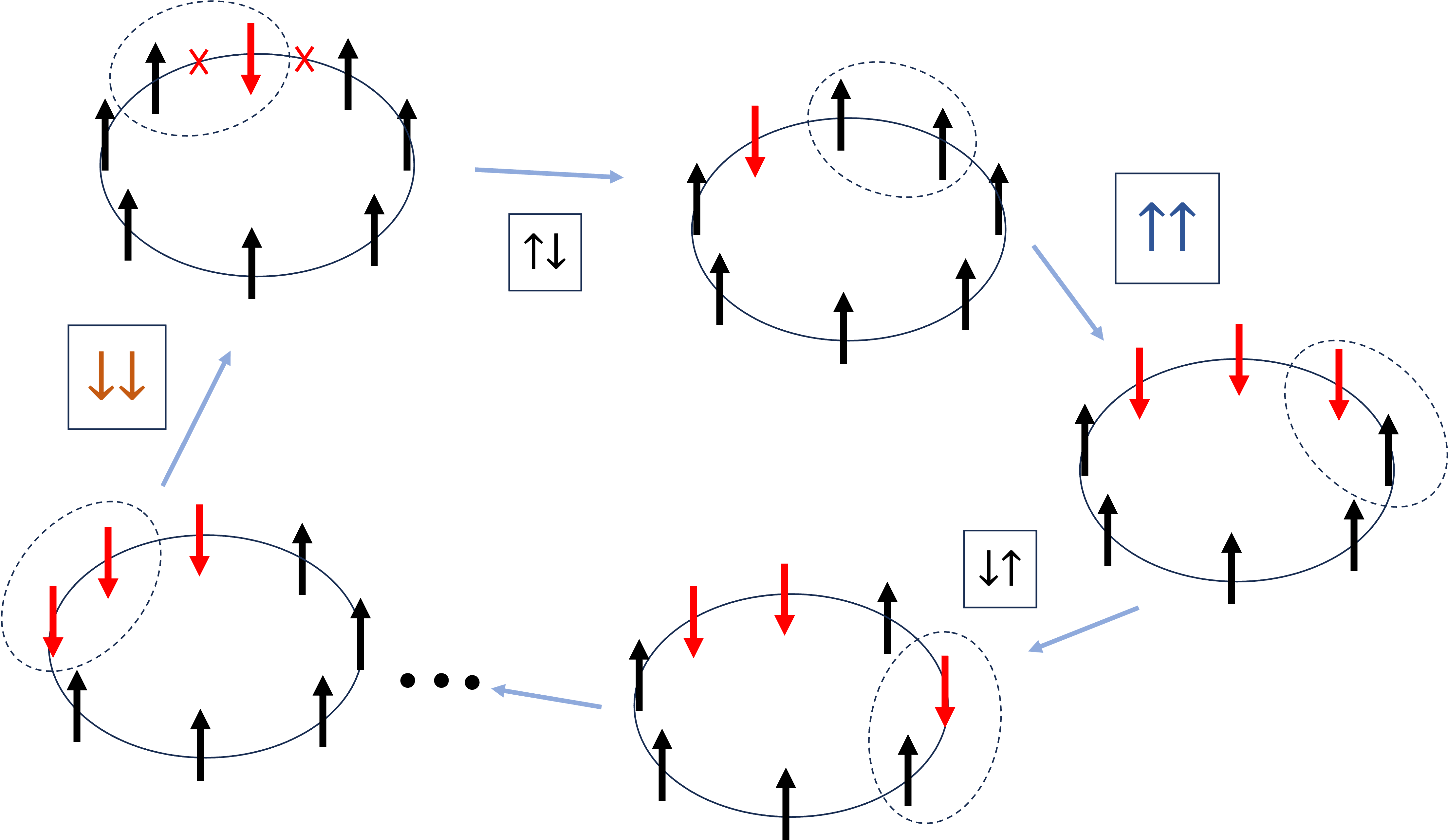}
    \end{minipage}
    \caption{Figure on the left shows an instance of a spin arrangement with even neighboring \emph{misalignments}. On the right, we show how neighboring pair-wise check for alignments, followed by a pair flip, produces even number of opposing \emph{alignments}.}
    \label{fig:side_by_side}
\end{figure}

Finally, one can verify that the fundamental cycles through repeated $D_{ij} X_iX_j$ and $D_{ik} X_iX_k$ (with even powers) are non-negative by checking condition \eqref{eq:D_argcheck}, using the results of Theorem \ref{thm:DX_VGP}. \qed

\section{Heuristical arguments for Conjecture~\ref{conj:2local_periodic_VGP}}
\label{app:2body_periodic_VGP}

If the VGP condition is satisfied on a unit block, then for any spin arrangement, the phases of the different diagonals upon pair-flip and alignment check, as described in Appendix \ref{proof:modHheis}, for a length-$3$ and length-$4$ cycle, will cancel. 
The only concern will be that for larger lattices, one can create a fundamental cycle of larger length than $4$, via a closed fundamental walk across different parts of the lattice. However, since the diagonal operators are the same ones as the unit block, across the entire lattice, any phases accrued will be similar to the ones produced on the unit block lattice. However, we have not formalized a method to prove this rigorously, and so far we have laid this out as a conjecture with numerical support.

\section{Proof of Lemma \ref{lem:key}}
\label{app:lemkey_proof}
First, set $g_X(U) := \mathcal{M}(UXU^\dagger)$. Suppose $U,V \in U(\mathds{D})$ are arbitrary.  Letting $\| \cdot \|_F$ denote the Frobenius norm, we have
\begin{align*}
    \| g_X(V) - g_X(U) \|_F^2 
    &= \sum_{i \neq j} \left|\, |(VXV^\dagger)_{ij}| - |(UXU^\dagger)_{ij}| \,\right|^2 \\&\,\,\,\,\,\,\, + \sum_i \left| (UXU^\dagger)_{ii} - (VXV^\dagger)_{ii} \right|^2 \\
    &\leq \sum_{i,j} \left| (VXV^\dagger)_{ij} - (UXU^\dagger)_{ij} \right|^2 \\
    &= \| VXV^\dagger - UXU^\dagger \|_F^2 \\
    &= \| (V - U)XU^\dagger + VX(V^\dagger - U^\dagger) \|_F^2 \\
    &\leq (\|V - U\|_F \|X\|_{\text{op}} +\\
    & \quad\,\,\|X\|_{\text{op}} \|U - V\|_F )^2 \\
    &= (2\|X\|_{\text{op}} \|U - V\|_F)^2.
\end{align*}
So, the map $g_X$ on $U(N)$ is Lipschitz with constant $L = 2\|H\|_{\text{op}}$ with respect to the Frobenius norm.
\\\\
Next, let $\lambda_k : \text{Herm}(N) \to \mathbb{R}$ denote the $k$th greatest eigenvalue of a given Hermitian matrix. Using this, we expand
\begin{align*}
    \mathcal{L}_X(U) = \ln \left[ \sum_{k=1}^\mathds{D} \exp[\lambda_k(g_X(U))] \right].
\end{align*}
Notice that the LogSumExp function $\text{LSE} : \mathbb{R}^d \to \mathbb{R}$ is $1$-Lipschitz with respect to the sup norm. We show this by letting $u,v \in \mathbb{R}^d$ and setting $w = u - v$. Then
\begin{align*}
    \text{LSE}(u) = \ln\left[ \sum_{i=1}^d e^{v_i + w_i} \right] &= \ln\left[ e^{\|w\|_\infty} \sum_{i=1}^d e^{v_i}e^{w_i - \|w\|_\infty} \right]\\
    &\leq \ln\left[ e^{\|w\|_\infty}\sum_{i=1}^d e^{v_i} \right]\\
    &= \text{LSE}(v) + \|w\|_\infty.
\end{align*}
Here, $\| \cdot \|_{\infty}$ represents the sup norm.
Exchanging the roles of $u$ and $v$ gives the reverse bound, so 
\begin{align*}
    |\text{LSE}(u) - \text{LSE}(v)| \leq \|u-v\|_{\infty}.
\end{align*}
Using Weyl's inequality, this implies
\begin{align*}
    |\mathcal{L}_X(U) - \mathcal{L}_X(V)| &\leq \max_k | \lambda_k(g_X(U)) - \lambda_k(g_X(V)) |\\
    &\leq \| g_X(V) - g_X(U) \|_{\text{op}}\\
    &\leq \| g_X(V) - g_X(U) \|_{\text{F}}\\
    &\leq 2 \|X\|_{\text{op}}\|U - V\|_F
\end{align*}
as desired. \qed

\section{Proof of Theorem \ref{thm:noise_formula}}
\label{app:noiseformula_proof}

    Diagonalize $H = V \Lambda V\dagger$ so that $\Lambda = \operatorname{diag}(\lambda_k(H))$. Write
    \begin{align*}
        (UHU^\dagger)_{ij} = \sum_{k} [\lambda_k(H)] \cdot (UV)_{ik}\overline{(UV)_{jk}}.
    \end{align*}
    Since $V$ is a fixed unitary and $U$ is Haar-random, $UV$ is also Haar-random.
    Take $\mathds{D}$ to be sufficiently large so that the entries of a Haar-random unitary behave as independent, zero-mean complex Gaussians with variance $O(1/\mathds{D})$ \footnote{See the main result of ~\cite{Mastro_2007}, which demonstrates that, asymptotically in the size of a matrix conjugated by a Haar-random unitary, the matrix elements will behave as complex Gaussians}. Then $(UHU^\dagger)_{ij}$ is Gaussian with first two moments $\E[(UHU^\dagger)_{ij}] = 0$ and
    \begin{align*}
        \sigma_{\text{off}}^2 = \E[|(UHU^\dagger)_{ij}|^2] &= \frac{1}{2}\sum_{k} |\lambda_k(UHU^\dagger)|^2 \cdot \E\left[|(UV)_{ik}\overline{(UV)_{jk}}|\right]^2\\ &= \frac{1}{2\mathds{D}^2}\sum_{k} |\lambda_k(UHU^\dagger)|^2\\ &= \frac{\| H \|_F^2}{2\mathds{D}^2}
    \end{align*}
    respectively. Taking the modulus of the Gaussian off-diagonal entries, we immediately get that $H_{ij}' = |(UHU^\dagger)_{ij}|$ follows a Rayleigh distribution with moments $\mu = \E[H_{ij}'] = \sigma_{\text{off}}\sqrt{\frac{\pi}{2}}$ and $\sigma^2 = \text{Var}(H_{ij}') = \left( 2 
- \frac{\pi}{2}\right) \sigma_{\text{off}}^2$~\cite{Siegrist_2019}.
    Likewise, the diagonal entries of $H'$ are approximately 
    identically and  independently distributed (i.i.d.) real Gaussian.

Define the centered and re-scaled matrix
\begin{align*}\label{eq:defW}
W &:=\frac{1}{\sqrt{\mathds{D}}\,\sigma}\bigl(H' - \mu(J-I)\bigr)
\end{align*}
where $J_{ij}=1$ for all $i,j$. Then the entries of $W$ are mean-zero with variance decaying as $O(1/\mathds{D})$, hence $W$ belongs to the Wigner ensemble of random matrices by definition (see Definition 1 of~\cite{Erdos_2009}). That is, its empirical spectral distribution $F^W$ converges, with high probability, to the semicircle distribution~\cite{Gotze_2011}. Next, if we re-arrange
\begin{align*}
    H'=\underbrace{\sqrt{\mathds{D}}\,\sigma\,W-\mu I}_{=:W_{0}}
+\underbrace{\mu J}_{=:P_{1}}
\end{align*}
then we see $W_0$ belongs to the Wigner ensemble (since $-\mu I$ only translates eigenvalues by $O(\mathds{D}^{-1/2})$) and that $P_1$ is a rank-one perturbation. 
By Lemma A.43 of~\cite{Bai_2010}, we see that the Lévy distance $d_L( \cdot\,, \cdot)$ between the empirical spectral distributions $F^{H'}$ and $F^{W_0}$ converges to $0$ as $\mathds{D} \to \infty$:
\begin{equation}
    \label{eq:esd_weak_conv}
    d_L(F^{H'}, F^{W_0}) \leq \frac{\text{rank}(P_1)}{\mathds{D}} = \frac{1}{\mathds{D}}
\end{equation}
where the Lévy distance is a metric on cumulative distribution functions of one-dimensional random variables defined as follows:
Let $F$ and $G$ be cumulative distribution functions (CDFs) on $\mathbb{R}$. 
If $F$ and $G$ are two cumulative distribution functions, the \emph{Lévy distance} between $F$ and $G$ is defined as
\[
L(F,G) \;=\; \inf \Bigl\{ \varepsilon > 0 \;:\; 
F(x - \varepsilon) - \varepsilon \;\leq\; G(x) \;\leq\; F(x + \varepsilon) + \varepsilon 
\quad \text{for all } x \in \mathbb{R} \Bigr\}.
\]
Eq.\eqref{eq:esd_weak_conv} implies that the spectrum $F^{H'}$ weakly converges to $F^{W_0}$ (i.e. it converges to the semicircle law, at least in the bulk of the spectrum). However, an eigenvalue ordering is induced as follows (for details refer to Corollary 4.3.9 of~\cite{Horn_2012})
\begin{align*}
    \lambda_{k+1}(W_0) \leq \lambda_k(H') \leq \lambda_k(W_0) \qquad \text{for $k=\mathds{D},\ldots,2$ }
\end{align*}
implying that at most one eigenvalue of $H'$ may deviate from the spectrum of $W_0$, while the others remain in the semicircle $F^{W_0}$. That is, the leading eigenvalue of $H'$ does not necessarily have to sit between $\lambda_{2}(W_0)$ and $\lambda_{1}(W_0)$.

To determine how far $\lambda_1(H')$ deviates from the bulk, we use the Baik-Ben Arous-Péché (BBP) formula for rank-one additive perturbations of random matrices. $W_0$ is a random matrix and $P_1$ is a rank-one additive perturbation, so we check that the rank–one BBP criterion applies.
The rank–one BBP criterion asserts that whenever $|\mu \mathds{D}|> \sigma\sqrt{\mathds{D}}$ and $\lambda_1(H')$ has a deterministic location (with some small admitted error)\cite{Baik_2005}. Since $\mu \mathds{D}/\sigma\sqrt{\mathds{D}} = \mu\sqrt \mathds{D}/\sigma \to\infty$, the BBP criterion is indeed met and $\lambda_{1}(H')$ satisfies
\begin{equation}
\label{eq:bbp}
\lambda_{1}(H') = \lambda_{\star} + O(\mathds{D}^{-1}),
\end{equation}
where $\lambda_{\star} := -\mu +\mu \mathds{D}+\frac{\sigma^{2}}{\mu}$\cite{Baik_2005}. Since all other eigenvalues almost certainly follow a predictable distribution (i.e. the semicircle law), we can can now write an explicit formula for $\E[\mathcal{F}_H]$. Decompose
\begin{align*}
    \mathcal{F}_H = \underbrace{e^{\lambda_{1}(H')}}_{=:\, \mathcal{F}^{\,\text{out}}_H} + \underbrace{\sum_{k=2}^\mathds{D} e^{\lambda_k(H')}}_{=:\, \mathcal{F}^{{\,\text{bulk}}}_H}.
\end{align*}
In the average case, the empirical spectral measure of $\lambda_2,\ldots,\lambda_k$ is semicircular with density
\begin{align}
    \label{eq:semicircular_density}
    \rho_{sc}(x) = \frac{1}{2\pi\sigma^2\mathds{D}}\sqrt{4\sigma^2\mathds{D}-x^2}  \bf{1} _{\text{$|x|\leq 2\sigma\sqrt{\mathds{D}}$}}.
\end{align}
Here, $\bf{1} _{\text{$|x|\leq 2\sigma\sqrt{\mathds{D}}$}}$ $\in \{0,1\}$ is an indicator function.
Since $U$ was Haar-random, we are in the average case--- so we obtain $\E[\mathcal{F}^{\,\text{bulk}}]$ by integrating ~\eqref{eq:semicircular_density}:
\begin{align*}
    \E[\mathcal{F}^{\,\text{bulk}}] &= (\mathds{D}-1) \int_{-2\sigma\sqrt{\mathds{D}}}^{2\sigma\sqrt{\mathds{D}}} e^x \rho_{sc}(x) \, dx\\ &= (\mathds{D}-1)\frac{I_1(2\sigma\sqrt{\mathds{D}})}{\sigma\sqrt{\mathds{D}}}.
\end{align*}
Along with~\eqref{eq:bbp}, we obtain
\begin{align*}
    \E[\mathcal{F}_H] = e^{\lambda_\star} + (\mathds{D}-1)\frac{I_1(2\sigma\sqrt{\mathds{D}})}{\sigma\sqrt{\mathds{D}}} + \epsilon(H)
\end{align*}
with some error $\epsilon(H)$ satisfying
\begin{align*}
    \epsilon(H) = \underbrace{\left( e^{\lambda_1(H')} - e^{\lambda_\star} \right)}_{:= \epsilon_{\text{out}}} + \underbrace{\frac{1}{\mathds{D}}\left( \sum^\mathds{D}_{k=1} e^{\lambda_k(H')} - e^{\lambda_k(W_0)}\right)}_{:=\epsilon_{\text{bulk}}}.
\end{align*}
Using the upper bound on the Lévy distance in \eqref{eq:esd_weak_conv} together with the
Lipschitz bound on $e^x$ defined over $[-2\sigma\sqrt{\mathds{D}}, 2\sigma\sqrt{\mathds{D}}]$,
\begin{align*}
    \epsilon_{\text{bulk}}
   \leq e^{2\sigma\sqrt{\mathds{D}}}\,
        d_{L}(F^{H'}, F^{W_0}) \leq \frac{e^{2\sigma\sqrt \mathds{D}}}{\mathds{D}}.
\end{align*}
For $\epsilon_\text{out}$, we expand the exponential at the deterministic BBP value
$\lambda_\star$:
\begin{align*}
|\epsilon_{\mathrm{out}}|
   = e^{\lambda_\star}
     \bigl|e^{\lambda_1(H')-\lambda_\star}-1\bigr|
   \leq e^{\lambda_\star}
        |\lambda_1(H')-\lambda_\star|
   \leq e^{\lambda_\star}\frac{C}{\mathds{D}},
\end{align*}
where $C$ is a constant from \eqref{eq:bbp}.
Combining these, we obtain the error bound
\begin{align*}
|\epsilon(H)|
   \leq \frac{e^{2\sigma\sqrt \mathds{D}}+C\,e^{\lambda_\star}}{\mathds{D}} \qquad \implies \qquad\frac{|\epsilon(H)|}{e^{\lambda_\star}}
       =O\bigl(\mathds{D}^{-1}\bigr),
\end{align*}
completing the proof. \qed

\section{Proof of Theorem \ref{thm:exp_f_rand_cliff}}
\label{app:randcliff_proof}
Define the multiset
    \begin{align*}
        S := \{s_1, \ldots, s_M\} \subset \Z_2^N
    \end{align*}
    so that the $k$th entry of each $s_j \in S$ is $1$ if and only if the permutation operator $P_j$ acts non-trivially on spin $k$ (here, we are recording the Pauli $X$-supports of each $\tilde{P}_j$).

    Decompose $S$ into the disjoint union $S = S_1 \,\dot\cup\, S_2$ where $S_1$ contains one representative for each $M - L$ \emph{unique non-zero} member of $S$ and let $S_2$ collect the remaining $L$. For each $P \in S_1$, let $s(P)$ denote its index in $S$. Then define the operator
    \begin{align}
        A := \sum_{k = s(P) \, : \, P \in S_1} |c_{k}| P_{k}.
    \end{align}
    Recall $\mathcal{M}$ is the map that takes the entry-wise absolute value of the off-diagonal entries of an input matrix and negates the sign of each diagonal entry. Then observe that $A = \mathcal{M}(A_{\text{signed}})$ where $A_{\text{signed}} = \sum_{k = s(P) \, : \, P \in S_1} c_{k} \tilde{P}_{k}$. This follows from two key facts:
    \begin{enumerate}
        \item[(i)] $A_\text{signed}$ is a linear combination of Paulis $\tilde{P}_k = D_kP_k$ where $P_k$ is a permutation operator with distinct and non-trivial Pauli $X$-support. That is, no operator in $S_1$ has non-zero diagonal entries and no two permutation operators in $S_1$ overlap on non-zero matrix entries.
        \item[(ii)] $D_k$ is diagonal with entries in $\{\pm 1,\pm i\}$. Hence taking the norm of off-diagonal entries maps $c_kD_kP_k \mapsto |c_k|P_k$.
    \end{enumerate}
    Therefore, studying $\tr(e^A)$ will allow us to study $\mathcal{F}_H(\mathds{1})$ when $|S_2| = L = 0$.
    In order to study the action of $A$, we identify each $v \in \mathbb{C}^{2^N}$ with a linear function $f_v$ defined by $f_v(z) = \langle v | z \rangle$ for each computational basis state $\ket{z}$.
    We may express the action of $A$ in terms of any $f_v$:
    \begin{align}
        f_v(A\ket{z}) = \sum_{k = s(P) \, : \, P \in S_1} |c_k| f_v(z + s_k) = \sum_{k = s(P) \, : \, P \in S_1} |c_k| \bra{v}P_k\ket{z} =\bra{v} A \ket{z}
    \end{align}
    because each permutation $P_k$ translates the computational basis as $P_k\ket{z} = \ket{z+s_k}$. We use this alternative expression of $A$ to compute the eigenvalues of $A$ via eigenfunction decomposition. This will allow us to take the trace:
    \begin{align}
        \tr(e^A) := \sum_{z} \bra{z} e^A \ket{z} = \sum_z f_{z}(e^A \ket{z}) \, .
    \end{align}
    The eigenfunctions of $A$ are given by $\chi_z(x) := (-1)^{z\cdot x}$, for each $z \in \mathbb{Z}_2^N$. To see this, note
     \begin{align}
        \chi_z( A \, x) &= \sum_{j=1}^M |c_j|\chi_z(x+s_j)\notag \\
        &= \sum_{j=1}^M |c_j| (-1)^{z\cdot(x+s_j)}\notag \\
        &= \underbrace{\left(\sum_{j=1}^M |c_j| (-1)^{z\cdot s_j}\right)}_{:= \lambda_z} \chi_z(x).\notag
    \end{align}
\noindent
Let us, now, define
\begin{align*}
\chi^{(k)}_z:=(-1)^{z\cdot s_k}\in\{\pm1\},
\end{align*}
where $s_k \in S$ and $z$ is fixed. We establish the following identities for $z \neq \vec{0}$ on the expectation of moments of $\chi^{(k)}_z$, noting that the case for $z = \vec{0}$ is quite trivial as explained later,
\begin{align*}
\mathbb{E}_{s_k}[\chi^{(k)}_z]
  &= \frac{1}{2^{N}-1}\sum_{s\in\mathbb{Z}_2^{N}\setminus\{0\}}(-1)^{z\cdot s}
     = \frac{-1}{2^{N}-1},\notag \\
\mathrm{Var}(\chi^{(k)}_z)&=\mathbb{E}[(\chi^{(k)}_z)^{2}]-\left(\mathbb{E}[\chi^{(k)}_z]\right)^{2}
     = 1-\frac{1}{(2^{N}-1)^{2}},\notag \\
\mathbb{E}_{s_k , s_{\ell}}[\chi^{(k)}_z\chi^{(\ell)}_z]
  &=\frac{1}{(2^{N}-1)(2^{N}-2)}
    \sum_{\substack{s\neq t\\s,t\neq 0}}(-1)^{z\cdot(s+t)}
    =\frac{-1}{2^{N}-1}\quad(k\neq \ell),\notag \\
\left|\mathbb{E}_{s_k , s_\ell , s_m}[\chi^{(k)}_z\chi^{(\ell)}_z\chi^{(m)}_z]\right|
  &\le\frac{1}{(2^{N}-1)(2^{N}-2)(2^{N}-3)}
        \sum_{\substack{s\neq t\neq r\\s,t,r\neq 0}}1
    =O\left(2^{-2N}\right)\quad(\text{$k,\ell,m$ all distinct}).
\end{align*}

We would like to compute the expectation value of the exponential of the eigenvalues of $A$. Using \(e^{aX}=\cosh a+X\sinh a\) for any \(X\) with \(X^{2}=1\), we have
\begin{align}
\chi_z\Big(e^{A} \ket{z}\Big)
=\prod_{ k = s(P) \, : \, P \in S_1}\bigl(\cosh(|c_k|)+(-1)^{z \cdot s_k}\sinh(|c_k|)\bigr).
\end{align}

Recall $L$ is the number of duplicates and non-zero binary strings we have in the multiset $S$ (that is, $|S_2| = L$). We condition on $L = 0$ and compute the conditional expectation $\E[\mathcal{F}_H \, \vert \, L = 0]$ by finding $\E[e^{\lambda_z} \, \vert \, L = 0]$ for each computational basis state $z$.
We can expand the product, and take the expectations term by term, keeping contributions up to second order, as we have shown above that third order corrections are small:
\begin{align*}
\mathbb{E} \bigl[e^{\lambda_z} \, \vert \, L = 0\bigr]
&=
\prod_{ k=1}^M\cosh(|c_k|)
\Bigg[
1
+\sum_{k= 1}^M\tanh(|c_k|)\,\mathbb{E}_{s_k}[\chi^{(k)}_z]\\
&\qquad\qquad\qquad
+\sum_{k < \ell}
\tanh(|c_k|)\tanh(|c_\ell|)\,\mathbb{E}_{s_k , s_\ell}[\chi^{(k)}_z\chi^{(\ell)}_z]
+O\!\Bigl(\tfrac{M^{3}}{2^{2N}}\Bigr)
\Bigg]\\
&=
\prod_{ k=1}^M\cosh(|c_k|)
\Bigg[
1
-\frac{\displaystyle\sum_{k=1}^M\tanh(|c_k|)}{2^{N}-1}
-\frac{\displaystyle\sum_{k < \ell}
\tanh(|c_k|)\tanh(|c_\ell|)}{2^{N}-1}
+O\!\Bigl(\tfrac{M^{3}}{2^{2N}}\Bigr)
\Bigg].
\end{align*}
\noindent
Using $0\le \tanh(|c_k|)\le 1$ and $\binom{|S_1|}{2}\le \frac{M^2}{2}$,
\begin{align*}
\left|
-\frac{\sum_{k=1}^M\tanh(|c_k|)}{2^{N}-1}
-\frac{\sum_{k < \ell}\tanh(|c_k|)\tanh(|c_\ell|)}{2^{N}-1}
\right|
\le
\frac{M+\binom{M}{2}}{2^{N}-1}
=O\!\bigl(M^{2}2^{-N}\bigr).
\end{align*}
Lastly, notice that the value of $\lambda_{\vec{0}}$ is deterministic as long as $L=0$:
\begin{align*}
    \lambda_{\vec{0}} = \sum_{k=1}^M |c_k| (-1)^{0 \cdot s_k} =: c_{\text{tot}}.
\end{align*}
Altogether,
\begin{align}
\mathbb{E} \left[\sum_{z\in\mathbb{Z}_2^{N}}e^{\lambda_z} \, \Bigg\vert \, L = 0  \right]
&=
e^{c_{\mathrm{tot}}}
+(\mathds{D}-1)
\prod_{k=1}^M\cosh(|c_k|)
\Bigg[
1
-\frac{\displaystyle\sum_{k=1}^M\tanh(|c_k|)}{\mathds{D}-1}
\quad-\frac{\displaystyle\sum_{k<\ell}
\tanh(|c_k|)\tanh(|c_\ell|)}{\mathds{D}-1}
+O\!\Bigl(\tfrac{M^{3}}{2^{2N}}\Bigr)
\Bigg]\notag
\\
\label{eq:exp_F_H_given_L_0}
&=
e^{c_{\mathrm{tot}}}
+(\mathds{D}-1)
\left[\prod_{k=1}^M\cosh(|c_k|)\right]
\left[1+O\!\bigl(M^{2}2^{-N}\bigr)\right].
\end{align}
\noindent
Finally, to determine the probability that $H$ is in the class of Hamiltonians with $L = 0$, we bound
\begin{align*}
    \Pr[L > 0] \leq \sum_{k \neq \ell} \Pr[s_k = s_\ell] + \sum_{k} \Pr[s_k = 0] \leq \binom{M}{2} \frac{1}{\mathds{D}-1} + \frac{M}{\mathds{D}} = O(M^22^{-N}).
\end{align*}
Then with probability $1 - O(M^22^{-N})$, a spin-1/2 Hamiltonian composed of $M =O(\mathrm{poly}(N))$ random Pauli operators belongs to a class of Hamiltonians $\mathcal{H}$ with expected value of $\mathcal{F}_H(\mathds{1}) \vert_\mathcal{H}$ given by Eq.~\eqref{eq:exp_F_H_given_L_0}. This completes the proof.  \qed
\end{document}